\definecolor{light}{gray}{.9}
\def\tpl{{| \mskip -1.5mu | \mskip -1.5mu |}}
\newcommand{\vertiii}[1]{{\tpl #1 \tpl}}
\theoremstyle{plain}
\newtheorem{theorem}{Theorem}[section]
\newtheorem{proposition}[theorem]{Proposition}
\newtheorem{lemma}[theorem]{Lemma}
\numberwithin{equation}{section}
\numberwithin{theorem}{section}
\theoremstyle{definition}
\theoremstyle{remark}
\newtheorem{remark}[theorem]{Remark}
\newcommand{\mc}[1]{{\mathcal #1}}
\newcommand{\ms}[1]{{\mathscr #1}}
\newcommand{\bb}[1]{{\mathbb #1}}
\DeclareMathOperator{\tr}{Tr}
\newcommand{\varch}{\mathop{\rm ch}\nolimits}
\newcommand{\Ad}{\mathop{\rm Ad}\nolimits}
\newcommand{\diam}{\mathop{\rm diam}\nolimits}
\newcommand{\gap}{\mathop{\rm gap}\nolimits}
\newcommand{\re}{\mathop{\rm Re}\nolimits}
\newcommand{\im}{\mathop{\rm Im}\nolimits}
\newcommand{\Ran}{\mathop{\rm Ran}\nolimits}
\newcommand{\Pfin}{\ms P}
\newcommand{\id}{{1 \mskip -5mu {\rm I}}}
\newcommand{\ind}{\mathbf{ 1}}
\renewcommand{\epsilon}{\varepsilon}
\renewcommand{\tilde}{\widetilde}
\begin{document}
%\nocite{*}
\title[Interacting quantum systems]{
Perturbative criteria for the ergodicity of interacting dissipative quantum lattice systems
}

\author [L.\ Bertini]{L. Bertini}
\address{Lorenzo Bertini 
\hfill\break \indent
Dipartimento di Matematica, Universit\`a di Roma La Sapienza,
\hfill\break \indent
P.le A. Moro 5, I-00185 Roma, Italy}
\email{bertini@mat.uniroma1.it}
\author [A.\ De Sole]{A. De Sole}
\address{Alberto De Sole
\hfill\break \indent
Dipartimento di Matematica, Universit\`a di Roma La Sapienza, \& INFN,
\hfill\break \indent
P.le A. Moro 5, I-00185 Roma, Italy}
\email{alberto.desole@uniroma1.it}
\author [G.\ Posta]{G. Posta}
\address{Gustavo Posta\hfill\break \indent
  Dipartimento di Matematica, Universit\`a di Roma La Sapienza,
  \hfill\break \indent
P.le A. Moro 5, I-00185 Roma, Italy}
\email{gustavo.posta@uniroma1.it}
\author [C.\ Presilla]{C. Presilla}
\address{Carlo Presilla\hfill\break \indent
  Dipartimento di Matematica, Universit\`a di Roma La Sapienza, \& INFN,
  \hfill\break \indent
P.le A. Moro 5, I-00185 Roma, Italy}
\email{carlo.presilla@uniroma1.it}

\noindent
\keywords{Quantum Markov semigroups, Lattice quantum systems, Fermionic systems.}

\subjclass[2020]
 {Primary 
  81S22, % Open systems, reduced dynamics, master equations, decoherence
  82C10 % Quantum dynamics and nonequilibrium statistical mechanics (general)
  Secondary
  % 47A35. %Ergodic theory of linear operators 
   81V74,%  Fermionic systems in quantum theory
  47B44}
 %(1973-now) Linear accretive operators, dissipative operators, etc.

\begin{abstract}
  We analyze a class of quantum Feller semigroups describing the
  evolution of interacting quantum lattice systems, specified either
  as generic qudits or as fermions.
  The corresponding generators, which include both conservative and dissipative evolutions, are
  given by the superposition of local generators in the Lindblad form.
  The associated infinite volume dynamics can be obtained as the strong limit of the finite volume dynamics.
  By regarding the interacting evolution as a perturbation of a non-interacting dissipative
  dynamics, we obtain a quantitative criterion that yields the uniqueness of the stationary state together with
  the exponential convergence of local observables.
  The analysis is based on suitable a priori bounds on the resolvent equation which yield
  quantitive estimates on the evolution of local observables.
\end{abstract}

\maketitle
\thispagestyle{empty}

\section{Introduction}
\label{s:1}

A quantum Markov semigroup $(\mathcal{P}_t)_{t\geq0}$ is a semigroup of completely positive operators on a $C^*$-algebra $\mc A$.
It has become the basic modelling tool to describe the non-unitary evolution of open quantum systems,
typical examples being interactions with thermal baths and measurement processes.
Its generator is commonly referred to as the Lindblad generator.
Both from a conceptual and an applied viewpoint, the ergodicity
properties of quantum semigroups are particularly relevant.
We refer to \cite{Al:Le,BP,Ev,Fa:Re,Fr},  for a general overview.

Under general conditions, finite dissipative quantum systems admit a unique stationary state $\pi$.
As in the case of classical Markov semigroups, a most relevant issue is to provide quantitative estimates on the speed of convergence to the stationary state.
More precisely, given an initial state $\mu$ one would like to deduce the
exponential convergence of its evolution $\mu\mathcal{P}_t$ to $\pi$;
a natural distance to quantify this convergence
is the trace distance, the non-commutative counterpart of the total
variation distance. By the quantum Pinsker inequality, see
e.g.\ \cite[Thm.11.9.5]{wilde}, the convergence in trace distance can be deduced from
the decay of the quantum relative entropy of
$\mu \mathcal{P}_t$ with respect to $\pi$.  For reversible semigroups, the latter issue has been recently pursued with different perspectives
\cite{Br:Ga:Ju,CM1,CM2,DR,Coreani_usciti}.  In particular, in \cite{CM1} the exponential
decay of entropy is deduced for both the Bose and Fermi
Ornstein-Uhlenbeck semigroups that describe the evolution
of non-interacting bosons and fermions.  As discussed in \cite{BDP}, the exponential convergence
in trace distance can also be deduced by spectral methods.

The present purpose is instead to investigate the ergodic properties of semigroups corresponding to infinite quantum lattice systems.
We refer to \cite{Bu,BHMKPSZ-2021,Cr:De:Sc,naaijkens,PEPS-2018} for some physically relevant examples.
According to the standard framework, a quantum lattice system is described by the so-called \emph{quasi-local} $C^*$-algebra $\mathcal{A}$ \cite{Br:Ro,Ru}.
We consider here both the cases in which $\mathcal{A}$ describes generic qudits and fermions.
% While the analysis of Heisenberg evolutions on $\mathcal{A}$ has been widely investigated, we are not aware of mathematical results concerning dissipative evolutions on $\mathcal{A}$.

A simple choice for the Lindblad generator is given by the superposition of local generators.
In other words we consider generators on $\mathcal{A}$ of the form
\begin{equation}\label{eq:prel}
  \mathcal{L}=\sum_{X\subset\subset\bb Z^d}L_X,
\end{equation}
where the sum is carried over the finite subsets of $\bb Z^d$ and the local Lindblad generator $L_X$ acts on $\mathcal{A}_X$, the subalgebra of the $X$-local operators.
While this locality assumption is taken for granted in the classical case, it can be questioned at the quantum level due to the intrinsic  non-locality of the theory.
However, a single site dissipation induces decoherence effects which support the locality assumption. 

In the simplest and most relevant case, the family $\{L_X\}_{X\subset\subset\bb Z^d}$ is translation covariant and has finite range.
For each $X\subset\subset\bb Z^d$ the $C^*$-algebra $\mathcal{A}_X$ is finite dimensional and $L_X$ can be prescribed according to the Lindblad-Gorini-Kossakowski-Sudarshan structure theorem \cite{Go:Ko:Su,Lindblad76}.
On the other hand, the right-hand side of \eqref{eq:prel} defines an unbounded operator on the $C^*$-algebra $\mathcal{A}$.
A preliminary issue is thus to show that $\mathcal{L}$ generates a semigroup on $\mathcal{A}$.
This is the dissipative counterpart to the existence of the Heisenberg flow for quantum lattice systems, see e.g.\  \cite[Thm.7.6.2]{Ru}, and it has been considered in \cite{NVZ2011}, where this semigroup is constructed as the strong limit of finite volumes semigroups.
We here pursue a somehow different approach.
We first define $\mathcal{L}$ on a suitable dense domain and then deduce sufficient conditions, holding in the translation covariant finite range case, to  apply the Lumer-Philips theorem.
We thus infer that the graph norm closure of $\mathcal{L}$ generates a strongly continuous semigroup $(\mathcal{P}_t)_{t\geq0}$.

We next turn to the discussion of the ergodic properties of  $(\mathcal{P}_t)_{t\geq0}$, the main point of the present analysis.
By a standard soft compactness argument, $(\mathcal{P}_t)_{t\geq0}$ has at least a stationary state.
On the other hand, for instance by considering classical stochastic Ising models, it is straightforward to exhibit examples for which the stationary state is not unique, see e.g. \cite{Li}.
If the interaction is small, we however expect uniqueness of the stationary state and exponential convergence of local observables.

We here deduce indeed perturbative criteria for the above conclusion.
We write $\mathcal{L}=\mathcal{L}_0+\mathcal{L}_1$ where $\mathcal{L}_0$ describes the evolution of non-interacting qudits, i.e.\ it has the form
\begin{equation*}
  \mathcal{L}_0
  =\sum_{x\in\bb Z^d}L_x^0,
\end{equation*}
for suitable translation covariant generators $L^0_x$ that act on $\mathcal{A}_{\{x\}}$.
The operator $\mathcal{L}_1$, that takes into account the interaction between the qudits, is then regarded as a perturbation.
As discussed in \cite{CM1,DR,Coreani_usciti}, a basic tool in the derivation of strong ergodic properties for the semigroup generated by $\mathcal{L}_0$ is the construction of operators $\{E_{x,h}\}$ on $\mathcal{A}$ satisfying the following intertwining relationship with $\mathcal{L}_0$
\begin{equation}\label{eq:itw}
  E_{x,h}\mathcal{L}_0-\mathcal{L}_0 E_{x,h}
  = -\lambda_h E_{x,h}
\end{equation}
for suitable $\lambda_h>0$.
When $\mathcal{A}$ is the fermionic $C^*$-algebra and $\mathcal{L}_0$ is the generator of the Fermi Ornstein-Uhlenbeck semigroup, the operators $\{E_{x,h}\}$ have been introduced in \cite{G}.
On the other hand, when $\mathcal{A}$ is a $C^*$-algebra describing generic qudits, we do not specify an explicit form for $L_x^0$ but we assume that it is self-adjoint with respect to the GNS inner product induced by its unique stationary state.
We then construct the operators $E_{x,h}$ from the spectral decomposition of $L_x^0$; accordingly $\{\lambda_h\}$ are the eigenvalues of $-L_x^0$.
To achieve the perturbative criterion for ergodicity of the semigroup generated by $\mathcal{L}=\mathcal{L}_0+\mathcal{L}_1$, we follow the argument for interacting classical lattice systems in \cite[Ch.I]{Li}.
The key step is the derivation of suitable a priori bounds on the resolvent equation
\begin{equation*}
  \lambda g= f+\mathcal{L}g,
  \qquad\qquad\lambda>0, \quad f,g\in\mathcal{A}.
\end{equation*}
By exploiting the intertwining relation \eqref{eq:itw}, we derive a quantitative bound on the locality of $g$ in terms of the locality of $f$.
Under a suitable smallness assumption  on the commutator $E_{x,h}\mathcal{L}_1-\mathcal{L}_1 E_{x,h}$, the a priori bound on the resolvent equation yields the uniqueness of the stationary state together with the exponential convergence of local observables.
We emphasize that the perturbative criteria here obtained rely neither on the explicit knowledge of stationary states nor on the self-adjointness of the generator $\mathcal{L}$ in \eqref{eq:prel}.

Whenever the perturbation criterion applies, we can deduce that for any state $\mu$ on $\mathcal{A}$ the sequence $\mu\mathcal{P}_t$ converges to the unique stationary state $\pi$ as $t\to+\infty$.
As in the classical case, it does not appear however possible to
obtain a quantitative bound uniform in $\mu$.
Indeed, this fails even for non-interacting systems.
On the other hand, if we restrict to translation covariant interactions and translation invariant state $\mu$, the above conclusion holds.
More precisely, we equip the set of translation invariant states on $\mathcal{A}$ with the specific quantum one-Wasserstein distance $w$ introduced in \cite{DD}.
We then show that $w(\mu\mathcal{P}_t,\pi)$ decays exponentially uniformly in $\mu$ whenever the perturbative criterion holds.
This statement appears to be novel even in the context of interacting classical lattice systems.

\bigskip
\noindent
{\bf Frequently used notation}
%\smallskip
%\noindent

\begin{longtable}{ll}
  $\Pfin$ &finite subsets of $\bb Z^d$\\
  $H$ &single site Hilbert space\\
  $\mathcal{H}$ &quantum lattice system Hilbert space\\
  $A$ &single site $C^*$-algebra\\
  $\mathcal{B}(\mathcal{H})$ &bounded operators on $\mathcal{H}$\\
  $\tau_x$ &translation operator \eqref{eq:tx}\\
  $\|\cdot\|$ &operator norm in $\mathcal{B}(\mathcal{H})$\\
  $E_{x,h}$ &off-diagonal spectral projection associated to $L_0$ \eqref{eq:otimes}\\
  $\delta_x$ &local seminorm \eqref{deltaf}\\
  $\vertiii{\cdot}$ &seminorm on $\mathcal{A}^0$ \eqref{3norm}\\
  $\mathcal{A}^0$ &strictly local algebra\\
  $\mathcal{A}^1$ &closure of $\mathcal{A}^0$ with respect to $\|\cdot\|+\vertiii{\cdot}$\\
  $\mathcal{A}$ &quasi-local $C^*$-algebra, closure of $\mathcal{A}^0$ with respect to $\|\cdot\|$\\
  $\ind$ &identity in $\mathcal{A}$\\
  $\rho$ &reference state on $A$\\ 
  $\mathcal{S}$ &set of states on $\mathcal{A}$\\
  $\mathcal{S}_\tau$ &set of stationary states on $\mathcal{A}$\\
  $w$ &specific quantum one-Wasserstein distance \eqref{W=}\\
  $\pi(f)$ &duality between $\mathcal{S}$ and $\mathcal{A}$\\
  $L_0$ &single site Lindblad generator \eqref{eq:L0}\\
  $\lambda_1$ &spectral gap of $L_0$\\
  $\mathcal{L}_0$ &lattice unperturbed Lindblad generator \eqref{mcL0}\\
  $\mathcal{L}$ &lattice full Lindblad generator \eqref{mcL0}\\
  $\mathcal{I}$ &index set for the local jump operators\\
  $\mathcal{I}_0$ &index set for the unperturbed local jump operators\\
  $\imath$ &inclusion map $\mathcal{I}_0\hookrightarrow\mathcal{I}$\\
  $\chi$ &support map $\mathcal{I}\to\Pfin$\\
  $k_\alpha$ &local Hamiltonian\\
  $\ell_\alpha$ &local jump operator\\
  $\ell_\alpha^0$ &unperturbed local jump operator\\
  $L^1_\alpha$ &local perturbation of the Lindblad generator \eqref{L1a}\\
  $(\mathcal{P}_t)_{t\geq0}$ &semigroup generated by $\mathcal{L}$\\
  $\mu\mapsto\mu\mathcal{P}_t$ &dual action of $\mathcal{P}_t$ on $\mathcal{S}$ 
\end{longtable}

%%%%%%%%%%%%%%%%%%%%%%%%%%%%%%%%%
\section{
Interacting qudits: results
} 
\label{s:2}

Given a Banach space $\mathcal{X}$, we denote by $\|\cdot\|_\mathcal{X}$ the norm in $\mathcal{X}$ and by $\mc X'$ the dual of $\mc X$.
The identity operator on $\mc X$ is denoted by $\id_\mathcal{X}$ and the operator norm on the set of bounded linear operators on $\mathcal{X}$ by $\|\cdot\|_{\mathcal{X}\to\mathcal{X}}$.
According to the terminology in \cite{Al:Le,gustavo,Fa,Ch:Fa}, a quantum Markov semigroup is a $\sigma$-weak continuous semigroup on a von Neumann algebra.
For our purposes, it will be convenient to consider instead strongly continuous semigroups that as in \cite[Def. 4.39]{gustavo} will be referred to as Feller semigroups.
More precisely, given a unital $C^*$-algebra $\mathcal{A}$, a \emph{Quantum Feller Semigroup} (QFS) $(\mathcal{P})_{t\geq0}$ on $\mathcal{A}$ is a strongly continuous contraction semigroup on $\mathcal{A}$ (as a Banach space) such that for each $t\in[0,\infty)$ the linear operator $\mathcal{P}_t\colon\mathcal{A}\to\mathcal{A}$ is completely positive and satisfies $\mathcal{P}_t\ind=\ind$, where $\ind$ is the identity in $\mathcal{A}$.
% If $\mathcal{A}$ is commutative, namely
% $\mathcal{A}$ is the space of complex valued continuous functions on a
% compact set endowed with the complex conjugation and the uniform norm,
% a QFS on $\mathcal{A}$ is a (classical) Markov semigroup.
The advantage of using strongly continuous semigroups rather than $\sigma$-weak continuous semigroups is the possibility to use Hille-Yoshida and Lumer-Philips theorems which are crucial for the present analysis.

\subsection{One-qudit unperturbed dynamics}

Let $H$ be a $(n+1)$-dimensional Hilbert space and $A$ a
$(N+1)$-dimensional unital $*$-subalgebra of the $C^*$-algebra $\mathcal{B}(H)$ of linear operators on $H$.  Since $H$ is finite dimensional, $A$ endowed
with the operator norm is closed and therefore it is a $C^*$-algebra.
Let $\rho$ be a faithful state on $A$, namely a continuous linear
functional on $A$ such that $\rho(\id_H)=1$ and $\rho(aa^*)>0$ for all $a\in A\backslash\{0\}$.  Denote by
$\langle\,\cdot\,,\cdot\,\rangle_\rho$ the GNS inner product on $A$
induced by $\rho$, i.e.\ $\langle a,b\rangle_\rho=\rho(a^* b)$,
$a,b\in A$.
The state $\rho$ will be regarded as a fixed reference state.
Let $(P^0_t)_{t\geq0}$, be a QFS on $A$ which is self-adjoint with
respect to $\langle\,\cdot\,,\cdot\,\rangle_\rho$ and denote by
$L_0\colon A\to A$ its generator.  We assume that $L_0$ has the form
\begin{equation}
  \label{eq:L0}
  L_0 
  =\sum_{j\in I_0}\big({\ell^0_j}^*[\,\cdot\,
  ,\ell^0_j]+[{\ell_j^0}^*,\,\cdot\, ]\ell_j^0\big)
\end{equation}
for some finite set $I_0$ and $\big\{\ell_j^0\big\}_{j\in I_0} \subset A$.
We refer to \cite[Thm.3.1]{CM1}, \cite[Thm.3]{Ali}, and
\cite[Lem. 2.2]{BDP} for the conditions on $\big\{\ell_j^0\big\}_{j\in I_0}$ corresponding to the self-adjointness of $L_0$ with respect to $\langle\,\cdot\,,\cdot\,\rangle_\rho$.

Since $\rho$ is a stationary state for $(P^0_t)_{t\geq0}$, the Kadison-Schwarz inequality implies
$\langle P^0_ta,P^0_ta\rangle_\rho\leq\langle a,a\rangle_\rho$.
Therefore $(P^0_t)_{t\geq0}$ is a self-adjoint strongly continuous
contraction semigroup on the Hilbert space
$(A,\langle\,\cdot\,,\cdot\,\rangle_\rho)$.  Hence $-L_0$ is positive
definite with respect to $\langle\,\cdot\,,\cdot\,\rangle_\rho$.  Let
$\id_H=e_0,e_1,\dots,e_N\in A$ be an orthonormal basis, with respect
to $\langle\,\cdot\,,\cdot\,\rangle_\rho$, of eigenvectors of $-L_0$,
with eigenvalues $0=\lambda_0\leq\lambda_1\leq\dots\leq\lambda_N$.  In
particular,
\begin{equation}\label{L0dec}
-L_0
=
\sum_{h=1}^N\lambda_h\,\langle e_h,\,\cdot\,\rangle_\rho\, e_h
\,.
\end{equation}
We observe that $\|e_0\|_{_{H\to H}}=1$ and set
$\eta:=\max_{h\in\{1,\dots,N\}}\|e_h\|_{_{H\to H}}$.

%%%
\subsection{Unperturbed dynamics of qudits}
\label{sec:udq}

We denote by $\bb Z^d$ the standard $d$-dime\-nsional lattice and by $\Pfin$ the countable family of its finite subsets, $\Pfin:=\big\{X\subset\bb Z^d\,\colon\,|X|<\infty\big\}$.
Referring to \cite{naaijkens} for a detailed exposition, we briefly recall the construction of the Hilbert space and the $C^*$-algebra describing infinitely many qudits.

Fix an orthonormal basis $\{\omega_i\}_{i=0}^{n}$ of $H$, where
$\omega_0$ represents the vacuum state.  Let $\Phi$ be the countable
set of functions $\phi:\,\bb Z^d\to\{0,1,\dots,n\}$ such that
$\phi(x)=0$ for all but finitely many $x\in\bb Z^d$.  We then let
$\mc H$ be the Hilbert space with orthonormal basis
$\{\Omega_{\phi}\}_{\phi\in \Phi}$.  For $\Lambda\subset\bb Z^d$ we
also consider the subspace $\mc H_\Lambda\subset\mc H$ spanned by
$\{\Omega_\phi\}_{\phi\in\Phi_\Lambda}$, where
$\Phi_\Lambda\subset\Phi$ is the subset of functions $\phi\in\Phi$
such that $\phi(x)=0$ if $x\not\in\Lambda$.  For $\Lambda\in\Pfin$, we
identify $\mc H_\Lambda\simeq H^{\otimes\Lambda}$ via
$\Omega_\phi\mapsto\otimes_{x\in\Lambda}\omega_{\phi(x)}$,
$\phi\in\Phi_\Lambda$.  Note that the construction of $\mathcal{H}$
depends on the choice of the vacuum $\omega_0$.

Denote by $\mc B(\mc H)$ the $C^*$-algebra of bounded operators on
$\mc H$ and by $\|\cdot\|$ the corresponding operator norm.  For
$\Lambda\subset\bb Z^d$ we have the canonical identification
$\Phi\simeq\Phi_\Lambda\times\Phi_{\Lambda^c}$, which induces
$\mathcal{H}\simeq\mathcal{H}_\Lambda\otimes\mathcal{H}_{\Lambda^c}$. As a consequence we have the
canonical embedding
$\mathcal{B}(\mathcal{H}_\Lambda)\subset\mathcal{B}(\mathcal{H})$.  For
$\Lambda\in\Pfin$, we set
$\mc A_\Lambda:=A^{\otimes\Lambda}\subset\mathcal{B}(H^{\otimes\Lambda})$,
which we identify with the $*$-subalgebra of
$\mathcal{B}(\mc H_\Lambda)\subset\mc B(\mc H)$, via the above identification
$H^{\otimes\Lambda}\simeq\mc H_\Lambda$.  Since $\mc A_\Lambda$ is
finite dimensional, it is a $C^*$-algebra when equipped with the norm
$\|\cdot\|$.  Set
$\mc A^0:=\bigcup_{\Lambda\in\Pfin}\mc A_\Lambda\subset\mc B(\mc H)$,
and let $\mc A$ be the norm closure of $\mc A^0$.  In particular,
$\mc A$ is a unital $C^*$-subalgebra of $\mc B(\mc H)$.  We emphasize
that, even in the case $A=\mathcal{B}(H)$, $\mc A$ is a proper subalgebra of
$\mc B(\mc H)$.  For example, for $x\in\bb Z^d$, consider the
translation operator $\tau_x\in\mathcal{B}(\mathcal{H})$,
$x\in\bb Z^d$, defined by
\begin{equation}
  \label{eq:tx}
  \tau_x(\Omega_\phi):=\Omega_{\tau_x\phi},
  \quad\text{where}\quad
  \tau_x\phi:=\phi(\,\cdot\,-x),
  \qquad
  \phi\in\Phi.
\end{equation}
As simple to check, $\tau_x$ does not belong to $\mc A$.
On the other hand, it induces the automorphism
$\Ad_{\tau_x}\colon\mathcal{A}\to\mathcal{A}$ by
$\Ad_{\tau_x}(f):=\tau_xf \tau_{-x}$.  In the terminology of \cite[\S6.2.4]{Ru}, the triple $(\mathcal{A},\bb Z^d,\Ad_\tau)$ is the
\emph{quasi-local} algebra describing the qudits on $\bb Z^d$.
%Recall that a \emph{state} $\pi$ on $\mathcal{A}$ is a positive continuous linear functional on $\mathcal{A}$ such that $\pi(\ind)=1$.
The set of the states on $\mathcal{A}$ is denoted by $\mathcal{S}$.
A state $\pi\in \mc S$ is \emph{translation invariant} if
$\pi\big( \Ad_{\tau_x}(f) \big) =\pi(f)$ for any $x\in \bb Z^d$ and
$f\in\mc A$; the set of translation invariant states is denoted by
$\mc S_\tau$. 
A state $\pi\in\mathcal{S}$ is \emph{stationary} for the QFS
$(\mathcal{P}_t)_{t\geq0}$ if $\pi\mathcal{P}_t=\pi$ for any
$t\in[0,+\infty)$.
Hereafter, we denote by $\mu\mapsto\mu\mathcal{P}_t$ the dual action of $\mathcal{P}_t$ on $\mathcal{S}$.

The unperturbed dynamics is next defined by letting each qudit evolve
according to the generator $L_0$ in \eqref{L0dec}.
For
$x\in\bb Z^d$ we set $L^0_x=L_0\otimes\id_{\mc A_{\{x\}^\mathrm{c}}}$, that is regarded as an operator on $\mathcal{A}$.
The unperturbed generator is then informally given by
\begin{equation}\label{mcL0}
\mc L_0
= \sum_{x\in\bb Z^d} L^0_x
= \sum_{\alpha\in\mathcal{I}_0}
\big(
{\ell^0_{\alpha}}^*[\, \cdot \, ,\ell^0_{\alpha}] +
[{\ell_{\alpha}^0}^*,\,\cdot\, ]\ell_{\alpha}^0
\big)
\end{equation}
where, recalling \eqref{eq:L0}, $\mathcal{I}_0:=\bb Z^d\times I_0$ and
for $\alpha=(x,j)$ the operator $\ell_\alpha^0$ corresponds to
$\ell_j^0$ via the identification $\mathcal{A}_{\{x\}}\simeq A$.  For
future purposes, we denote by $\chi_0\colon\mathcal{I}_0\to\bb Z^d$
the projection $\chi_0(x,j)=x$.
We will show that the right-hand side of \eqref{mcL0} is well defined
on a suitable dense subset of $\mc A$ and its graph norm closure
generates a QFS on $\mc A$ which leaves the product state $\rho^{\otimes\bb Z^d}\in\mathcal{S}$ invariant.

Recalling the spectral decomposition \eqref{L0dec}, let $E_{x,h}$,
$x\in\bb Z^d$, $h\in\{0,1,\dots,N\}$, be the linear operators on $\mathcal{A}^0$ acting on monomials as
\begin{equation}
  \label{eq:otimes}
  E_{x,h}\big(\otimes_{y}f_y\big)
  = \langle e_h,f_x\rangle_\rho\id_H\otimes
  \big(\otimes_{y\neq x}f_y\big). 
\end{equation}
As we prove in Lemma \ref{lem:exh}, $E_{x,h}$ extends to a bounded
operator on $\mc A$.  Let also $e_{x,h}$ be the element of
$\mathcal{A}_{\{x\}}\simeq A$ corresponding to $e_h$.
Since $\{e_h\}$ is an orthonormal basis of $A$, for each
$x\in\bb Z^d$ and $f\in\mathcal{A}$
\begin{equation}
  \label{eq:spdec}
  f=
  \sum_{h=0}^N(E_{x,h}f)e_{x,h}.
\end{equation}

Introduce the seminorm $\vertiii{\cdot}$ on $\mc A^0$ by setting
\begin{equation}\label{3norm}
\vertiii{f} 
:=
\sum_{x\in\bb Z^d}
\sum_{h=1}^N
\|E_{x,h} f\|,
\end{equation}
where we emphasize that $E_{x,0}$ does not appear on the right-hand
side.
We interpret $\sum_{h=1}^N\|E_{x,h} f\|$ as a measure of the dependence of $f$ on the qudit at site $x\in \bb Z^d$.
In particular, $\vertiii{f}=0$ if and only if $f$ is a scalar
multiple of the identity.  Let $\mc A^1$ be the closure of
$\mc A^0$ with respect to the norm $\|\cdot\|+\vertiii{\cdot}$.
Clearly, $\mc A^0\subset\mc A^1\subset\mc A$.
Let $\ell_1(\bb Z^d)$ be the Banach space of summable real sequences indexed by $\bb Z^d$.
For $f\in\mc A^1$, set
\begin{equation}
  \label{deltaf}
  \delta(f) =
  \big(\delta_x(f)\big)_{x\in\bb Z^d}
  :=\Big(\sum_{h=1}^N\|E_{x,h}f \|\Big)_{x\in\bb Z^d}\in\ell_1(\bb Z^d) \,,
\end{equation}
so that $\vertiii{f}=\sum_{x}\delta_x(f)=\|\delta(f)\|_{\ell_1(\bb Z^d)}$.

\subsection{Dynamics of interacting qudits}\label{sec:diq}

The dynamics of the qudits is defined by an unbounded Lindblad
generator on $\mc A$ given by the sum of local generators.  More
precisely, we fix a countable set $\mathcal{I}$ and a map
$\chi\colon \mathcal{I}\to\Pfin$ such that $|\chi^{-1}(X)|<+\infty$
for any $X\in\Pfin$.  We then consider the informal generator on
$\mc A$ given by
\begin{equation}
  \label{a.1}
  \mc L
  = \sum_{\alpha\in\mathcal{I}} \big(i [k_{\alpha},\,\cdot\,]
  +\ell_{\alpha}^*[\,\cdot\,,\ell_{\alpha}]
  +[\ell_{\alpha}^*, \,\cdot\,]\ell_{\alpha}\big)
\end{equation}
for some self-adjoint $k_\alpha \in\mathcal{A}_{\chi(\alpha)}$ and
$\ell_{\alpha}\in\mathcal{A}_{\chi(\alpha)}$.  Note that, by setting
$K_X=\sum_{\alpha\in\chi^{-1}(X)}k_{\alpha}$ and
\begin{equation}\label{eq:LX}
  L_X
  =i[K_X, \,\cdot\,] + \sum_{\alpha\in\chi^{-1}(X)}
  \big(\ell_{\alpha}^*[\,\cdot\,,\ell_{\alpha}]+[\ell_{\alpha}^*,
  \,\cdot\,]\ell_{\alpha}\big)
\end{equation}
then $\mathcal{L}$ has the form \eqref{eq:prel}.
If $A=\mathcal{B}(H)$ so that $\mathcal{A}_X\simeq\mathcal{B}(\mathcal{H}_X)$, by the Lindblad-Gorini-Kossakowski-Sudarshan structure theorem \cite{Go:Ko:Su,Lindblad76}, the right-hand side of \eqref{eq:LX} is the general form of a Lindblad generator on $\mathcal{A}_X$.

The family $\big\{k_\alpha,\ell_\alpha\big\}_{\alpha\in\mathcal{I}}$ has \emph{finite range} if there exists $R\in[0,\infty)$ such that $k _\alpha=\ell_\alpha=0$ whenever $\diam(\chi(\alpha))>R$.
The family $\big\{k_\alpha,\ell_\alpha\big\}_{\alpha\in\mathcal{I}}$ is \emph{translation covariant} if there exists an action of the abelian
group $\bb Z^d$ on $\mathcal{I}$, denoted by
$(x,\alpha)\mapsto x+\alpha$, satisfying
$\chi(x+\alpha)=x+\chi(\alpha)$, such that
$\Ad_{\tau_x}(k_\alpha) = k_{x+\alpha}$ and
$\Ad_{\tau_x}(\ell_\alpha) = \ell_{x+\alpha}$.
As we next state, under suitable conditions
the operator $\mc L$ in \eqref{a.1} is well defined
on the dense subset $\mc A^1\subset \mc A$.

\begin{lemma}\label{lem:LX}
  If
  \begin{equation}
    \label{eq:C0}
    C_0
    :=2\eta \sup_{x\in\bb Z^d} \sum_{\alpha\colon\!\chi(\alpha) \ni
      x}(\|k_\alpha\|+2\|\ell_{\alpha}\|^2) < \infty
  \end{equation}
  then for each $f\in\mathcal{A}^1$ the series defining $\mathcal{L}f$
  converges in $\mathcal{A}$ and
  $\|\mathcal{L}f\|\leq C_0\vertiii{f}$.
\end{lemma}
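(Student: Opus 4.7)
The plan is to bound each summand $\mathcal{L}_\alpha f$ (for $\alpha\in\mathcal I$) by a quantity that involves only the ``local influence'' of $f$ on the sites of $\chi(\alpha)$, and then use Fubini together with the hypothesis \eqref{eq:C0}. Since $\mathcal A$ is a Banach space, absolute summability $\sum_\alpha\|\mathcal L_\alpha f\|<\infty$ will yield both convergence of the defining series and the estimate $\|\mathcal L f\|\le C_0\vertiii{f}$.

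First I would observe that, by the general properties of $\{E_{x,h}\}$ (Lemma \ref{lem:exh}), the map $E_{x,0}$ is a conditional expectation onto the sub-$C^*$-algebra $\mathcal A_{\{x\}^{\mathrm c}}\subset\mathcal A$, hence a norm contraction; moreover the $E_{x,0}$ pairwise commute, so for any $X\in\Pfin$ the composition $E_{X,0}:=\prod_{x\in X}E_{x,0}$ is a well-defined contraction with range inside $\mathcal A_{X^{\mathrm c}}$. From the spectral expansion \eqref{eq:spdec} and $e_{x,0}=\id_H$ I obtain
\begin{equation*}
\|f-E_{x,0}f\|=\Big\|\sum_{h=1}^N(E_{x,h}f)\,e_{x,h}\Big\|\le\eta\sum_{h=1}^N\|E_{x,h}f\|=\eta\,\delta_x(f),
\end{equation*}
where I used $\|e_{x,h}\|\le\eta$ for $h\ge1$. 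A telescoping argument along any enumeration $x_1,\dots,x_{|X|}$ of $X$, combined with contractivity of the partial products $E_{x_1,0}\cdots E_{x_{k-1},0}$, then gives
\begin{equation*}
\|f-E_{X,0}f\|\le\eta\sum_{x\in X}\delta_x(f).
\end{equation*}

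Next, for each $\alpha\in\mathcal I$ set $\mathcal L_\alpha:=i[k_\alpha,\,\cdot\,]+\ell_\alpha^*[\,\cdot\,,\ell_\alpha]+[\ell_\alpha^*,\,\cdot\,]\ell_\alpha$. Because $k_\alpha,\ell_\alpha\in\mathcal A_{\chi(\alpha)}$ and any element of $\mathcal A_{\chi(\alpha)^{\mathrm c}}$ commutes with $\mathcal A_{\chi(\alpha)}$ (this is where the tensor-product structure of the quasilocal algebra enters), one has $\mathcal L_\alpha(g)=0$ for every $g\in\mathcal A_{\chi(\alpha)^{\mathrm c}}$. Applying this to $g=E_{\chi(\alpha),0}f$ and using the elementary operator-norm estimates $\|[a,\cdot]\|\le 2\|a\|$ and $\|\ell^*[\,\cdot\,,\ell]+[\ell^*,\,\cdot\,]\ell\|\le 4\|\ell\|^2$, I get
\begin{equation*}
\|\mathcal L_\alpha f\|=\|\mathcal L_\alpha(f-E_{\chi(\alpha),0}f)\|\le\bigl(2\|k_\alpha\|+4\|\ell_\alpha\|^2\bigr)\,\eta\sum_{x\in\chi(\alpha)}\delta_x(f).
\end{equation*}

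Finally I would sum over $\alpha$ and interchange the two sums (permitted because all terms are nonnegative):
\begin{equation*}
\sum_{\alpha\in\mathcal I}\|\mathcal L_\alpha f\|\le\eta\sum_{x\in\bb Z^d}\delta_x(f)\sum_{\alpha:\,\chi(\alpha)\ni x}\bigl(2\|k_\alpha\|+4\|\ell_\alpha\|^2\bigr)\le C_0\,\vertiii{f},
\end{equation*}
by definition of $C_0$. This proves absolute convergence of $\sum_\alpha\mathcal L_\alpha f$ in $\mathcal A$ and simultaneously the stated norm bound. The main technical point to be handled carefully is the existence, commutativity, and contractivity of the conditional expectations $E_{x,0}$ on the full quasilocal algebra $\mathcal A$ (for $f\in\mathcal A^0$ everything is finite-dimensional and transparent, but for $f\in\mathcal A^1$ one needs the extension by density guaranteed by Lemma \ref{lem:exh}); once this is in hand the rest of the argument is a clean Fubini-type manipulation.
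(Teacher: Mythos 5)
Your proposal is correct and follows essentially the same route as the paper: the telescoping identity $f-\prod_{x\in X}E_{x,0}f=\sum_j\bigl(\prod_{i<j}E_{x_i,0}\bigr)F_{x_j}f$ with $\|F_xf\|\le\eta\,\delta_x(f)$ is exactly the paper's expansion \eqref{eq:Fx}, and your observation that $\mathcal L_\alpha$ annihilates the $\mathcal A_{\chi(\alpha)^{\mathrm c}}$-part is just a repackaging of the paper's key bound $\|[u,f]\|\le 2\eta\|u\|\sum_{x\in X}\delta_x(f)$. The constants match and the concluding Fubini step is the same, so nothing further is needed.
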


\subsection{Main results}\label{sec:mai}

The first result on interacting qudits establishes the existence of
the dynamics associated to the generator $\mc L$ introduced in
\eqref{a.1}. It will be convenient to write it as a perturbation of
the dynamics of non-interacting qudits as defined by \eqref{mcL0}. Namely,
we let $\mc L =\mc L_0 +\mc L_1$ where we recall that both $\mc L$ and
$\mc L_0$ have been constructed from local Lindblad generators.

This decomposition is achieved by decomposing $\ell_\alpha=\ell^0_{\alpha_0}+\ell^1_{\alpha_0}$ for $\alpha_0\in\mathcal{I}_0$ and some $\alpha\in\mathcal{I}$.
More precisely, fix an injective map
$\imath\colon\mathcal{I}_0\to\mathcal{I}$ such that
$\chi_0(\alpha_0)\in \chi(\imath(\alpha_0))$, $\alpha_0\in\mathcal{I}_0$.
Setting $\ell^1_{\alpha_0}=\ell_{\imath(\alpha_0)}-\ell^0_{\alpha_0}$,
$\alpha_0\in\mathcal{I}_0$, we have
$\mathcal{L}_1=\sum_{\alpha\in\mathcal{I}}L^1_\alpha$, where
\begin{equation}
  \label{L1a}
  L_\alpha^1= i [k_{\alpha},\,\cdot\,]+
\begin{cases}
    \begin{split}
      & {\ell^0_{\alpha_0}}^*[\,\cdot\,,\ell^1_{\alpha_0}]
      + [{\ell^0_{\alpha_0}}^*, \,\cdot\,]\ell^1_{\alpha_0}\\
      &\ +{\ell^1_{\alpha_0}}^*[\,\cdot\,,\ell^0_{\alpha_0}] +
      [{\ell^1_{\alpha_0}}^*, \,\cdot\,]\ell^0_{\alpha_0}\\ 
      &\ +{\ell^1_{\alpha_0}}^*[\,\cdot\,,\ell^1_{\alpha_0}]
      +[{\ell^1_{\alpha_0}}^*, \,\cdot\,]\ell^1_{\alpha_0}
    \end{split}
    &\text{if $\alpha=\imath(\alpha_0)\in\imath(\mathcal{I}_0)$,}\\
    \\
    \ell_{\alpha}^*[\,\cdot\,,\ell_{\alpha}] + [\ell_{\alpha}^*,
  \,\cdot\,]\ell_{\alpha} & \text{if $\alpha\not\in\imath(\mathcal{I}_0)$.}
  \end{cases}
\end{equation}
The strength of the perturbation $\mc L_1$ is measured by  
\begin{equation}
  \label{a.6}
  M := \sup_{y\in \bb Z^d} \sum_{x\in \bb Z^d} \theta_{x,y}
\end{equation}
where $\theta_{x,y}:=2N\eta(\theta_{x,y}^0+\theta_{x,y}^1)$, in which
\begin{equation*}%\label{eq:inso}
   \begin{split}
     &\theta_{x,y}^0:=
     \sum_{\substack{\alpha_0\in \mc I_0 \\ \chi_0(\alpha_0)= y}}
    \Big( \big( 1 +\eta^{2} \delta_{x,y}\big) \delta_x( k_{\imath(\alpha_0)})\\
    &\qquad\qquad + 2\big( \eta^{2}+ \delta_{x,y}\big)
    \Big[
    \delta_x\big( {\ell_{\alpha_0}^0} ^{\!\!\! *} \big)
      \, \big\|\ell_{\alpha_0}^1 \big\|
      +\big\| {\ell_{\alpha_0}^0}^{\!\!\! *} \big\|
          \, \delta_x\big( {\ell_{\alpha_0}^1} \big) + \delta_x\big( {\ell_{\alpha_0}^1} ^{\!\!\! *} \big)
      \, \big\|\ell_{\alpha_0}^0 \big\|
    \\
    &
     \qquad\qquad 
      +\big\| {\ell_{\alpha_0}^1}^{\!\!\! *} \big\|
      \, \delta_x\big( {\ell_{\alpha_0}^0} \big)
      + \delta_x\big( {\ell_{\alpha_0}^1} ^{\!\!\! *} \big)
      \, \big\|\ell_{\alpha_0}^1 \big\|
      +\big\| {\ell_{\alpha_0}^1}^{\!\!\! *} \big\|
      \,\delta_x\big( {\ell_{\alpha_0}^1} \big)
     \Big]\Big)
  \end{split}
\end{equation*}
and
\begin{equation*}
  \begin{split}
     \theta_{x,y}^1:=
     \sum_{\substack{\alpha\in \mc I\setminus \imath(\mc I_0)
         \\ \chi(\alpha)\ni y}}
    \Big(\big( 1 +\eta^{2} \delta_{x,y}\big) \delta_x( k_{\alpha})
    +2\big( \eta^{2}+ \delta_{x,y}\big)
    \big[\delta_x(\ell_{\alpha}^*) \,  \|\ell_{\alpha}\|
     +\|\ell_{\alpha}^*\| \, \delta_x(\ell_{\alpha}^*) \big]
     \Big).
  \end{split}
\end{equation*}
We observe that both $C_0$ in \eqref{eq:C0} and $M$ in \eqref{a.6} are finite whenever the family $\{k_\alpha,\ell_\alpha\}_{\alpha\in\mc I}$ is translation covariant and has finite range.
For $\Lambda\in\Pfin$ we denote by $\mathcal{L}_\Lambda$ the bounded Lindblad generator on $\mathcal{A}$ defined by
\begin{equation}\label{eq:Lfin}
  \mathcal{L}_\Lambda
  :=\sum_{X\subset\Lambda}L_X
\end{equation}
and by $(\mathcal{P}_t^\Lambda)_{t\geq0}$ the corresponding QFS.
We finally recall that $\lambda_1$ is the spectral
gap of the unperturbed one-qudit generator $L_0$.

\begin{theorem}\label{thm1}
  Assume  $C_0,M<\infty$ and consider $\mc L$ as an
  operator on $\mc A$ with domain $\mc A^1$. Then
  \begin{enumerate}[(i)]
  \item the graph norm closure $\bar{\mc L}$ of $\mc L$ generates a
    QFS $(\mc P_t)_{t\geq0}$ on $\mc A$;
  \item $\mc A^1$ is a core for $\bar{\mc L}$;
  \item for each $t\geq0$ the operator $\mathcal{P}_t$ is the strong limit of $\mathcal{P}_t^\Lambda$ as $\Lambda\uparrow\bb  Z^d$;
  \item $\vertiii{\mc P_tf}\leq e^{(M-\lambda_1)t}\vertiii{f}$,
    for any $f\in\mc A^1$ and $t\geq0$;
  \item the QFS $(\mathcal{P}_t)_{t\geq0}$ has at least one stationary state.
    A state $\pi\in\mathcal{S}$ is stationary if and only if $\pi(\mathcal{L}f)=0$ for any $f\in\mathcal{A}^0$.
\end{enumerate}
\end{theorem}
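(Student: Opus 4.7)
The plan is to construct $(\mathcal{P}_t)_{t\ge0}$ as the strong limit of the finite volume QMS $(\mathcal{P}_t^\Lambda)_{t\ge0}$, with the triple norm $\vertiii{\cdot}$ playing the role of a priori control. For each $\Lambda\in\Pfin$, $\mathcal{L}_\Lambda$ in \eqref{eq:Lfin} is a bounded Lindblad operator on $\mathcal{A}$, hence generates a norm continuous QMS $(\mathcal{P}_t^\Lambda)_{t\ge0}$; since the $L_X$'s in the sum are localized in $\Lambda$, this semigroup leaves $\mathcal{A}_\Lambda$ invariant. The quantitative heart of the proof is the $\Lambda$-uniform estimate
\begin{equation*}
\vertiii{\mathcal{P}_t^\Lambda f}\le e^{(M-\lambda_1)t}\vertiii{f},\qquad f\in\mathcal{A}^0,\;t\ge0,
\end{equation*}
i.e.\ (iv) at the finite volume level.

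To derive it I split $\mathcal{L}_\Lambda=\mathcal{L}_\Lambda^0+\mathcal{L}_\Lambda^1$ with $\mathcal{L}_\Lambda^0:=\sum_{x\in\Lambda}L_x^0$ and apply Duhamel. The intertwining \eqref{eq:itw} lifts to $E_{x,h}e^{t\mathcal{L}_\Lambda^0}=e^{-\lambda_h t}e^{t\mathcal{L}_\Lambda^0}E_{x,h}$ for $x\in\Lambda$, so by contractivity of $e^{t\mathcal{L}_\Lambda^0}$,
\begin{equation*}
\|E_{x,h}\mathcal{P}_t^\Lambda f\|\le e^{-\lambda_h t}\|E_{x,h}f\|+\int_0^te^{-\lambda_h(t-s)}\big\|E_{x,h}\mathcal{L}_\Lambda^1\mathcal{P}_s^\Lambda f\big\|\,ds.
\end{equation*}
For each summand $L_\alpha^1$ of $\mathcal{L}_1$ in \eqref{L1a} I expand $L_\alpha^1g$ via the site-$x$ spectral decomposition \eqref{eq:spdec} and bound $\|E_{x,h}L_\alpha^1g\|$ by tracking whether the site $x$ carries the projection among the factors $k_\alpha$, $\ell^0_{\alpha_0}$, $\ell^1_{\alpha_0}$, $\ell_\alpha$: factors independent of $x$ pass through $E_{x,h}$ at the cost of their operator norm, while the factor at $x$ is controlled by $\delta_x(\cdot)$, each weighted by $\eta$ or $\eta^2$ coming from $\|e_h\|_{H\to H}$. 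Summing in $h\in\{1,\dots,N\}$ and $\alpha\in\mathcal{I}$ reproduces the coefficients of \eqref{a.6}, yielding
\begin{equation*}
\delta_x(\mathcal{P}_t^\Lambda f)\le e^{-\lambda_1 t}\delta_x(f)+\int_0^te^{-\lambda_1(t-s)}\sum_{y\in\bb Z^d}\theta_{x,y}\,\delta_y(\mathcal{P}_s^\Lambda f)\,ds.
\end{equation*}
Summing on $x$ and using $\sup_y\sum_x\theta_{x,y}=M$, a Gronwall argument in $\ell_1(\bb Z^d)$ produces the uniform bound.

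For (i) and (iii) I prove that $(\mathcal{P}_t^\Lambda f)_\Lambda$ is Cauchy. Pick $f\in\mathcal{A}_{\Lambda_0}$ and $\Lambda_0\subset\Lambda\subset\Lambda'$. Duhamel gives
\begin{equation*}
\big\|(\mathcal{P}_t^{\Lambda'}-\mathcal{P}_t^\Lambda)f\big\|\le\int_0^t\big\|(\mathcal{L}_{\Lambda'}-\mathcal{L}_\Lambda)\mathcal{P}_s^\Lambda f\big\|\,ds;
\end{equation*}
since $\mathcal{P}_s^\Lambda f\in\mathcal{A}_\Lambda$, only the $L_X$ with $X\cap\Lambda\ne\emptyset\ne X\cap\Lambda^c$ contribute, and the argument behind Lemma \ref{lem:LX} bounds the integrand by a sum of $\delta_x(\mathcal{P}_s^\Lambda f)$ over $x$ in an $R$-neighbourhood of $\partial\Lambda$, with $R$ the finite range of $\{k_\alpha,\ell_\alpha\}$. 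The componentwise Gronwall estimate above, combined with the banded structure of $\Theta=(\theta_{x,y})$, gives Lieb-Robinson-type decay of $\delta_x(\mathcal{P}_s^\Lambda f)$ for $x$ far from $\Lambda_0$, so that the boundary contribution vanishes as $\Lambda\uparrow\bb Z^d$ locally uniformly in $s$. Hence $\mathcal{P}_tf:=\lim_\Lambda\mathcal{P}_t^\Lambda f$ is well defined on $\mathcal{A}^0$ and extends to $\mathcal{A}$ by contractivity and density, and the properties of being a strongly continuous unital contraction semigroup of completely positive maps all pass to the strong limit, proving (i) and (iii). The identity $\mathcal{P}_t^\Lambda f-f=\int_0^t\mathcal{P}_s^\Lambda\mathcal{L}_\Lambda f\,ds$ for $f\in\mathcal{A}^1$, together with $\mathcal{L}_\Lambda f\to\mathcal{L}f$ in $\mathcal{A}$ (Lemma \ref{lem:LX}), shows $\mathcal{A}^1$ is in the domain of the generator of $(\mathcal{P}_t)$ with action $\mathcal{L}$, and (iv) at the limit (obtained from its finite volume counterpart by lower semicontinuity of $\vertiii{\cdot}$ under norm limits) gives invariance of $\mathcal{A}^1$ under $(\mathcal{P}_t)$, so by the standard dense invariant subspace criterion $\mathcal{A}^1$ is a core and $\bar{\mathcal{L}}$ is the generator, proving (ii). Finally, (v) is a soft compactness argument: any weak* limit point of the time averages $T^{-1}\int_0^T\mu\mathcal{P}_s\,ds$ in the weak* compact set of states on $\mathcal{A}$ is stationary.

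The main obstacle is the term by term estimate $\sum_h\|E_{x,h}L_\alpha^1g\|\le\sum_y\theta_{x,y}^{(\alpha)}\delta_y(g)$ that drives the componentwise differential inequality: for each of the six Lindblad summands in \eqref{L1a} and for $\alpha\notin\imath(\mathcal{I}_0)$, the site-$x$ dependence must be distributed among the factors $k_\alpha$, $\ell^0_{\alpha_0}$, $\ell^1_{\alpha_0}$, $\ell_\alpha$ in exactly the combinatorial pattern that, after summing in $\alpha$, reproduces the coefficients $\theta^0_{x,y}$ and $\theta^1_{x,y}$ of \eqref{a.6} with their specific $\eta$, $\eta^2$, and $\delta_{x,y}$ weights. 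Once this bookkeeping is in place, the Gronwall, Duhamel, and compactness steps are routine.
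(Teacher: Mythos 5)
Your overall architecture (finite volume approximation, a $\Lambda$-uniform contraction of $\vertiii{\cdot}$, Duhamel for the Cauchy property, soft compactness for (v)) is a legitimate semigroup-level alternative to the paper's route, which instead works at the resolvent level: the paper proves accretivity of $-\mathcal{L}$ (Lemma \ref{lem:accr}), derives the a priori bound $\delta(g)\leq(\lambda+\lambda_1-\Theta)^{-1}\delta(f)$ on the resolvent equation (Lemma \ref{lem:Gamma}), and then invokes Lumer--Phillips plus the representation $\mathcal{P}_t=\lim_n(n/t)^n(n/t-\bar{\mathcal{L}})^{-n}$ to get (i)--(iv). However, your version of the quantitative heart is broken as stated. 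The source term in your Duhamel expansion around $e^{t\mathcal{L}_\Lambda^0}$ is $E_{x,h}\mathcal{L}_\Lambda^1\mathcal{P}_s^\Lambda f$, and you claim $\sum_h\|E_{x,h}L_\alpha^1 g\|\leq\sum_y\theta_{x,y}(\alpha)\,\delta_y(g)$ with the coefficients of \eqref{a.6}. This is false: those coefficients are built from $\delta_x(k_\alpha)$, $\delta_x(\ell_\alpha)$, etc., hence vanish when $x\notin\chi(\alpha)$, whereas $E_{x,h}[k_\alpha,g]$ does not vanish for such $x$ (take $g=g_x\otimes g_{\chi(\alpha)}$ with $[k_\alpha,g_{\chi(\alpha)}]\neq0$ and $\langle e_h,g_x\rangle_\rho\neq0$). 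The quantity that \emph{is} controlled by $\theta_{x,y}$ is the commutator $\|E_{x,h}\mathcal{L}_1g-\mathcal{L}_1E_{x,h}g\|$ (the paper's Lemma \ref{lem246}); without the commutator structure the bound on $\|E_{x,h}\mathcal{L}_1 g\|$ necessarily carries a factor $\|k_\alpha\|$ for every $x$, and the subsequent sum over $x\in\bb Z^d$ diverges, so the Gronwall argument does not close. The fix is to write $\tfrac{d}{dt}E_{x,h}\mathcal{P}_t^\Lambda f=(\mathcal{L}_\Lambda-\lambda_h)E_{x,h}\mathcal{P}_t^\Lambda f+[E_{x,h},\mathcal{L}_\Lambda^1]\mathcal{P}_t^\Lambda f$ and apply variation of constants with the \emph{full} contraction semigroup $\mathcal{P}_{t-s}^\Lambda$ propagating the commutator source; then Lemma \ref{lem246} applies and your integral inequality and Gronwall step go through.

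Two further points. First, your Cauchy argument for $(\mathcal{P}_t^\Lambda f)_\Lambda$ invokes the finite range $R$ of $\{k_\alpha,\ell_\alpha\}$ and a banded $\Theta$ to get Lieb--Robinson decay, but Theorem \ref{thm1} assumes only $C_0,M<\infty$; the paper avoids this by bounding $\|(\mathcal{L}_{\Lambda'}-\mathcal{L}_\Lambda)\mathcal{P}_s^\Lambda f\|\leq\sum_xC^{(\Lambda,\Lambda')}(x)\delta_x(\mathcal{P}_s^\Lambda f)$ with $C^{(\Lambda,\Lambda')}(x)\leq C_0$ tending to $0$ pointwise, and concluding by dominated convergence against the uniform $\ell_1$ bound on $\delta(\mathcal{P}_s^\Lambda f)$ --- you should do the same. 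Second, for (ii) and the invariance $\mathcal{P}_t\mathcal{A}^1\subset\mathcal{A}^1$: finiteness of $\vertiii{\mathcal{P}_tf}$ obtained by lower semicontinuity does not by itself place $\mathcal{P}_tf$ in $\mathcal{A}^1$, which is defined as a completion; you need an approximation argument in the norm $\|\cdot\|+\vertiii{\cdot}$ (the paper gets this from the Cauchy property of $g_n=(\lambda-\mathcal{L}^{(n)})^{-1}f_n$ in the graph norm). These are repairable, but the commutator issue in the central estimate is a genuine gap in the proposal as written.
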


As already mentioned, existence of the infinite volume dynamics has been elegantly proven in \cite{NVZ2011} by deriving the Lieb-Robinson bounds
\cite{Li:Ro,Na:Si} and constructing $(\mathcal{P}_t)_{t\geq0}$ as the strong limit of finite volume semigroups.
The assumptions in \cite{NVZ2011} are expressed in terms of the completely bounded norm of the local Lindblad generator $L_X$ in \eqref{eq:prel}.
We note that, since $L_X$ acts on a finite dimensional space it is completely bounded \cite[Prop.~8.11]{Pa}, however its completely bounded norm may grow with $|X|$, as \cite[Ex.~1.8]{Pa} shows.
In contrast, the assumption $M<\infty$ in Theorem~\ref{thm1} quantifies, using $\delta_x$ as introduced in \eqref{deltaf}, the dependence of the jump operators on the qudits.
Our assumption is therefore a condition of ``summable gradients'' rather than the condition of ``summable norms'' in \cite{NVZ2011} and it does not introduce additional dependence on $X$ due to the use of the completely bounded norm.
Furthermore, Theorem~\ref{thm1}.(i)--(ii) characterizes the generator of $(\mathcal{P}_t)_{t\geq0}$ and this allows the infinitesimal characterization of stationary states in Theorem~\ref{thm1}.(v).

In Theorem~\ref{thm1} we can take
$\mathcal{L}_0=0$, letting for example $\rho$ be the normalized trace
on $H$ and $\{e_h\}_{h=0}^N$ be any orthonormal basis with respect to
the normalized Hilbert-Schmidt inner product.
Hence, in the particular case in which $\mathcal{L}=i[\mc K,\,\cdot\,]$ is the Heisenberg operator associated
with the (informal) infinite volume Hamiltonian
$\mc K=\sum_{X\in \Pfin }K_X$, with $K_X$ as defined below
\eqref{a.1}, Theorem~\ref{thm1} yields the existence of a one-parameter group of
automorphisms $(\mathcal{U}_t)_{t\in\bb R}$ describing the Heisenberg
evolution on $\mathcal{A}$.
In this case, the assumptions of Theorem~\ref{thm1} are analogous to classical
conditions in the literature, see e.g.\ \cite[Thm. 7.6.2]{Ru}.

The next result, which provides a perturbative criterion for the
ergodicity of the QFS $(\mathcal{P}_t)_{t\geq0}$, depends instead on
the non-vanishing of $\mathcal{L}_0$ and more precisely on the
existence of a strictly positive spectral gap for the one-qudit
dynamics.
If the family $\{k_\alpha,\ell_\alpha\}_{\alpha\in\mathcal{I}}$ has finite range the corresponding unique stationary state has exponentially decaying correlations.

\begin{theorem}\label{thm2}
Assume \eqref{eq:C0} and $M<\lambda_1$.
Then
\begin{enumerate}[(i)]
\item
the QFS $(\mc P_t)_{t\geq0}$ has a unique stationary state $\pi$;
\item
for any $f\in\mc A^1$ and $t\geq0$
\begin{equation*}
  \big\|\mc P_tf-\pi(f)\ind\big\|
\leq
\frac {C_0}{\lambda_1-M} \, e^{-(\lambda_1-M)t} \, \vertiii{f};
\end{equation*}

\item
  if furthermore $\{k_\alpha,\ell_\alpha\}_{\alpha\in\mc I}$ has
  finite range then  
  there exist $C ,\zeta>0$ such that for any $\Lambda_1,\Lambda_2\in\Pfin$ and any $f_1\in\mc
  A_{\Lambda_1}$,
  $f_2\in\mc A_{\Lambda_2}$ 
  \begin{equation*}
    \big|\pi(f_1f_2)-\pi(f_1)\pi(f_2)\big|
    \leq
    C
    e^{-\zeta\,\mathrm{dist}(\Lambda_1,\Lambda_2)}
    \big(\|f_1\|+\vertiii{f_1}\big)\big(\|f_2\|+\vertiii{f_2}\big)  .
  \end{equation*}
\end{enumerate}
\end{theorem}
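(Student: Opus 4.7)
Combining Lemma~\ref{lem:LX} with Theorem~\ref{thm1}(iv) gives, for every $f\in\mc A^1$ and $t\ge0$,
\begin{equation*}
  \bigl\|\tfrac{d}{dt}\mc P_tf\bigr\| = \|\bar{\mc L}\mc P_tf\| \le C_0\vertiii{\mc P_tf}\le C_0\, e^{(M-\lambda_1)t}\vertiii{f}.
\end{equation*}
Since $M<\lambda_1$ this yields the norm-Cauchy estimate $\|\mc P_tf-\mc P_sf\|\le \tfrac{C_0}{\lambda_1-M}e^{(M-\lambda_1)s}\vertiii{f}$ for $t\ge s$, so $\mc P_tf$ converges in norm to a limit which is $\mc P_r$-invariant for every $r\ge0$. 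By Fatou applied to $\vertiii{\cdot}=\sum_{x,h}\|E_{x,h}\cdot\|$ and the continuity of each $E_{x,h}$, this limit has vanishing triple-norm, hence is a scalar multiple $\pi(f)\ind$ of the identity. The functional $f\mapsto\pi(f)$ is linear, contractive, unital (since $\mc P_t\ind=\ind$) and positive (as a pointwise limit of completely positive images of positive elements), so it extends to a state on $\mc A$. For any stationary $\mu$ and $f\in\mc A^1$ we have $\mu(f)=\mu(\mc P_tf)\to\pi(f)$, so $\mu$ and $\pi$ agree on the dense subspace $\mc A^1$, whence $\mu=\pi$; combined with existence from Theorem~\ref{thm1}(v) this establishes (i). Sending $s\to\infty$ in the Cauchy estimate produces exactly (ii).

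\textbf{Strategy for (iii).} Since $\vertiii{\ind}=0$, set $g_1:=f_1-\pi(f_1)\ind$, so that $\vertiii{g_1}=\vertiii{f_1}$ and $\pi(g_1f_2)=\pi(f_1f_2)-\pi(f_1)\pi(f_2)$. By stationarity $\pi(g_1f_2)=\pi(\mc P_t(g_1f_2))$ for every $t\ge0$. Differentiating the interpolation $s\mapsto \mc P_s((\mc P_{t-s}g_1)f_2)$ between $s=0$ and $s=t$ yields
\begin{equation*}
  \mc P_t(g_1f_2)-(\mc P_tg_1)f_2=\int_0^t \mc P_s\bigl[\mc L(\mc P_{t-s}g_1\cdot f_2)-(\mc L\mc P_{t-s}g_1)\, f_2\bigr]ds.
\end{equation*}
The boundary contribution is controlled via (ii) by $\tfrac{C_0}{\lambda_1-M}e^{-(\lambda_1-M)t}\vertiii{f_1}\|f_2\|$. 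A direct computation using that disjoint qudit subalgebras commute shows that $L_\alpha(hf_2)-(L_\alpha h)f_2=0$ whenever $\chi(\alpha)\cap\Lambda_2=\emptyset$, because in that case $k_\alpha,\ell_\alpha$ commute with $f_2$. Under the finite-range assumption the surviving sum is thus restricted to $\chi(\alpha)\subset B(\Lambda_2,R)$, and each such term is bounded in norm by a constant multiple of $\|f_2\|\,\sum_{x\in B(\Lambda_2,R)}\delta_x(\mc P_{t-s}g_1)$.

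\textbf{Main obstacle.} The key remaining ingredient is a spatially refined version of Theorem~\ref{thm1}(iv): its global bound $\sum_x\delta_x(\mc P_ug_1)\le e^{(M-\lambda_1)u}\vertiii{f_1}$ is insensitive to the distance between the source $\Lambda_1$ and the target $B(\Lambda_2,R)$, so it cannot detect the decay in $r:=\mathrm{dist}(\Lambda_1,\Lambda_2)$. The plan is to iterate the Duhamel identity $\mc P_u=e^{u\mc L_0}+\int_0^u e^{(u-v)\mc L}\mc L_1 e^{v\mc L_0}dv$: each insertion of $\mc L_1$ extends the locality support by at most $R$, while the intervening $\mc L_0$-propagators contract $\vertiii{\cdot}$ at rate $\lambda_1$. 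Summing the resulting Dyson series under $M<\lambda_1$ should yield a Lieb--Robinson-type estimate
\begin{equation*}
  \sum_{x\in B(\Lambda_2,R)}\delta_x(\mc P_ug_1)\le C'\, e^{-\zeta'(r-v'u)}\,\vertiii{f_1}
\end{equation*}
for suitable constants $C',\zeta',v'>0$ depending only on $R$, $C_0$ and $\lambda_1-M$. Inserting this into the integral representation and choosing $t$ so as to balance $e^{-(\lambda_1-M)t}$ against $e^{-\zeta'(r-v't)}$ then produces the advertised bound $Ce^{-\zeta r}(\|f_1\|+\vertiii{f_1})(\|f_2\|+\vertiii{f_2})$, where the symmetric factor involving $f_2$ arises by running the argument with the roles of $f_1,f_2$ interchanged, or, in the short-distance regime $r\le v't$, by invoking the trivial estimate $|\pi(g_1f_2)|\le(\|f_1\|+\vertiii{f_1})\|f_2\|$.
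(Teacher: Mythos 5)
Your treatment of (i) and (ii) is essentially the paper's own argument: the Cauchy estimate obtained from Lemma~\ref{lem:LX} together with Theorem~\ref{thm1}(iv), identification of the limit as a multiple of $\ind$ via the vanishing of the $E_{x,h}$, extension of $\pi$ to a state, and uniqueness by testing any stationary state on the dense set $\mc A^1$. That part is fine.

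For (iii) there is a genuine gap, and it occurs \emph{before} the ``main obstacle'' you flag. Your interpolation $s\mapsto\mc P_s\big((\mc P_{t-s}g_1)f_2\big)$ produces the integrand $\mc P_s\big[\mc L(hf_2)-(\mc Lh)f_2\big]$ with $h=\mc P_{t-s}g_1$, and a direct computation gives
\begin{equation*}
  \mc L(hf_2)-(\mc Lh)f_2
  \;=\; h\,\mc Lf_2\;+\;2\sum_{\alpha\in\mc I}[\ell_\alpha^*,h]\,[f_2,\ell_\alpha].
\end{equation*}
Your claim that each surviving term (those with $\chi(\alpha)\cap\Lambda_2\neq\emptyset$) is bounded by a multiple of $\|f_2\|\sum_{x\in B(\Lambda_2,R)}\delta_x(h)$ covers only the second summand. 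The first summand $h\,\mc Lf_2$ does not vanish and is of size $\|h\|\,C_0\vertiii{f_2}$; after integrating in $s$ and using $\|h\|\leq\frac{C_0}{\lambda_1-M}e^{-(\lambda_1-M)(t-s)}\vertiii{f_1}$ from (ii), it contributes an $O(1)$ quantity with no decay in $\mathrm{dist}(\Lambda_1,\Lambda_2)$ at all. So the asymmetric comparison of $\mc P_t(g_1f_2)$ with $(\mc P_tg_1)f_2$ cannot yield (iii). The paper instead compares $\mc P_t(f_1f_2)$ with the \emph{fully evolved} product $(\mc P_tf_1)(\mc P_tf_2)$: the symmetric subtraction cancels both first-order terms and leaves only the carr\'e du champ $2\sum_\alpha[\ell_\alpha^*,\mc P_sf_1][\mc P_sf_2,\ell_\alpha]$ (Lemma~\ref{lem:dec}), a product of two commutators, each controlled via \eqref{eq:kbound} by $\delta_x$'s localized in $\chi(\alpha)$ and hence supported on pairs with $|x-y|\leq R$.

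Concerning the spatially refined version of Theorem~\ref{thm1}(iv): you are right that it is needed, but no Dyson series is required. The pointwise bound $\delta(\mc P_tf)\leq e^{-\lambda_1 t}e^{t\Theta}\delta(f)$ is already available from the resolvent iteration in the proof of Theorem~\ref{thm1} (Remark~\ref{rem:ecellenza}), and in the finite-range case the kernel $\theta_{x,y}$ has finite exponential moments, $M_\xi\leq Me^{R\xi}$, so $(e^{s\Theta})_{x,y}$ decays exponentially in $|x-y|$ (Proposition~\ref{pro:dec}). Combining this with the carr\'e du champ bound and optimizing over $t$ gives (iii). Your Lieb--Robinson sketch is plausible in spirit but is not carried out, and in any case it is aimed at the wrong quantity because of the uncancelled term above.
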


The condition $M<\lambda_1$ can be explicitly checked for specific
models, we refer to the Sections \ref{sec:qss} and \ref{sec:app} for the cases of quantum spin systems and of the $XYZ$-model
with site dissipation.
In this respect, items (ii) and (iii) provide quantitative bounds on
the speed of convergence to the stationary state and on the spatial
decay of correlations for the stationary state.

By the density of $\mathcal{A}^1$ in $\mathcal{A}$, Theorem~\ref{thm2}(ii) implies that
for each $\mu\in\mathcal{S}$ the sequence $\mu\mathcal{P}_t$ converges
weakly* to $\pi$.  Even in the case in which
$\mathcal{L}=\mathcal{L}_0$, it does not appear however possible to
obtain a quantitative bound on this convergence uniformly in
$\mu\in\mathcal{S}$.  On the other hand, as we next discuss, if we
restrict to translation covariant interactions and translation
invariant $\mu\in\mathcal{S}$, there is a natural distance on the set
of translation invariant states such that the distance between
$\mu\mathcal{P}_t$ and $\pi$ vanishes exponentially uniformly in
$\mu$.  More precisely, the final topic that we discuss is the
exponential convergence of the QFS $(\mathcal{P}_t)_{t\geq0}$ to the
unique stationary state $\pi$ in term of the specific quantum
one-Wasserstein distance introduced in \cite{DD}, which is the
non-commutative counterpart of the Ornstein $\bar d$ distance on the
set of translation invariant probabilities.
Given $\Lambda\in\Pfin$ let $\|\,\cdot\,\|_{W_\Lambda}$ be the norm on the space $\mathcal{O}_\Lambda^0$ of self-adjoint and
traceless elements in $\mathcal{A}_\Lambda$ defined by
\begin{equation*}
  \|\Delta\|_{W_\Lambda} 
  := \frac{1}{2}\inf\Big\{\sum_{x\in\Lambda}
  \|\Delta^{(x)}\|_{\Lambda, \tr}\colon \Delta^{(x)} \in
  \mathcal{O}_\Lambda^0,\, \tr_{\{x\}}\Delta^{(x)}=0,\,
  \sum_{x\in\Lambda}\Delta^{(x)}=\Delta \Big\}, 
\end{equation*}
where $\|\cdot\|_{\Lambda,\tr}$ is the trace norm on
$\mathcal{A}_\Lambda$, i.e.\ $\|f\|_{\Lambda,\tr}=\tr(\sqrt{ff^*})$, and
$\tr_{\{x\}}\colon\mathcal{A}_\Lambda\to\mathcal{A}_{\Lambda\setminus\{x\}}$
denotes the partial trace on $\mathcal{H}_{\{x\}}$.
Denoting by $\mathcal{S}_\Lambda$ the set of states on
$\mathcal{A}_\Lambda$, the \emph{quantum one-Wasserstein
  distance} $W_\Lambda$ on $\mathcal{S}_\Lambda$ is defined by $W_{\Lambda}(\mu,\nu):=\|\mu-\nu\|_{W_\Lambda}$.
Here we have identified $\mathcal{S}_\Lambda$ with the positive elements in $\mathcal{A}_\Lambda$ with unit trace.
Recalling that
$\mathcal{S}_\tau$ is the set of translation invariant states on
$\mathcal{A}$, as proven in \cite[Prop. 4.1]{DD}, the \emph{specific
  quantum one-Wasserstein distance} is the distance on
$\mathcal{S}_\tau$ defined by
\begin{equation}\label{w=}
  w(\mu,\nu)
  : =\sup_{\Lambda\in\Pfin}\frac{1}{|\Lambda|}W_\Lambda(\mu_\Lambda,\nu_\Lambda)
  =\lim_{\Lambda\uparrow\bb
    Z^d}\frac{1}{|\Lambda|}W_\Lambda(\mu_\Lambda,\nu_\Lambda), 
\end{equation}
where $\mu_\Lambda$ denotes the restriction of the state
$\mu\in\mathcal{S}_\tau$ to a state on $\mathcal{A}_\Lambda$.  Observe
that the topology on $\mathcal{S}_\tau$ induced by $w$ is finer than
the weak* topology.

\begin{theorem}\label{thm3}
  Assume \eqref{eq:C0}, $M<\lambda_1$, and that $\{k_\alpha,\ell_\alpha\}_{\alpha\in\mc I}$ is
  translation covariant.
  Then the unique stationary state $\pi$ of the QFS $(\mc P_t)_{t\geq0}$ is
  translation invariant and there exists a constant $C>0$ such that
  for any $\mu\in\mathcal{S}_\tau$ and $t\geq0$
  \begin{equation*}
    w(\mu \mc P_t,\pi)
    \leq C e^{-(\lambda_1-M)t}.
  \end{equation*}
\end{theorem}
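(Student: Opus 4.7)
The plan is to reduce the Wasserstein bound to the trace-norm estimate in Theorem~\ref{thm2}(ii) through a Kantorovich--Rubinstein duality for $W_\Lambda$. First I would check that $\pi\in\mc S_\tau$: translation covariance of $\{k_\alpha,\ell_\alpha\}_{\alpha\in\mc I}$ gives $\Ad_{\tau_x}\circ\mc L=\mc L\circ\Ad_{\tau_x}$ on $\mc A^1$, and the strong-limit construction in Theorem~\ref{thm1}(iii) transfers this covariance to $\mc P_t$. Therefore $\pi\circ\Ad_{\tau_x}$ is stationary and, by the uniqueness in Theorem~\ref{thm2}(i), coincides with $\pi$. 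The same covariance gives $\mu\mc P_t\in\mc S_\tau$ whenever $\mu\in\mc S_\tau$, so $w(\mu\mc P_t,\pi)$ is well defined.

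The heart of the proof is a Kantorovich--Rubinstein type duality for $W_\Lambda$. For $x\in\Lambda$ set $V_x:=\{\Delta\in\mc A_\Lambda:\tr_{\{x\}}\Delta=0\}$; its annihilator under the trace pairing is $V_x^\perp=\ind_{\{x\}}\otimes\mc A_{\Lambda\setminus\{x\}}$. Since $\|\cdot\|_{W_\Lambda}$ is, up to the factor $1/2$, the infimal convolution over $x\in\Lambda$ of the trace norms restricted to the subspaces $V_x$, a standard convex duality argument of the kind underlying \cite[\S4]{DD} yields, for any pair of states $\sigma,\nu$ on $\mc A_\Lambda$,
\begin{equation*}
W_\Lambda(\sigma,\nu)=\sup\Big\{\big|\sigma(F)-\nu(F)\big|\,\colon\,F\in\mc A_\Lambda,\ 2\max_{x\in\Lambda}d(F,V_x^\perp)\leq 1\Big\},
\end{equation*}
where $d(F,V_x^\perp):=\inf_{G\in V_x^\perp}\|F-G\|$. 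I would then connect this Lipschitz-type quantity with $\vertiii{\,\cdot\,}$. Since $e_0=\ind$ and $\{e_h\}$ is orthonormal with respect to $\langle\cdot,\cdot\rangle_\rho$, one has $\rho(e_h)=\langle e_0,e_h\rangle_\rho=0$ for $h\geq1$, so $E_{x,h}G=0$ for every $G\in V_x^\perp$ and $h\geq 1$. Combined with the bound $\|E_{x,h}\|_{\mc A\to\mc A}\leq\eta$ from Lemma~\ref{lem:exh}, this gives $\|E_{x,h}F\|=\|E_{x,h}(F-G)\|\leq\eta\|F-G\|$ for every $G\in V_x^\perp$, hence $\|E_{x,h}F\|\leq\eta\,d(F,V_x^\perp)$. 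For $F\in\mc A_\Lambda$ one also has $\delta_x(F)=0$ for $x\notin\Lambda$, as $F$ is proportional to $\ind$ at such sites, so
\begin{equation*}
\vertiii{F}=\sum_{x\in\Lambda}\delta_x(F)\leq N\eta\,|\Lambda|\max_{x\in\Lambda}d(F,V_x^\perp),\qquad F\in\mc A_\Lambda.
\end{equation*}

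To conclude, for $\mu\in\mc S_\tau$, $\Lambda\in\Pfin$, and $F\in\mc A_\Lambda$ satisfying $2\max_{x\in\Lambda}d(F,V_x^\perp)\leq 1$, Theorem~\ref{thm2}(ii) and the previous bound yield
\begin{equation*}
\big|(\mu\mc P_t)(F)-\pi(F)\big|\leq\|\mc P_tF-\pi(F)\ind\|\leq\frac{C_0}{\lambda_1-M}e^{-(\lambda_1-M)t}\vertiii{F}\leq\frac{N\eta\,C_0\,|\Lambda|}{2(\lambda_1-M)}e^{-(\lambda_1-M)t}.
\end{equation*}
Taking the supremum over such $F$ via the duality of the previous paragraph, dividing by $|\Lambda|$, and then taking the supremum over $\Lambda\in\Pfin$ in \eqref{w=} gives the claim with $C=N\eta C_0/[2(\lambda_1-M)]$. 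The main obstacle is the Kantorovich--Rubinstein duality in the middle paragraph: while morally a transparent consequence of trace-vs.-operator-norm duality, a rigorous proof must identify $V_x^\perp$ as the trace-pairing annihilator of $V_x$, verify that the infimal-convolution/maximum dualisation produces precisely the Lipschitz-type norm $2\max_x d(F,V_x^\perp)$ with the sharp factor $2$, and handle the restriction to self-adjoint traceless elements $\mc O_\Lambda^0$ in the definition of $\|\cdot\|_{W_\Lambda}$ via a standard symmetrization so as to avoid spurious constants.
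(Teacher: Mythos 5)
Your argument is correct and follows essentially the same route as the paper: translation covariance plus uniqueness gives $\pi\in\mathcal{S}_\tau$, the identity $E_{x,h}F=E_{x,h}(F-G)$ for $G\in\ind_{\{x\}}\otimes\mathcal{A}_{\Lambda\setminus\{x\}}$ and $h\geq1$ yields $\vertiii{F}\leq\tfrac12 N\eta|\Lambda|\,\vertiii{F}_{\Lambda,\mathrm{Lip}}$ (the paper's Lemma~\ref{appena}), and Theorem~\ref{thm2}(ii) closes the estimate with the same constant. The Kantorovich--Rubinstein duality you flag as the main obstacle is exactly \eqref{W=}, i.e.\ \cite[Prop.~8]{DD0} (note $2\,d(F,V_x^\perp)=\theta_x(F)$), which the paper simply invokes as a known result rather than reproving.
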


The proof of both Theorems \ref{thm1} and \ref{thm2} follows the
strategy used in the construction of the Markov semigroup describing the
evolution of interacting classical lattice systems \cite[Ch.~I]{Li}.
The key ingredient is an a priori bound on the
resolvent equation $(\lambda-\mathcal{L})g=f$ showing that
$g\in\mathcal{A}^1$ whenever $f\in\mathcal{A}^1$.
By approximating $\mathcal{L}$ with the
finite volume generator $\mathcal{L}_\Lambda$ and using the Lumer-Phillips theorem, this a
priori bound implies the existence of the infinite volume dynamics.
As in the commutative case, when $M<\lambda_1$, the a priori bound
obtained on the resolvent equation actually implies that the seminorm
$\vertiii{\,\cdot\,}$ is exponentially contracted by the QFS
$(\mathcal{P}_t)_{t\geq0}$.  By routine arguments, this yields the
exponential convergence to equilibrium stated in
Theorem~\ref{thm2}(ii).  The exponential decay of spatial correlation 
at equilibrium in Theorem~\ref{thm2}(iii) follows from Theorem~\ref{thm2}(ii) and the Lieb-Robinson bounds \cite{Li:Ro,NVZ2011}, also referred to as ``finite speed of propagation'' in the commutative setting \cite[\S I.4]{Li}.
While Theorem~\ref{thm3} is a straightforward consequence of
Theorem~\ref{thm2}, its formulation appears novel also in the
context of interacting classical lattice systems.

From a technical viewpoint, in the commutative case discussed in
\cite[Ch.~I]{Li} the seminorm $\vertiii{f}$ is defined in
terms of the oscillations of $f$ at the sites $x\in \bb Z^d$ while
here it is adapted to the unperturbed dynamics. Correspondingly, while in  
\cite[Ch.~I]{Li} the strength of the unperturbed dynamics is specified
by a Doeblin condition on the transition rates, here it is measured by the spectral gap $\lambda_1$ of the unperturbed one-qudit generator.
Accordingly, a crucial input for the derivation of the a priori bound on the resolvent equation is the intertwining relationship \eqref{eq:itw} for the unperturbed generator.

\section{
Interacting qudits: proofs
} 
\label{s:2-c}

In this section we prove Theorems \ref{thm1}, \ref{thm2}, and \ref{thm3}.

\subsection{Semigroup generation}
\label{sub:1}

Recalling the definition of $\eta$ below \eqref{L0dec} we first show that $\|E_{x,h}\|_{\mathcal{A}\to\mathcal{A}}\leq\eta$.

\begin{lemma}\label{lem:exh}
  For each $x\in\bb Z^d$ and $h\in\{0,1,\dots,N\}$, the operator
  $E_{x,h}$ defined by \eqref{eq:otimes} extends to a bounded operator
  on $\mc A$.  In fact, for each $f\in\mc A$ we have
  $\|E_{x,0}f\|\leq \|f\|$ and $\|E_{x,h}f\|\leq \eta\,\|f\|$,
  $h\in\{1,\ldots, N\}$. 
\end{lemma}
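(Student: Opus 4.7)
The plan is to identify $E_{x,h}$ as the composition of left multiplication by $e_{x,h}^*$ with a partial-state conditional expectation at site $x$, and then obtain the desired bound from the fact that a unital completely positive map is contractive on a $C^*$-algebra.

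First I would introduce, for each $x\in\bb Z^d$, the ``partial state'' map $\bb E_\rho^x\colon\mc A\to\mc A$ which acts on monomials by
\begin{equation*}
  \bb E_\rho^x\big(\otimes_y f_y\big)
  = \rho(f_x)\,\id_H\otimes\big(\otimes_{y\neq x}f_y\big).
\end{equation*}
Since $\rho$ is a state on $A$, the map $\id\otimes\rho\otimes\id$ (partial application of $\rho$ on the $x$-th tensor factor) is completely positive and unital from $\mc A_\Lambda$ to $\mc A_{\Lambda\setminus\{x\}}$ for every finite $\Lambda\ni x$, hence contractive in operator norm. Composing with the isometric embedding $\mc A_{\Lambda\setminus\{x\}}\hookrightarrow\mc A_\Lambda\subset\mc A$ (tensoring with $\id_H$ at site $x$) preserves the norm, so $\bb E_\rho^x$ is contractive on $\mc A^0$ and extends by density to a contraction on $\mc A$.

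Next I would verify directly from \eqref{eq:otimes} that on monomials
\begin{equation*}
  E_{x,h}\big(\otimes_y f_y\big)
  = \rho\big(e_h^* f_x\big)\id_H\otimes\big(\otimes_{y\neq x}f_y\big)
  = \bb E_\rho^x\!\big(e_{x,h}^*\,\otimes_y f_y\big),
\end{equation*}
so $E_{x,h}=\bb E_\rho^x\circ L_{e_{x,h}^*}$ on $\mc A^0$, where $L_a$ denotes left multiplication by $a\in\mc A$. Combining the contractivity of $\bb E_\rho^x$ with $\|L_{e_{x,h}^*}f\|\leq\|e_{x,h}^*\|\,\|f\|=\|e_h\|_{H\to H}\|f\|$ gives $\|E_{x,h}f\|\leq\|e_h\|_{H\to H}\|f\|$ on the dense subspace $\mc A^0$. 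Since $e_0=\id_H$ satisfies $\|e_0\|_{H\to H}=1$ and $\|e_h\|_{H\to H}\leq\eta$ for $h\geq 1$ by the definition of $\eta$, this yields the stated bounds on $\mc A^0$, and $E_{x,h}$ extends uniquely to a bounded operator on $\mc A$ with the same norm bound.

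The only subtle point is the claim that the partial application of $\rho$ defines a genuine contraction from the quasilocal algebra $\mc A$ into itself; this is a standard fact for conditional expectations of tensor product type induced by a state, but I would state it as a brief lemma (or cite \cite{naaijkens}) to make the extension from $\mc A^0$ to $\mc A$ rigorous. No intricate estimate is required beyond that observation.
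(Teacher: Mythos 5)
Your proposal is correct and follows essentially the same route as the paper: both factor $E_{x,h}$ as $E_{x,h}f=E_{x,0}(e_{x,h}^*f)$ and then bound $E_{x,0}$ (your $\bb E_\rho^x$, which the paper realizes as $g\mapsto\tr_{\{x\}}(\rho g)\otimes\id$) by the contractivity of a unital completely positive map, the paper invoking the Kadison--Schwarz inequality for this last step. No substantive difference.
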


\begin{proof}
  By the density of $\mc A^0$ in $\mc A$, it suffices to prove the
  stated inequalities for $f\in\mc A_\Lambda$ with $\Lambda\in\Pfin$.
  We first observe that, by the very definition of $E_{x,h}$, we have
  $E_{x,h}f=E_{x,0}(e^*_{x,h}f)$.
  On the other hand, for $g\in\mathcal{A}_\Lambda$ we have
  $E_{x,0}g=\tr_{\{x\}}(\rho g)\otimes\id_{\mathcal{H}_{ \{x\}}}$ in
  which
  $\tr_{\{x\}}\colon
  \mathcal{A}_\Lambda\to\mathcal{A}_{\Lambda\setminus \{x\}}$ is the
  partial trace on $H\simeq\mathcal{H}_{ \{x\}}$.  Since
  $g\mapsto \tr_{\{x\}}(\rho g)$ is completely positive and unital,
  the Kadison-Schwarz inequality implies $\|E_{x,0}g\|\leq\|g\|$.
  As $e_{x,0}=\ind$ and $\|e_{x,h}\|\le \eta$, $h\in\{1,\ldots,N\}$,
  this bound yields the claim.
\end{proof}

\begin{proof}[Proof of Lemma \ref{lem:LX}]
  The statement is a direct consequence of the following bound.
  For each $X\in\Pfin$, $u\in\mathcal{A}_X$, and $f\in\mathcal{A}^1$,
  \begin{equation}\label{eq:kbound}
    \|[u,f]\|
    \leq2\eta \|u\| \sum_{x\in X}\delta_x(f).
  \end{equation}
  To prove this inequality, given $x\in\bb Z^d$ define the operator $F_x\colon\mathcal{A}\to\mathcal{A}$ by
  \begin{equation}\label{eq:Fx=}
       F_xf
    =\sum_{h=1}^N (E_{x,h} f) e_{x,h},
  \end{equation}
  so that \eqref{eq:spdec} can be recast as $f=E_{x,0}f+F_xf$.
  Enumerating the elements of $X=\{x_1,\dots,x_m\}$ and using recursively this identity,
  \begin{equation}\label{eq:Fx}
    f
    =\Big(\prod_{j=1}^m E_{x_j,0}\Big)f +
    \sum_{j=1}^m\Big(\prod_{i< j} E_{x_i,0}\Big) \,
    F_{x_j} f.
  \end{equation}
  Since the first term on the right-hand side commutes  with $u\in\mathcal{A}_X$, the bound \eqref{eq:kbound} follows by observing that
  $\|[f_1,f_2]\|\leq 2\|f_1\|\|f_2\|$ and
  $\|E_{x,0}\|_{\mathcal{A}\to\mathcal{A}}=1$, $\|F_xf\|\leq\eta\delta_x(f)$, $x\in\bb Z^d$. 
\end{proof}

In order to construct the QFS generated by the operator $\mathcal{L}$
defined in \eqref{a.1}, we shall use the terminology and the results of
\cite[\S X.8]{RS2}.  In particular, a densely defined operator $T$ on
a Banach space $\mc X$ with domain $\mc D$ is \emph{accretive} if for
each $x\in\mc D$ there exists $\wp\in\mc X^\prime$ such that $\|\wp\|_{\mc X^\prime}=1$, $\wp(x)=\|x\|_{\mc X}$, and
$\re(\wp(Tx))\geq0$.
By the finite dimensional theory of quantum Markov semigroups, the bounded operator $L_X$, $X\in\Pfin$, as
defined in \eqref{eq:LX}, generates a QFS on the finite dimensional $C^*$-algebra $\mc A_X$.
The Lumer-Philips theorem \cite[Thm.~X.48]{RS2} thus implies that $-L_X$ is accretive and therefore also $-\mathcal{L}$, with domain $\mathcal{A}^1$, is accretive.
For the sake of completeness, we however next provide a direct proof of the
accretivity of $-\mc L$.

\begin{lemma}\label{lem:accr}
  The operator $-\mc L$ with domain $\mc A^1$ is accretive.
\end{lemma}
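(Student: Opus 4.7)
For each nonzero $f\in\mathcal{A}^1$ my plan is to exhibit an explicit tangent functional $\wp\in\mathcal{A}'$ with $\wp(f)=\|f\|$ and $\|\wp\|_{\mathcal{A}'}=1$, and then to deduce $\re(\wp(-\mathcal{L}f))\geq 0$ from the standard dissipativity identity satisfied by each Lindblad summand in $\mathcal{L}$. The case $f=0$ is trivial, as any normalized $\wp$ works.

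\textbf{Construction of $\wp$.} Since $\mathcal{A}$ is a unital $C^*$-algebra and $f^*f\geq 0$ has $\|f^*f\|=\|f\|^2$, weak-$*$ compactness of the state space provides a state $\omega$ on $\mathcal{A}$ attaining $\omega(f^*f)=\|f\|^2$. I would then take
\begin{equation*}
\wp(g):=\omega(f^*g)/\|f\|,\qquad g\in\mathcal{A}.
\end{equation*}
The equality $\wp(f)=\|f\|$ is immediate, and the Cauchy-Schwarz inequality for $\omega$ gives $|\wp(g)|^2\leq\omega(f^*f)\omega(g^*g)/\|f\|^2\leq\|g\|^2$, so $\|\wp\|_{\mathcal{A}'}=1$.

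\textbf{Dissipativity summand by summand.} Write the $\alpha$-th Lindblad piece as $L^\alpha g:=i[k_\alpha,g]+\ell_\alpha^*[g,\ell_\alpha]+[\ell_\alpha^*,g]\ell_\alpha$, so that by Lemma~\ref{lem:LX} one has the norm-convergent decomposition $\mathcal{L}f=\sum_{\alpha\in\mathcal{I}}L^\alpha f$. The single-summand Leibniz-defect identity I would establish by direct computation is
\begin{equation*}
L^\alpha(f^*f)-L^\alpha(f^*)f-f^*L^\alpha(f)=2[f,\ell_\alpha]^*[f,\ell_\alpha]\geq 0,
\end{equation*}
together with the adjoint rule $L^\alpha(f)^*=L^\alpha(f^*)$. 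Applying $\omega$ and discarding the nonnegative remainder yields $2\re\omega(f^*L^\alpha f)\leq \omega(L^\alpha(f^*f))$. To dispose of the right-hand side, I would introduce $b:=\|f\|^2\id-f^*f\geq 0$ and note $\omega(b)=0$, so that Cauchy-Schwarz in GNS form gives $\omega(bc)=\omega(cb)=0$ for every $c\in\mathcal{A}$. Expanding $L^\alpha b$, the commutator with $k_\alpha$ and the one-sided $\ell_\alpha^*\ell_\alpha$ terms all vanish under $\omega$, leaving only $\omega(L^\alpha b)=2\omega(\ell_\alpha^*b\ell_\alpha)\geq 0$. Since $L^\alpha(\id)=0$, this gives $\omega(L^\alpha(f^*f))=-\omega(L^\alpha b)\leq 0$, whence $\re\omega(f^*L^\alpha f)\leq 0$ for each $\alpha$.

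\textbf{Summation and main obstacle.} To conclude I would sum over $\alpha\in\mathcal{I}$: because $\sum_\alpha L^\alpha f$ converges in norm to $\mathcal{L}f$ by Lemma~\ref{lem:LX} and $g\mapsto\omega(f^*g)$ is a bounded linear functional on $\mathcal{A}$, summation and $\omega(f^*\cdot)$ may be interchanged, producing $\re\omega(f^*\mathcal{L}f)\leq 0$, equivalently $\re(\wp(-\mathcal{L}f))\geq 0$. The only genuinely non-routine point I anticipate is precisely this interchange between the infinite sum and the bounded functional, and it is already controlled by the a priori bound $\|\mathcal{L}f\|\leq C_0\vertiii{f}$; every other ingredient is an elementary consequence of the Lindblad form and the positivity of $\omega$.
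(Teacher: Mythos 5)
Your proof is correct, but it takes a genuinely different route from the paper's. The paper first reduces, via density of $\mc A^0$ and a Banach--Alaoglu compactness argument on tangent functionals, to the case $f\in\mc A_\Lambda$ with $\Lambda\in\Pfin$; there it picks a maximal eigenvector $\xi$ of $ff^*$ and uses the concrete vector functional $\wp(g)=(\xi,gf^*\xi)_{\mc H}/\|f\|$, proving $\re\wp(i[k,f]+[u^*,f]u+u^*[f,u])\le 0$ by a direct Cauchy--Schwarz computation in $\mc H$. You instead work with an arbitrary $f\in\mc A^1$ from the start, replacing the eigenvector by an abstract state $\omega$ maximizing $\omega(f^*f)$ (which exists by weak-$*$ compactness since $\mc A$ is unital), and you derive dissipativity from the Leibniz-defect identity $L^\alpha(f^*f)-L^\alpha(f^*)f-f^*L^\alpha(f)=2[f,\ell_\alpha]^*[f,\ell_\alpha]\ge 0$ combined with $\omega(L^\alpha(f^*f))\le 0$ at the maximizer, the latter via the GNS Cauchy--Schwarz argument applied to $b=\|f\|^2\ind-f^*f$. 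I checked the algebra: the defect identity, the adjoint rule $L^\alpha(f)^*=L^\alpha(f^*)$, the vanishing of $\omega(bc)$ and $\omega(cb)$, and the final interchange of the sum over $\alpha$ with the bounded functional (justified by Lemma~\ref{lem:LX}) are all correct. What your approach buys is the elimination of the reduction-to-local-$f$ step and of any reference to the Hilbert space $\mc H$; it is the standard abstract dissipation-function argument for Lindblad generators and would work verbatim in any unital $C^*$-algebra. What the paper's approach buys is a completely elementary, self-contained computation with explicit vectors, at the modest cost of the preliminary compactness reduction. Either proof is acceptable.
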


\begin{proof}
  By the definition of $\mc A^1$, Lemma \ref{lem:LX}, and the Banach-Alaoglu
  theorem, it is enough to show that for each $f\in\mc A_{\Lambda}$,
  with $\Lambda\in\Pfin$, there exists $\wp\in\mc A^\prime$ such that
  \begin{equation}\label{accretive}
    \|\wp\|_{\mc A^\prime}=1,\qquad
    \wp(f)=\|f\|,\qquad
    \re(\wp(\mc Lf))\leq0.
  \end{equation}
  Since $\mc A_{\Lambda}$ is finite dimensional, there exists
  $\xi\in\mc H$ with $\|\xi\|_{\mc H}=1$, which is eigenvector of $ff^*$ with
  maximal eigenvalue: $(ff^*)\xi=\|f\|^2 \xi$.  Let
  $\wp\in\mc A^\prime$ be defined by
  \begin{equation*}
    \wp(g) = \frac{(\xi,gf^*\xi)_{\mathcal{H}}}{\|f\|} 
  \end{equation*}
  where $(\cdot,\cdot)_{\mathcal{H}}$ denotes the inner product in $\mc H$.
  We claim that this functional fulfil the three conditions in
  \eqref{accretive}.
  The second condition holds trivially.  The first one follows from
  the second and the bound 
  \begin{equation*}
    |\wp(g)| \leq \frac{\|gf^*\xi\|_{\mc H}} {\|f\|} \leq
    \frac{\|gf^*\|}{\|f\|} \leq \frac{\|g\|\|f^*\|}{\|f\|}
    =\|g\| .
  \end{equation*}
  Recalling \eqref{a.1}, in order to prove the third condition in
  \eqref{accretive} it suffices to show that for each self-adjoint
  $k\in \mc A$ and $u\in\mc A$
  \begin{equation}
    \label{z.1}
    \re \wp\big( i [k,f] + [u^*,f]u+u^*[f,u] \big) \le 0.
  \end{equation}
  For the first term we have
  \begin{equation*}
    \re\wp(i[k,f])
    =
    -\frac1{\|f\|}
    \im(\xi,[k,f]f^*\xi)_\mathcal{H}
    =
    \frac1{\|f\|}
    \im\big(
    (f^*\xi,kf^*\xi)_\mathcal{H}
    -
    \|f\|^2(\xi,k\xi)_\mathcal{H}
    \big) =0 
  \end{equation*}
  since $k$ is self-adjoint. On the other hand,
  \begin{align*}
    & \re\wp([u^*,f]u+u^*[f,u])
      =
      \re\wp(2u^*fu-u^*uf-fu^*u) \\
    & \qquad
      = 
      \frac1{\|f\|}
      \re\big(\xi,(2u^*fu-u^*uf-fu^*u)f^*\xi\big)_\mathcal{H} \\
    & \qquad =
      \frac1{\|f\|}
      \re\Big(
      2(\xi,u^*fuf^*\xi)_\mathcal{H}
      -
      (\xi,u^*uff^*\xi)_\mathcal{H}
      -
      (\xi,fu^*uf^*\xi)_\mathcal{H}
      \Big) \\
    & \qquad =
      \frac1{\|f\|}
      \re\Big(
      2(f^*u\xi,uf^*\xi)_\mathcal{H}
      -
      \|f\|^2\|u\xi\|^2_{\mc H}
      -
      \|uf^*\xi\|^2_{\mc H}
      \Big) \\
    & \qquad \leq
      \frac1{\|f\|}
      \big(
      \|f^*u\xi\|^2_{\mc H}
      -
      \|f\|^2 \|u\xi\|^2_{\mc H}
      \big)
      \leq0,
  \end{align*}
  where we used Cauchy-Schwarz in the second last step.  The proof of
  \eqref{z.1} is thus completed.
\end{proof}

We next prove the intertwining relationship between the
unperturbed generator $\mc L_0$ and the operators $E_{x,h}$ defined by
\eqref{eq:otimes}. 

\begin{lemma}\label{lem24}
  For each $f\in{\mc A}^1$, $x\in{\bb Z}^d$, and
  $h\in\{0,1,\dots,N\}$,
  \begin{equation*}
    E_{x,h} {\mc L}_0 f-{\mc L}_0E_{x,h} f = -\lambda_hE_{x,h}f .  
  \end{equation*}
\end{lemma}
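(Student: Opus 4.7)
The plan is to reduce to monomials $f=\bigotimes_y f_y\in\mathcal{A}_\Lambda$, $\Lambda\in\Pfin$, and argue site by site. By Lemma \ref{lem:LX} and Lemma \ref{lem:exh} both $\mathcal{L}_0$ and $E_{x,h}$ are continuous from $\mathcal{A}^1$ to $\mathcal{A}$, and $\mathcal{A}^0$ is dense in $\mathcal{A}^1$ by construction, so it suffices to verify the identity on such monomials. Note also that for $y\notin\Lambda$ one has $f_y=\id_H=e_0$ and, since $L_0e_0=0$, the term $L_y^0f$ vanishes; hence the sum defining $\mathcal{L}_0 f=\sum_{y\in\mathbb{Z}^d}L_y^0f$ reduces to a finite sum over $y\in\Lambda$, and an analogous remark applies to $\mathcal{L}_0 E_{x,h}f$ since $E_{x,h}f$ is localized in $\Lambda\setminus\{x\}$.

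First I would handle the off-site terms $y\neq x$. From the tensor formula \eqref{eq:otimes}, both $E_{x,h}$ and $L_y^0=L_0\otimes\id_{\mathcal{A}_{\{y\}^{\mathrm c}}}$ act on disjoint tensor factors, hence they commute on monomials:
\begin{equation*}
E_{x,h}L_y^0\big(\textstyle\bigotimes_z f_z\big)
=\langle e_h,f_x\rangle_\rho\,\id_H\otimes L_0f_y\otimes\bigotimes_{z\neq x,y}f_z
=L_y^0 E_{x,h}\big(\textstyle\bigotimes_z f_z\big).
\end{equation*}
Consequently all such contributions cancel in the commutator $E_{x,h}\mathcal{L}_0-\mathcal{L}_0 E_{x,h}$, and only the diagonal term $y=x$ survives.

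For $y=x$ I would use the two defining facts of the basis $\{e_h\}$: $L_0e_h=-\lambda_h e_h$ and self-adjointness of $L_0$ with respect to $\langle\cdot,\cdot\rangle_\rho$. On a monomial,
\begin{equation*}
E_{x,h}L_x^0f=\langle e_h,L_0f_x\rangle_\rho\,\id_H\otimes\textstyle\bigotimes_{z\neq x}f_z
=\langle L_0e_h,f_x\rangle_\rho\,\id_H\otimes\textstyle\bigotimes_{z\neq x}f_z
=-\lambda_h E_{x,h}f.
\end{equation*}
On the other hand, $E_{x,h}f$ has $\id_H=e_0$ in the $x$-factor, and since $L_0\id_H=0$ we obtain $L_x^0 E_{x,h}f=0$. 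Combining the two contributions yields $E_{x,h}\mathcal{L}_0f-\mathcal{L}_0E_{x,h}f=-\lambda_h E_{x,h}f$ on monomials, and the general case follows by linearity and continuous extension to $\mathcal{A}^1$.

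I do not expect a serious obstacle here: the identity is essentially the statement that $E_{x,h}$ implements, up to tensoring with $\id_H$, the projection onto the $\lambda_h$-eigenline of $-L_0$ at site $x$. The only mildly delicate point is the interchange of $E_{x,h}$ and $\mathcal{L}_0$ with the (possibly infinite) sum in the definition of $\mathcal{L}_0$, which is justified by Lemma \ref{lem:LX} together with boundedness of $E_{x,h}$ and the density of $\mathcal{A}^0$ in $\mathcal{A}^1$ in the $\|\cdot\|+\vertiii{\cdot}$ norm.
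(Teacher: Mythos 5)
Your proposal is correct and follows essentially the same route as the paper: reduce to monomials by linearity and density, observe that the off-site terms $y\neq x$ of $\mathcal{L}_0$ commute with $E_{x,h}$ and cancel, and evaluate the diagonal term via the spectral data of $L_0$ (the paper writes this out directly from the decomposition \eqref{L0dec}, which gives $\langle e_h,L_0f_x\rangle_\rho=-\lambda_h\langle e_h,f_x\rangle_\rho$ by orthonormality, equivalent to your use of self-adjointness). Your extra remarks on the finiteness of the sum and the continuity needed for the density argument are correct and, if anything, slightly more explicit than the paper's one-line justification.
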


\begin{proof}
  By linearity and density it is enough to prove the statement for a monomial, $f=\otimes_y f_y$.
  Recalling \eqref{mcL0}, from the spectral decomposition \eqref{L0dec} and the definition \eqref{eq:otimes} of $E_{x,h}$ we deduce
  \begin{align*}
    E_{x,h}\mathcal{L}_0f
    &=-\lambda_h\langle e_h,f_x\rangle_\rho\big(\otimes_{y\neq x}f_y\big)\\
   &\quad-\sum_{z\neq x}\sum_{k=1}^N\lambda_k\langle e_k,f_z\rangle_\rho\langle e_h,f_x\rangle_\rho\big(\otimes_{y\neq x,z}f_y\big)\otimes e_{z,k},\\
        \mathcal{L}_0E_{x,h}f
    &=-\sum_{z\neq x}\sum_{k=1}^N\lambda_k\langle e_k,f_z\rangle_\rho\langle e_h,f_x\rangle_\rho\big(\otimes_{y\neq x,z}f_y\big)\otimes e_{z,k}.
  \end{align*}
  The statement follows.
\end{proof}

The following lemma provides the key estimate in realizing the
generator $\mc L$ as a perturbation of $\mc L_0$. Recall that
$\mc L_1=\sum_{\alpha\in\mc I} L^1_\alpha$ with
$L^1_\alpha$ defined in \eqref{L1a} and that $\delta(f)\in \ell_1(\bb Z^d)$ has
been defined in \eqref{deltaf}.

\begin{lemma}\label{lem246}
  For each $\alpha\in\mc I$, $f\in{\mc A}^1$, $x\in{\bb Z}^d$, and
  $h\in\{1,\dots,N\}$,
  \begin{equation*}
    \big\|E_{x,h} L^1_\alpha f- L^1_\alpha E_{x,h} f\big\| \leq
    \sum_{y\in \bb Z^d} \theta_{x,y}(\alpha) \delta_y(f)
  \end{equation*}
  where
  \begin{equation*}
    \begin{split}
      &\theta_{x,y}(\alpha):=
      2\eta (1+ \eta^2\delta_{x,y})\delta_x( k_\alpha)
      \\
      &
      \;\;
      + 4\eta^3(1 +\eta^{-2} \delta_{x,y})
      \!\times\!
   \begin{cases}
           \begin{split}
        &
      \delta_x\big( {\ell_{\alpha_0}^0} ^{\!\!\! *} \big)
      \, \big\|\ell_{\alpha_0}^1 \big\|
      +\big\| {\ell_{\alpha_0}^0}^{\!\!\! *} \big\|
      \, \delta_x\big( {\ell_{\alpha_0}^1} \big)
       \\
       &
       \;\;
      + \delta_x\big( {\ell_{\alpha_0}^1} ^{\!\!\! *} \big)
      \, \big\|\ell_{\alpha_0}^0 \big\|
      +\big\| {\ell_{\alpha_0}^1}^{\!\!\! *} \big\|
      \, \delta_x\big( {\ell_{\alpha_0}^0} \big)
      \\
      & \;\;
      + \delta_x\big( {\ell_{\alpha_0}^1} ^{\!\!\! *} \big)
      \, \big\|\ell_{\alpha_0}^1 \big\|
      +\big\| {\ell_{\alpha_0}^1}^{\!\!\! *} \big\|
      \,\delta_x\big( {\ell_{\alpha_0}^1} \big)
     \end{split}
     &
    \begin{split}
      &\text{if } \alpha=\imath(\alpha_0) \in\imath(\mathcal{I}_0).
    \end{split}
    \\
    \\
          \delta_x(\ell_\alpha^*) \,  \|\ell_\alpha\|
     +\|\ell_\alpha^*\| \, \delta_x(\ell_\alpha^*)
      & \text{if $\alpha\not\in\imath(\mathcal{I}_0)$}
  \end{cases}
\end{split}
\end{equation*}
\end{lemma}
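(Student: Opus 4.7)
The strategy is to expand $L^1_\alpha$ according to \eqref{L1a} into a Hamiltonian piece $i[k_\alpha,\cdot]$ and Lindblad pieces of the form $T_{u,v}(\cdot):=u^*[\cdot,v]+[u^*,\cdot]v$, where $(u,v)$ ranges over the pairs $(\ell^0_{\alpha_0},\ell^1_{\alpha_0})$, $(\ell^1_{\alpha_0},\ell^0_{\alpha_0})$, $(\ell^1_{\alpha_0},\ell^1_{\alpha_0})$ when $\alpha=\imath(\alpha_0)$, or over the single pair $(\ell_\alpha,\ell_\alpha)$ otherwise. Each piece is analyzed separately and the resulting bounds summed to reproduce $\theta_{x,y}(\alpha)$.

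The central mechanism is the decomposition \eqref{eq:spdec}, which writes every $g\in\mc A^1$ as $g=E_{x,0}g+F_xg$, with $E_{x,0}g\in\mc A_{\{x\}^c}$ and $\|F_xg\|\le\eta\,\delta_x(g)$. Two elementary consequences drive the argument: (a) for any $u\in\mc A_{\{x\}^c}$, left and right multiplication by $u$ commute with $E_{x,h}$, as is immediate from \eqref{eq:otimes}; and (b) $E_{x,h}u=0$ for any $u\in\mc A_{\{x\}^c}$ and $h\ge1$, since $u$ has identity at site $x$ and $\langle e_h,\ind\rangle_\rho=\delta_{h,0}$. Applying the decomposition to each local coefficient ($k_\alpha$ for the Hamiltonian piece; $u,v$ for each dissipative piece) and expanding multilinearly, (a) shows that whenever all local coefficients lie in their $E_{x,0}$-part the corresponding operator commutes with $E_{x,h}$ and hence contributes nothing to $[E_{x,h},L^1_\alpha]f$. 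Only pieces containing at least one $F_x$-projected coefficient survive, and are then combined with an analogous decomposition of $f$.

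For the Hamiltonian piece, combining \eqref{eq:kbound} with $\|F_xg\|\le\eta\,\delta_x(g)$ and $\|E_{x,h}\|_{\mc A\to\mc A}\le\eta$ yields a bound of the form $c_1\,\delta_x(k_\alpha)\sum_y\delta_y(f)+c_2\,\delta_x(k_\alpha)\,\delta_x(f)$ for suitable $\eta$-dependent constants $c_1,c_2$, matching the first line of $\theta_{x,y}(\alpha)$. For each dissipative piece $T_{u,v}$, the same strategy applied jointly to $u$, $v$, $f$ rewrites the surviving triple products as sums of commutators amenable to \eqref{eq:kbound}, producing a Leibniz-style bound $\bigl(\delta_x(u^*)\|v\|+\|u^*\|\delta_x(v)\bigr)\bigl(c_1\sum_y\delta_y(f)+c_2\,\delta_x(f)\bigr)$; here the diagonal term $c_2\,\delta_x(f)$ accounts for the case in which $f$ itself contributes through its $F_x$-part and produces the Kronecker factor $\delta_{x,y}$ in $\theta_{x,y}(\alpha)$. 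Summing the six such Leibniz summands over the three $(u,v)$-pairs in the $\imath(\mc I_0)$ case (respectively the two summands from the single pair in the other case) reproduces the dissipative part of $\theta_{x,y}(\alpha)$.

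The main obstacle is the combinatorial bookkeeping: the joint decomposition of $u$, $v$, and $f$ into $E_{x,0}$ and $F_x$ parts generates many triple-product terms per dissipative block, and one must (i) identify which are annihilated by $[E_{x,h},\,\cdot\,]$ via (a) and (b); (ii) rewrite the surviving terms as sums of commutators amenable to \eqref{eq:kbound}; and (iii) verify that the powers of $\eta$, the Kronecker factor $\delta_{x,y}$, and the combinatorial coefficients all assemble into precisely the stated expression for $\theta_{x,y}(\alpha)$.
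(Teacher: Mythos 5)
Your skeleton is the paper's: split $L^1_\alpha$ as in \eqref{L1a} into the Hamiltonian commutator and dissipative blocks $u^*[\,\cdot\,,v]+[u^*,\,\cdot\,]v$, apply the site-$x$ decomposition \eqref{eq:spdec} to the coefficients and to $f$, discard the terms whose coefficients lie entirely in $\mc A_{\{x\}^{\mathrm c}}$ (your facts (a) and (b) are correct and are exactly what makes those terms commute with $E_{x,h}$), and estimate the rest. The gap is in the step you defer to "bookkeeping": carried out with the tools you actually name, the scheme does not reproduce the stated $\theta_{x,y}(\alpha)$, and since the entire content of the lemma is the precise constant (it determines $M$ and hence the threshold $M<\lambda_1$), this is not cosmetic. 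Concretely, for the Hamiltonian piece you propose to bound the surviving term $E_{x,h}[F_xk_\alpha,f]$ by composing $\|E_{x,h}\|_{\mc A\to\mc A}\le\eta$ with \eqref{eq:kbound} and $\|F_xk_\alpha\|\le\eta\,\delta_x(k_\alpha)$; this gives an off-diagonal coefficient $2\eta^3\delta_x(k_\alpha)$, while the first line of $\theta_{x,y}(\alpha)$ has $2\eta\,\delta_x(k_\alpha)$ for $y\ne x$, and $\eta\ge 1$. To get $2\eta$ one must also split $f=E_{x,0}f+F_xf$ and use the exact evaluation $E_{x,h}(kg)=(E_{x,h}k)g$, $E_{x,h}(gk)=g(E_{x,h}k)$ for $g\in\mc A_{\{x\}^{\mathrm c}}$ — the mirror image of your identity (a) — so that $E_{x,h}$ lands on the coefficient and costs $\|E_{x,h}k_\alpha\|\le\delta_x(k_\alpha)$ once, instead of paying $\eta$ for the operator norm of $E_{x,h}$ and a second $\eta$ for $\|F_xk_\alpha\|$. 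This is precisely how the paper organizes the estimate.

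For the dissipative blocks there is a second, structural problem with the joint multilinear decomposition of $u$ and $v$: the term in which both coefficients carry an $F_x$ is naturally bounded by a multiple of $\delta_x(u)\,\delta_x(v)$, which does not fit the Leibniz form $\delta_x(u)\|v\|+\|u\|\delta_x(v)$ appearing in $\theta_{x,y}(\alpha)$ unless you trade $\delta_x(v)\le N\eta\|v\|$, introducing an extra factor $N\eta$ absent from the statement. The paper sidesteps this by an asymmetric treatment: it splits $f$ first, and for the part of $f$ supported away from $x$ proves the one-sided bound \eqref{dugv}, $\|E_{x,h}(ugv)\|\le\eta^2\|g\|\big(\delta_x(u)\|v\|+\|u\|\delta_x(v)\big)$, by decomposing only $u$ and letting $E_{x,h}$ fall on $v$ in the residual term. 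So the approach is the right one, but to actually prove the lemma as stated you must replace the crude bound $\|E_{x,h}\|_{\mc A\to\mc A}\le\eta$ by these partial-evaluation identities at the points indicated; as written, the plan proves a weaker inequality with a larger kernel, not the one claimed.
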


\begin{proof}
  The statement is a direct consequence of the bounds \eqref{bo1} and
  \eqref{bo2} below.

  For each $u\in\mathcal{A}^0$, $x\in\bb Z^d$,
  $h\in\{1,\dots,N\}$, and $f\in\mathcal{A}^1$,
  \begin{equation}
    \label{bo1}
    \big\| E_{x,h}[u,f] - [u, E_{x,h} f] \big\|
    \leq \sum_{y\in\bb Z^d}\gamma_{x,y}(u) \delta_y(f),
  \end{equation}
  where
  \begin{equation*}
    \gamma_{x,y}(u)
    =  2\eta (1+ \eta^2\delta_{x,y} ) \delta_x(u).
  \end{equation*}

  For each $u,v\in\mathcal{A}^0$, $x\in\bb Z^d$,
  $h\in\{1,\dots,N\}$, and $f\in\mathcal{A}^1$,
  \begin{equation}
    \label{bo2}
          \big\| E_{x,h} \big( u [f,v] + [ u,f] v \big)
    -u [E_{x,h} f, v ] - [u, E_{x,h} f] v \big\|
    \leq \sum_{y\in\bb Z^d}\gamma_{x,y}(u,v)\delta_y(f),
  \end{equation}
  where
  \begin{equation*}
    \gamma_{x,y}(u,v)
    =
   4\eta^3\big(1 +\eta^{-2} \delta_{x,y}\big) 
        \Big(\delta_x(u) \|v\| + \| u\|\delta_x(v)
        \Big).
  \end{equation*}

  To prove \eqref{bo1}, let $X=\{x_1,\dots,x_m\}\in\Pfin$ with $x_1=x$ and $u\in\mathcal{A}_X$.
  By \eqref{eq:spdec} and  \eqref{eq:Fx=}  
  \begin{equation}
    \label{eq:boh}
    \begin{split}
      & E_{x,h}[u,f]-[u, E_{x,h} f]=E_{x,h}[u,E_{x,0}f]\\
      & \qquad + \sum_{k=1}^N  \Big\{  E_{x,h} (u e_{x,k})
      (E_{x,k} f) - (E_{x,k}f) E_{x,h}(e_{x,k}u)\Big\}
      - \big[u,E_{x,h} f \big]
    \end{split}
  \end{equation}
  where we used that $E_{x,h} e_{x,k} =\delta_{h,k}$.
  By \eqref{eq:Fx} and recalling that $e_0=\id_{H}$, the first term
  on the right-hand side above can be expanded as
    \begin{equation*}
      E_{x,h}[u,E_{x,0}f] = \big[ E_{x_1,h}u , E_{x_1,0}f \big] 
      = \sum_{i=2}^m \Big[E_{x_1,h}u\,, \Big(
      \prod_{j< i}E_{x_j,0}\Big) F_{x_i} f
      \Big]
  \end{equation*}
  whose operator norm   is bounded above by
  \begin{equation*}
    2\|E_{x,h}u\|\sum_{y\in X\setminus\{x\}} \|F_{y}f \|
    \leq 2\eta\|E_{x,h}u\|\sum_{y\in X\setminus\{x\}}\sum_{k=1}^N \|E_{y,k}f \|.
  \end{equation*}
  On the other hand, by using again the identity \eqref{eq:spdec} for
  $u$, the second term on the right-hand side of \eqref{eq:boh} can be
  rewritten as
   \begin{equation}
     \label{eq:boh2}
     \big[E_{x,0}u,E_{x,h}f\big]  +
     \sum_{k,j=1}^N
     \Big\{
     C_{j,k}^h
     \big(E_{x,j}u \big)  \big( E_{x,k}f \big)
     -C_{k,j}^h
     \big( E_{x,k}f \big) \big(E_{x,j}u \big)
     \Big\}
   \end{equation}
   where $C_{j,k}^h :=E_{x,h} (e_{x,j}e_{x,k})$.
   Since $C^h_{j,k}=c^h_{j,k}\ind$ with $|c_{j,k}^h|\le \eta^3$, 
   the operator norm of the second term in \eqref{eq:boh2} is bounded above by
   \begin{equation*}
     2 \eta^3   \sum_{j=1}^N \|E_{x,j} u\| \,
     \sum_{k=1}^N \|E_{x,k} f\|.
   \end{equation*}
   Finally, again by the identity \eqref{eq:spdec} for $u$, the sum of
   the third term on the right-hand side of \eqref{eq:boh} and the
   first term of \eqref{eq:boh2} gives 
   \begin{equation*}
     -\sum_{k=1}^N \big[ (E_{x,k} u) e_{x,k} , E_{x,h} f \big] 
   \end{equation*}
   whose operator norm is bounded by
  \begin{equation*}
     2 \eta   \sum_{k=1}^N \|E_{x,k} u\| \, \|E_{x,h} f\|.
   \end{equation*}   
   Gathering the previous estimates we deduce \eqref{bo1}.

   To prove \eqref{bo2}, let $X\in\Pfin$ be such that $X\ni x$ and $u,v\in\mathcal{A}_X$.
   Enumerate as before the elements in $X\in\Pfin$ letting $x_1=x$ so that $X=\{x_1,\dots,x_m\}$.
   By the identity \eqref{eq:spdec},
   \begin{equation}
     \label{eq:Boh}
     \begin{split}
       & E_{x,h}\big( u [f,v] + [u,f] v \big)- u [E_{x,h} f ,v ]
       - [u,E_{x,h} f] v \big)
       \\
       &
       = E_{x,h}\big( u [E_{x,0} f,v] + [u, E_{x,0} f] v  
     +  2 u (F_{x} f ) v 
     -  u v (F_{x} f )  - (F_{x} f ) u v  
     \big)
     \\
     &\quad- 2 u (E_{x,h} f) v   + u v (E_{x,h} f)
     +(E_{x,h} f) uv. 
    \end{split}
  \end{equation}
  By \eqref{eq:Fx}
  \begin{equation}
    \label{bo2.1}
    \begin{split}
    &  E_{x,h}\big( u [E_{x,0} f,v] + [u, E_{x,0} f] v \big)
      = \sum_{i=2}^m 
      E_{x,h}\Big\{ 2u\Big( \Big(\prod_{j< i}E_{x_j,0}\Big)  F_{x_i} f \Big)v  \\
      &\qquad \qquad
      - uv \Big( \Big(\prod_{j< i}E_{x_j,0} \Big)  F_{x_i} f \Big)
      - \Big( \Big(\prod_{j< i}E_{x_j,0} \Big)  F_{x_i} f \Big) uv \Big\}.
     \end{split}
   \end{equation}
   We claim that for $g\in \mathcal{A}_{\{x\}^\mathrm{c}}$ and $u,v\in \mc A$
   \begin{equation}
     \label{dugv}
     \big\| E_{x,h} \big( u g v \big) \big\|
     \leq \eta^2 \|g\| \big(\delta_x(u) \|v\| + \|u\| \delta_x(v) \big).
   \end{equation}
   Indeed, by using the identity \eqref{eq:spdec} for $u$,
   \begin{equation*}
     \begin{split}
      E_{x,h} \big( u g v \big) & =
      (E_{x,0} u) g (E_{x,h}v)
      + \sum_{j=1}^N (E_{x,j} u) \, E_{x,h} ( e_{x,j} g v)
     \end{split}     
   \end{equation*}
   so that
   \begin{equation*}
     \big\|  E_{x,h} \big( u g v \big) \big\|
     \le \|g \|
     \Big( \|u\|  \|E_{x,h}v\|
     +  \eta^2 \sum_{j=1}^N
     \| E_{x,j} u \| \|v\| \Big)
   \end{equation*}
   which implies \eqref{dugv}. By using \eqref{dugv} we then bound the
   operator norm of the right-hand side of \eqref{bo2.1} by
   \begin{equation*}
     4 \eta^3\, 
     \sum_{j=1}^N \big(
     \|E_{x,j}u\| \|v\| + \|u\| \|E_{x,j}v \| \big)
     \, 
     \sum_{y\in X\setminus\{x\}}
     \sum_{k=1}^N \big\|E_{y,k} f \big\|.
   \end{equation*}

   By using \eqref{eq:spdec} and \eqref{eq:Fx} first for $u$ and then for
   $v$, noticing that $E_{x,h}F_x=E_{x,h}$,   
   \begin{equation}
     \label{bo2.1.1}
     \begin{split}
       & E_{x,h}\big(2 u (F_{x} f ) v -  u v (F_{x} f )  - (F_{x} f ) u v \big)
        =2 (E_{x,0}  u) (E_{x,h} f) (E_{x,0}  v)\\
       &\quad -  (E_{x,0}  u) (E_{x,0}  v) (E_{x,h} f) 
       - (E_{x,h} f) (E_{x,0}  u) (E_{x,0}  v) 
       \\
       &\quad
       + \sum_{k,j=1}^N
       \Big( 2 C^h_{k,j} (E_{x,0}  u) (E_{x,k} f)  (E_{x,j}  v)
       - C^h_{j,k} (E_{x,0}  u) (E_{x,j}  v) (E_{x,k} f)
       \\
       &\quad\qquad\qquad
       - C^h_{k,j} (E_{x,k} f)  (E_{x,0}  u) (E_{x,j}  v)  \Big)
       \\
       &\quad + 
       E_{x,h}\Big(
       2  (F_{x} u) (F_{x} f)  v
       - (F_{x} u)  v   (F_{x} f)
       - (F_{x} f)(F_{x} u) v   
     \Big),
   \end{split}
   \end{equation}
   where $C^h_{k,j}$ is defined below \eqref{eq:boh2}.
   Using that $\|C^h_{k,j}\|\le \eta^3$, the operator norm of the third and fourth lines on \eqref{bo2.1.1} is 
   bounded by
   \begin{equation*}
     4 \eta^3\, \|u\| 
     \sum_{j=1}^N  \|E_{x,j}v\|
     \sum_{k=1}^N  \|E_{x,k}f \|.
   \end{equation*}
   By using Lemma~\ref{lem:exh}, the operator norm of the fifth line on \eqref{bo2.1.1} can be 
   bounded by
     \begin{equation*}
     4 \eta^3\, \|v\| 
     \sum_{j=1}^N  \|E_{x,j}u\|
     \sum_{k=1}^N  \|E_{x,k}f \|.
   \end{equation*}
   Finally, again by the identity \eqref{eq:spdec} for $u$ and $v$,
   the sum of the last line on the right-hand side of \eqref{eq:Boh} and the
   first three terms on the right-hand side of \eqref{bo2.1.1} gives
   \begin{equation*}
     \begin{split}
       &
       -2 (E_{x,0} u) (E_{x,h}f) (F_{x} v)
       + (E_{x,0} u) (F_{x} v) (E_{x,h}f)
       + (E_{x,h}f) (E_{x,0} u) (F_{x} v)
       \\
       &- 2 (F_{x} u) (E_{x,h}f) v 
       + (F_{x} u) v (E_{x,h}f)
       + (E_{x,h}f) (F_{x} u) v
     \end{split}
   \end{equation*}
   whose operator norm is bounded by
   \begin{equation*}
     4 \eta\, 
     \sum_{j=1}^N \big( \|E_{x,j}u\| \|v\| +\|u\| \|E_{x,j}v\|  \big)
     \, \|E_{x,h}f \|.
   \end{equation*}
   Gathering the previous estimates we deduce \eqref{bo2}.
\end{proof}

Recalling $\theta_{x,y}$ has been defined below \eqref{a.6}, let $\Theta$ be the operator on $\ell_1(\bb Z^d)$ with kernel
$\big(\theta_{x,y}\big)_{x,y\in\bb Z^d}$.  Namely,
\begin{equation}\label{eqG}
  (\Theta \beta)_x
  = \sum_{y\in\bb Z^d} \theta_{x,y} \beta_y ,\qquad x\in\bb Z^d,\,\beta\in\ell_1(\bb Z^d).
\end{equation}
Recalling that $\lambda_1$ is the spectral gap of the unperturbed
single site Lindblad generator, the following lemma provides an a priori bound on the resolvent equation.

\begin{lemma}\label{lem:Gamma}
  Assume \eqref{eq:C0} and $M<+\infty$. 
  The operator $\Theta$ on $\ell_1(\bb Z^d)$ satisfies
  $\|\Theta \|_{\ell_1\to\ell_1}\leq M$.
  Furthermore, if $f,g\in\mc A^1$ satisfy
  \begin{equation}\label{eqRis}
    (\lambda -{\mc L})g=f
  \end{equation}
  for some $\lambda> 0$ such that $\lambda+\lambda_1>M$, then
  \begin{equation}\label{eqBound1}
    \delta(g)
    \leq
    \big(\lambda+\lambda_1-\Theta\big)^{-1}\delta(f)
   \qquad\text{pointwise.}
  \end{equation}
   In particular,
  \begin{equation}\label{eqBound2}
    \vertiii{g}\leq (\lambda+\lambda_1-M)^{-1}\vertiii{f}
    \,.
  \end{equation}
\end{lemma}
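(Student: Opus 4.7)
The proof proceeds in three stages: a norm estimate for $\Theta$, a sitewise inequality for $\delta(g)$ obtained from the resolvent equation, and an $\ell_1$ inversion argument.

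For the first claim, since $\theta_{x,y}\geq 0$, Fubini gives for $\beta\in\ell_1(\bb Z^d)$
\begin{equation*}
  \|\Theta\beta\|_{\ell_1}
  \leq\sum_{x,y}\theta_{x,y}|\beta_y|
  =\sum_y\Big(\sum_x\theta_{x,y}\Big)|\beta_y|
  \leq M\|\beta\|_{\ell_1},
\end{equation*}
so $\|\Theta\|_{\ell_1\to\ell_1}\leq M$.

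For the main step, I fix $x\in\bb Z^d$ and $h\in\{1,\dots,N\}$ and apply $E_{x,h}$ to both sides of \eqref{eqRis}. Writing $\mathcal{L}=\mathcal{L}_0+\mathcal{L}_1$, Lemma \ref{lem24} allows us to replace $E_{x,h}\mathcal{L}_0 g$ by $\mathcal{L}_0 E_{x,h}g-\lambda_h E_{x,h}g$, while $E_{x,h}\mathcal{L}_1 g=\mathcal{L}_1 E_{x,h}g+\sum_\alpha[E_{x,h},L^1_\alpha]g$, the series converging absolutely by Lemmas \ref{lem:LX} and \ref{lem246}. Rearranging gives the key identity
\begin{equation*}
  (\lambda+\lambda_h-\mathcal{L})E_{x,h}g
  =E_{x,h}f+\sum_\alpha[E_{x,h},L^1_\alpha]g.
\end{equation*}
Since $E_{x,h}g$ is independent of the $x$-th site, $E_{x,k}E_{x,h}g=0$ for $k\geq 1$; combined with $E_{y,k}E_{x,h}=E_{x,h}E_{y,k}$ for $y\neq x$ (both facts follow from \eqref{eq:otimes}) and Lemma \ref{lem:exh}, this yields $\vertiii{E_{x,h}g}\leq\eta\vertiii{g}<\infty$, so $E_{x,h}g\in\mathcal{A}^1$. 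Accretivity of $-\mathcal{L}$ on $\mathcal{A}^1$ (Lemma \ref{lem:accr}) furnishes, for any $\mu>0$ and $w\in\mathcal{A}^1$, the a priori bound $\mu\|w\|\leq\|(\mu-\mathcal{L})w\|$ (obtained by pairing with the accretivity witness $\wp$). Applying this with $w=E_{x,h}g$ and $\mu=\lambda+\lambda_h$, and invoking Lemma \ref{lem246} on each commutator,
\begin{equation*}
  (\lambda+\lambda_h)\|E_{x,h}g\|
  \leq\|E_{x,h}f\|+\sum_\alpha\sum_y\theta_{x,y}(\alpha)\,\delta_y(g).
\end{equation*}
Summing over $h=1,\dots,N$ and using $\lambda_h\geq\lambda_1$, I obtain
\begin{equation*}
  (\lambda+\lambda_1)\delta_x(g)
  \leq\delta_x(f)+\sum_y\theta_{x,y}\,\delta_y(g),
\end{equation*}
after identifying $N\sum_\alpha\theta_{x,y}(\alpha)$ with $\theta_{x,y}$ defined below \eqref{a.6}.

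This inequality reads $\bigl((\lambda+\lambda_1)\id_{\ell_1}-\Theta\bigr)\delta(g)\leq\delta(f)$ pointwise in $\ell_1(\bb Z^d)$. Since $\|\Theta\|_{\ell_1\to\ell_1}\leq M<\lambda+\lambda_1$, the Neumann series yields $\bigl((\lambda+\lambda_1)-\Theta\bigr)^{-1}=\sum_{n\geq0}\Theta^n/(\lambda+\lambda_1)^{n+1}$ as a bounded operator with nonnegative kernel, hence order-preserving. Applying it to both sides gives \eqref{eqBound1}; taking $\ell_1$ norms and using $\bigl\|\bigl((\lambda+\lambda_1)-\Theta\bigr)^{-1}\bigr\|_{\ell_1\to\ell_1}\leq(\lambda+\lambda_1-M)^{-1}$ yields \eqref{eqBound2}. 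The main technical hurdle is the identification $N\sum_\alpha\theta_{x,y}(\alpha)=\theta_{x,y}$: one must match the per-$\alpha$ coefficients of Lemma \ref{lem246} against the $\theta^0_{x,y}+\theta^1_{x,y}$ decomposition, carefully tracking the restrictions on $y$ coming from the supports $\chi(\alpha)$ and $\chi_0(\alpha_0)$, and verify that the factor $N$ arising from the $h$-summation combines with the $2\eta$ prefactor of the per-$\alpha$ estimate to reproduce the $2N\eta$ of the aggregate kernel.
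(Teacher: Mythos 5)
Your proposal is correct and follows essentially the same route as the paper: you apply $E_{x,h}$ to the resolvent equation, use the intertwining relation of Lemma \ref{lem24} to produce the shift $\lambda\mapsto\lambda+\lambda_h$, discard the $\mathcal{L}E_{x,h}g$ term via accretivity, control the commutator with $\mathcal{L}_1$ by Lemma \ref{lem246}, sum over $h$, and invert on $\ell_1$ (your Neumann-series inversion is equivalent to the paper's iteration of the pointwise inequality). Your explicit verification that $E_{x,h}g\in\mathcal{A}^1$ before invoking the accretivity estimate is a small point the paper leaves implicit, but it does not change the argument.
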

\begin{proof}
  Definition \eqref{a.6} readily implies the bound
  $\|\Theta\|_{\ell_1\to\ell_1}\leq M$.  Assuming \eqref{eqRis}, for
  $x\in\bb Z^d$ and $h\in\{1,\dots,N\}$ we have
  \begin{align*}
     \lambda E_{x,h}g
     & =
      E_{x,h}f+E_{x,h}\mc L g \\
    & =
      E_{x,h} f+\mc LE_{x,h}g
      +
      E_{x,h}\mc L_0g-\mc L_0E_{x,h} g
            +
      E_{x,h}\mc L_1g-\mc L_1E_{x,h} g.
  \end{align*}
  Applying Lemma \ref{lem24}, we thus get
  \begin{equation}
    \label{a.10}
    (\lambda+\lambda_h)E_{x,h}g
    =
    E_{x,h} f+\mc LE_{x,h}g
    +
    E_{x,h}\mc L_1g-\mc L_1E_{x,h} g
    \,.
  \end{equation}
  Since, as proven in Lemma~\ref{lem:accr}, $-\mc L$ is accretive, for
  each $x\in\bb Z^d$ and $h=1,\ldots,N$, there exists
  $\wp\in\mc A^\prime$ such that $\|\wp\|_{\mc A^\prime}=1$,
  $\wp(E_{x,h} g)=\|E_{x,h} g\|$, and $\re\wp(\mc LE_{x,h} g)\leq0$.
  Pairing both sides of \eqref{a.10} with $\wp$ and taking real
  parts we deduce
  \begin{equation}
    \label{eq:lmabdapiu}
    \begin{split}
          (\lambda+\lambda_h)\|E_{x,h}g\| 
    & \leq
      \re\wp(E_{x,h} f)
      +
      \re\wp(E_{x,h}\mc L_1g-\mc L_1E_{x,h} g) \\
    & \leq
      \|E_{x,h}f\|
      +
      \|E_{x,h}\mc L_1g-\mc L_1E_{x,h} g\|.
    \end{split}
  \end{equation}

  Lemma~\ref{lem246} and the definition of $\theta_{x,y}$ below \eqref{a.6} imply that for each $h\in\{1,\dots,N\}$
  \begin{equation*}
    \|E_{x,h}\mathcal{L}_1g-\mathcal{L}_1 E_{x,h}g\|
    \leq\frac{1}{N}\sum_{y\in\bb Z^d}\theta_{x,y}\delta_y(g).
  \end{equation*}
  Since $\lambda_h\ge \lambda_1$,
  summing over  $h\in\{1,\dots,N\}$ the bound \eqref{eq:lmabdapiu} we get
  \begin{equation*}
    (\lambda+\lambda_1)\delta(g) \leq \delta(f) + \Theta \delta(g) \qquad\text{pointwise.}
  \end{equation*}
  Equivalently,
  \begin{equation*}
    \delta(g) \leq \frac1{\lambda+\lambda_1}\delta(f) +
\frac1{\lambda+\lambda_1}\Theta \delta(g) \qquad\text{pointwise.}
\end{equation*}
Since $\Theta$ is positive operator and
$\delta(f),\delta(g)\in\ell_1(\bb Z^d)$, this inequality can be iterated to
obtain
\begin{equation*}
\delta(g) \leq \sum_{k=0}^{n-1}
\frac{\Theta^k}{(\lambda+\lambda_1)^{k+1}}\delta(f) +
\frac{\Theta^n}{(\lambda+\lambda_1)^{n}}\delta(g) \qquad\text{pointwise.}  
\end{equation*}
By assumption $\lambda+\lambda_1>M$, hence the last term in the
right-hand side above vanishes as $n\to\infty$.  The bound
\eqref{eqBound1} follows.
Since
$\|(\lambda+\lambda_1-\Theta)^{-1}\|_{\ell_1\to\ell_1}\leq(\lambda+\lambda_1-M)^{-1}$, 
the bound \eqref{eqBound2} is obtained by taking the $\ell_1(\bb Z^d)$-norm of
both sides in \eqref{eqBound1}.
\end{proof}

\begin{proof}[Proof of Theorem \ref{thm1}]
  By Lemma \ref{lem:accr} $-\mc L$
  is accretive thus closable, see e.g.\ \cite[Ex.~X.52]{RS2}.
  The proof that its closure $\bar{\mc L}$ generates a QFS on $\mc A$
  is achieved by the following steps.

  \medskip

  \noindent\emph{Step 1.}
  The operator $\bar{\mc L}$ generates a strongly continuous
  contraction semigroup $(\mc P_t)_{t\geq0}$ on the Banach space $\mc A$.

  In view of the accretivity of $-\mc L$ and the Lumer-Phillips theorem, see
  e.g.\ \cite[Thm.~X.48]{RS2}, it is enough to show
  $\Ran(\lambda-\bar{\mc L})=\mc A$ for some $\lambda>0$, where $\Ran(T)$ denotes the image of the linear operator $T$.
  To this end it suffices to show that $\Ran(\lambda-\mc L)$ is dense in $\mc A$.
  Pick a sequence $\Lambda_n\in\Pfin$, $n\in\bb N$, such that
  $\Lambda_n\subset\Lambda_{n+1}$ and $\bigcup_n\Lambda_n=\bb Z^d$.
  Let also $\mc L^{(n)}$ be the finite volume generator defined in \eqref{eq:Lfin} with $\Lambda$ replaced by $\Lambda_n$
  and observe that $-\mc L^{(n)}$ is an accretive bounded operator on $\mc A$.
  Hence, $\mc L^{(n)}$ generates a strongly continuous contraction semigroup on $\mc A$
  denoted by $\big(\mathcal{P}_t^{(n)}\big)_{t\geq0}$.  By the
  Lumer-Phillips theorem we then deduce
  $\Ran(\lambda-\mc L^{(n)})=\mc A$ for any $n\geq1$ and $\lambda>0$.
  Fix $f\in\mc A^1$, choose $\lambda>0$ such that
  $\lambda+\lambda_1>M$, and set $g_n=(\lambda-\mc L^{(n)})^{-1}f$,
  $f_n=(\lambda-\mc L)g_n$.  We claim that
  \begin{equation}\label{eq:limitefn}
    \|f_n-f\|\to0 \quad\text{ as }\quad n\to\infty
    \,.
  \end{equation}
  Since $\mc A^1$ is dense in $\mc A$, this yields the density of
  $\Ran(\lambda-\mc L)$ in $\mc A$.

  To prove \eqref{eq:limitefn} we decompose $\mc L^{(n)}=\mc L_0^{(n)}+\mc L_1^{(n)}$ where $\mc L_0^{(n)}=\sum_{x\in\Lambda_n}L^0_x$ and $\mc L_1^{(n)}=\sum_{\alpha\colon\chi(\alpha)\subset\Lambda_n}L^1_\alpha$.
  Set $\theta_{x,y}^{(n)}=\theta_{x,y}$ if $x,y\in\Lambda_n$ and $\theta_{x,y}^{(n)}=0$ otherwise and let $\Theta^{(n)}$ be the operator on $\ell_1(\bb Z^d)$ with kernel $\theta_{x,y}^{(n)}$.
  We the claim that
  \begin{equation*}
    \delta(g_n)
    \leq \big(\lambda+\lambda_1-\Theta^{(n)}\big)^{-1}\delta(f_n),
    \qquad\text{pointwise.}
  \end{equation*}
  This follows indeed from Lemma~\ref{lem246} and the argument in the proof of Lemma~\ref{lem:Gamma} to deduce \eqref{eqBound1}.
  Since $\theta_{x,y}^{(n)}\leq \theta_{x,y}$, the previous bound implies
  \begin{equation}\label{eq:stima4}
    \delta(g_n)
    \leq
     (\lambda+\lambda_1-\Theta)^{-1}\delta(f),
    \qquad\text{pointwise.}
  \end{equation}
  The argument in proof of Lemma~\ref{lem:LX} and more precisely the bound \eqref{eq:kbound} implies 
  \begin{equation*}
    \|f_n-f\| = \big\| (\mc L^{(n)}-\mc L)g_n \big\| \leq
    \sum_{x\in \bb Z^d} C_0^{(n)}(x)\delta_x(g_n)
  \end{equation*}
  where
  \begin{equation*}
    C_0^{(n)}(x) = 2\eta\sum_{\substack{\alpha\in \chi^{-1}(\{x\})\\ \chi(\alpha)\cap \Lambda_n \neq\emptyset}} (\|k_\alpha\|+2\|\ell_{\alpha}\|^2).
  \end{equation*}
  Recalling \eqref{eq:C0} we now observe that $C_0^{(n)}(x) \le C_0$ and $C_0^{(n)}$ vanishes pointwise as $n\to \infty$.
  Hence, by the bound \eqref{eq:stima4} and dominated convergence we conclude the proof of \eqref{eq:limitefn}.

\medskip

\noindent\emph{Step 2.}
$\mc P_t(\mc A^1)\subset\mc A^1$ and
$\vertiii{\mc P_t f}\leq e^{(M-\lambda_1)t}\vertiii{f}$,  $t\geq0$ and $f\in\mc A^1$.

We first observe that, by the relationship between the semigroup
$\mc P_t$ and the resolvent $(\lambda-\bar{\mc L})^{-1}$ provided by the Hille-Yoshida theorem, for each
$u\in\mc A$ and $t>0$ we have
\begin{equation}\label{eq:lorenzo}
  \mc P_t u
  =
  \lim_{n\to\infty}
  \Big(\frac nt\Big)^n
  \Big(\frac nt-\bar{\mc L}\Big)^{-n} u.
\end{equation}
For $f$ as in the statement, let $g_n$ and $f_n$ be defined as in Step
1, where we recall that $\lambda+\lambda_1>M$.  Since, as proven in
Step 1, $f_n\to f$, and $(\lambda-\mc L)^{-1}$ is bounded,
$g_n=(\lambda-\mc L)^{-1}f_n$ is a Cauchy sequence, whose limit is
denoted by $g\in\mc A$.  We deduce that the sequence
$\mc L g_n=\lambda g_n-f_n$ has a limit.  Hence, $f$ belongs to the
domain of $\bar{\mc L}$ and $f=(\lambda-\bar{\mc L})g$.  By taking the
limit $n\to\infty$ in \eqref{eq:stima4}, we deduce that
$\delta(g)\leq(\lambda+\lambda_1-\Theta)\delta(f)$ pointwise, in
particular $g\in\mc A^1$.  Thus
$(\lambda-\bar{\mc L})^{-1}\mc A^1\subset\mc A^1$, which, by
\eqref{eq:lorenzo}, proves that $\mc P_t\mc A^1\subset\mc A^1$, $t\geq0$.

As just proven, if $\lambda+\lambda_1>M$,
$$
\delta((\lambda-\bar{\mc L})^{-1}f)\leq
(\lambda+\lambda_1-\Theta)^{-1}\delta(f)\qquad\text{pointwise.}
$$
Since $\Theta$ is a positive operator, by iterating this bound we
deduce that for every $n\in\bb N$
$$
\delta((\lambda-\bar{\mc L})^{-n}f)\leq
(\lambda+\lambda_1-\Theta)^{-n}\delta(f)\qquad\text{pointwise}.
$$
Hence, by \eqref{eq:lorenzo}, for $t>0$
\begin{align*}
  & \vertiii{\mc P_t f}
    =
    \sum_{x\in\bb Z^d}
    \delta_x(\mc P_t f)
    =
    \sum_{x\in\bb Z^d}
    \lim_{n\to\infty}
    \Big(\frac nt\Big)^n
    \delta_x\Big(\Big(\frac nt-\bar{\mc L}\Big)^{-n} f\Big) \\
  & \leq
    \sum_{x\in\bb Z^d}
    \lim_{n\to\infty}
    \Big(\frac nt\Big)^n
    \Big(\Big(\frac nt+\lambda_1-\Theta\Big)^{-n}
    \delta(f)\Big)_x \\
  & =
    \sum_{x\in\bb Z^d}
    \big(e^{(\Theta-\lambda_1)t}
    \delta(f)
    \big)_x
    \leq
    e^{(M-\lambda_1) t}\vertiii{ f}
    \,,
\end{align*}
where we used $\|\Theta\|_{\ell_1\to\ell_1}\leq M$ in the last
inequality.

\medskip

\noindent\emph{Step 3.}
For each $f\in\mc A$ and $t\geq0$ the sequence $\mc P_t^{(n)}f$ converges to $\mc P_tf$ as $n\to\infty$.

Since $\mc P_t$ and $\mc P^{(n)}_t$ are contractions on $\mc A$, and
$\mc A^1$ is dense in $\mc A$, it is enough to show the statement for
each $f\in\mc A^1$.  By Step 2 and standard interpolation, we have
$$
\mc P_tf-\mc P_t^{(n)}f = \int_0^tds\frac{d}{ds}\big(\mc
P_{t-s}^{(n)}\mc P_sf\big) = \int_0^tds\,\mc P_{t-s}^{(n)} (-\mc
L^{(n)}+\mc L) \mc P_sf \,,
$$
so that
$$
\|\mc P_tf-\mc P_t^{(n)}f\| \leq \int_0^tds \| (-\mc L^{(n)}+\mc L) \mc
P_sf\|\,.
$$
By the argument below \eqref{eq:stima4}, dominated convergence,
and again Step 2, we then conclude that the right-hand side vanishes
as $n\to\infty$.

\medskip

\noindent\emph{Conclusion.}
By Step 1, $\bar{\mc L}$ generates a strongly continuous contraction
semigroup $(\mc P_t)_{t\geq0}$ on the Banach space $\mc A$.  To show
that $(\mc P_t)_{t\geq0}$ is a QFS we need to prove that $\mc P_t$ is
a completely positive operator on the $C^*$-algebra $\mc A$ for each $t\geq0$.
Observing that $\mc L^{(n)}$ is a finite rank operator, the standard theory of QFS on finite
dimensional $C^*$-algebra \cite{Lindblad76} implies that $\mc P_t^{(n)}$ is a completely positive operator on $\mc A$ for each
$t\geq0$ and $n\in\bb N$.
As follows from Step 3, for each $t\geq0$, the operator $\mc P_t$ is
the strong limit of $\mc P_t^{(n)}$, and therefore also completely
positive.

In view of Step 2 and \cite[Thm.~X.49]{RS2} $\mc A^1$ is a core for
$\bar{\mc L}$.  Claims (iii) and (iv) are the content of Steps 3 and 2, respectively.

Finally, to prove item (v), pick $\mu\in\mathcal{S}$ and set
\begin{equation*}
  \mu^T
  :=\frac{1}{T}\int_0^T\mu\mathcal{P}_t\,dt,
  \qquad
  T\in(0,\infty).
\end{equation*}
Since $\|\mu^T\|_{\mathcal{A}'}=1$, the Banach-Alaoglu theorem yields the existence of $\pi\in\mathcal{A}'$ with $\|\pi\|_{\mathcal{A}'}\leq1$ and a  sequence $T_n\to\infty$ such that $\mu^{T_n}\to\pi$ weakly* in $\mathcal{A}'$.
Moreover, $\pi$ is positive and, since $\mathcal{A}$ is unital, $\pi(\ind)=1$  so that $\pi\in\mathcal{S}$.
Fix $s\geq0$, by taking the limit $n\to\infty$ in
\begin{equation*}
  \mu^{T_n}\mathcal{P}_s
  =\frac{1}{T_n}\int_0^{T_n}\mu\mathcal{P}_{t+s}\,dt
  =\mu^{T_n}-\frac{1}{T_n}\int_0^{s}\mu\mathcal{P}_{t}\,dt+\frac{1}{T_n}\int_{T_n}^{T_n+s}\mu\mathcal{P}_{t}\,dt,
\end{equation*}
we deduce that $\pi=\pi\mathcal{P}_s$ hence $\pi$  is stationary.

To prove the infinitesimal characterization of stationary state, we first observe that, in view of Lemma~\ref{lem:LX} and items (i)--(ii), the stated condition is in fact equivalent to $\pi(\bar{\mc L}f)=0$ for any $f$ in the domain of $\bar{{\mc L}}$.
Clearly, the invariance of $\pi$ implies this condition.
To show the converse, from the infinitesimal invariance we deduce that $\pi\big(\lambda(\lambda-\mathcal{L})^{-1}f\big)=\pi(f)$ for any $\lambda>0$ and $f\in\mathcal{A}$.
Therefore, by iterating, $\pi\big(\lambda^n(\lambda-\mathcal{L})^{-n}f\big)=\pi(f)$.
We conclude by using \eqref{eq:lorenzo}.
\end{proof}

\begin{remark}\label{rem:ecellenza}
  The proof of Theorem \ref{thm1} given above actually implies that
  for each $f\in\mc A^1$ and $t\geq0$
  \begin{equation*}
    \delta(\mathcal{P}_tf) \leq e^{-\lambda_1 t} e^{t\Theta}\delta(f) \qquad
\text{pointwise.}
  \end{equation*}
\end{remark}

%%%
\subsection{Perturbative criterion for ergodicity}
\label{sub:2}

Before discussing the proof of Theorem~\ref{thm2}, we show that the Lieb-Robinson bounds hold for the QFS $\big(\mathcal{P}_t\big)_{t\geq0}$: the evolution of the product of two observables, with distant ``support'', can be approximated by the product of their respective evolution on a suitable  time scale.

\begin{proposition}\label{pro:dec}
Set
\begin{equation}\label{eq:omega}
\omega_{x,y}:=
8 \eta^2\sum_{\substack{\alpha\in \mathcal{I}\\ \chi(\alpha)\supset\{x,y\}}}\|\ell_{\alpha}\|^2,\qquad x,y\in\bb Z^d
\end{equation}
and  assume there exists $\xi>0$ such that
\begin{align}
M_\xi &:= 
\sup_{y\in\bb Z^d}\sum_{x\in\bb Z^d}
\theta_{x,y}e^{\xi|x-y|}<\infty,
\label{eq:Meta}\\
\Omega_\xi &:=
\sup_{x,y\in\bb Z^d}\omega_{x,y}e^{\xi|x-y|}<\infty
.\label{eq:omegaeta}
\end{align}
Then, for each $\Lambda_1,\Lambda_2\in\Pfin$, 
$f_1\in\mc A_{\Lambda_1}$, $f_2\in\mc A_{\Lambda_2}$,
and $t\geq0$,
$$
\|\mc P_t(f_1f_2)-(\mc P_tf_1)(\mc P_tf_2)\|
\leq
\Omega_\xi
\frac{e^{2(M_\xi-\lambda_1)t}-1}{2(M_\xi-\lambda_1)}
e^{-\xi\,\mathrm{dist}(\Lambda_1,\Lambda_2)}
\vertiii{ f_1} \vertiii{ f_2}
.
$$
\end{proposition}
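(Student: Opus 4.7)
The plan is to write $G(t) := \mc P_t(f_1 f_2) - (\mc P_t f_1)(\mc P_t f_2)$ as a telescopic integral. Setting $\phi(s) := \mc P_{t-s}\bigl((\mc P_s f_1)(\mc P_s f_2)\bigr)$ for $s \in [0,t]$, one has $G(t) = -\int_0^t \phi'(s)\,ds$. A Leibniz-rule computation (legitimate since $\mc P_s f_i \in \mc A^1$ by Step~2 of the proof of Theorem \ref{thm1}) gives
\[
\phi'(s) = -\mc P_{t-s}\bigl[\mc L(g_1 g_2) - (\mc L g_1)g_2 - g_1(\mc L g_2)\bigr]\Big|_{g_i = \mc P_s f_i}.
\]
The Hamiltonian contribution $i[K_X,\cdot\,]$ is a derivation and drops out; only the Lindblad dissipators contribute, and a direct algebraic computation (expanding $[f_1 f_2,\ell_\alpha]$ and $[\ell_\alpha^*, f_1 f_2]$ by Leibniz and collecting terms) gives the defect
\[
\mc L(g_1 g_2) - (\mc L g_1) g_2 - g_1 (\mc L g_2) = 2\sum_{\alpha\in\mc I}[\ell_\alpha^*, g_1][g_2, \ell_\alpha].
\]

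I would then estimate each commutator via \eqref{eq:kbound} applied to $u \in \{\ell_\alpha^*,\ell_\alpha\}$ and $X = \chi(\alpha)$, obtaining $\|[\ell_\alpha^*, g_1]\| \le 2\eta\|\ell_\alpha\|\sum_{x\in\chi(\alpha)}\delta_x(g_1)$ and similarly for the other factor. Recollecting the sum over $\alpha$ by pairs $(x,y) \in \chi(\alpha)^2$ and recognizing the coefficient \eqref{eq:omega}, combined with the contractivity of $\mc P_{t-s}$, gives
\[
\|G(t)\| \leq \int_0^t \sum_{x,y\in\bb Z^d}\omega_{x,y}\,\delta_x(\mc P_s f_1)\,\delta_y(\mc P_s f_2)\, ds.
\]
Remark \ref{rem:ecellenza} provides $\delta(\mc P_s f_i) \leq e^{-\lambda_1 s}\, e^{s\Theta}\delta(f_i)$ pointwise, while the hypothesis \eqref{eq:omegaeta} yields $\omega_{x,y} \leq \Omega_\xi e^{-\xi|x-y|}$.

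To extract the distance factor $e^{-\xi\,\mathrm{dist}(\Lambda_1,\Lambda_2)}$ I would use a weighted $\ell_1$-bound on $e^{s\Theta}$. For fixed $x_0\in\bb Z^d$, the conjugate $M_w \Theta M_w^{-1}$ with $w(z) := e^{\xi|z-x_0|}$ has kernel bounded pointwise by $\theta_{x,y}e^{\xi|x-y|}$, so \eqref{eq:Meta} yields $\|M_w\Theta M_w^{-1}\|_{\ell_1\to\ell_1}\leq M_\xi$ \emph{uniformly} in $x_0$; applied to the indicator of $x_0$, this produces $\sum_x (e^{s\Theta})_{x,x_0}e^{\xi|x-x_0|}\leq e^{sM_\xi}$. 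Expanding $\delta(f_i)$ on its support $\Lambda_i$ and applying the triangle inequality $|x-y|\geq|x_1-x_2|-|x-x_1|-|y-x_2|$ to decouple the summation variables $x,y$ from $x_1,x_2$, one obtains
\[
\sum_{x,y}e^{-\xi|x-y|}(e^{s\Theta}\delta(f_1))_x(e^{s\Theta}\delta(f_2))_y \leq e^{2sM_\xi}\,e^{-\xi\,\mathrm{dist}(\Lambda_1,\Lambda_2)}\,\vertiii{f_1}\vertiii{f_2},
\]
and integrating $e^{-2\lambda_1 s}e^{2sM_\xi}$ on $[0,t]$ delivers the claimed bound. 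The main technical obstacle is precisely this weighted semigroup estimate: uniformity of the conjugation bound in $x_0$ is what permits the clean decoupling via the triangle inequality. The remainder is a mechanical combination of \eqref{eq:kbound} and Remark \ref{rem:ecellenza}.
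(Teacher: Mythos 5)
Your proposal is correct and follows essentially the same route as the paper: the Duhamel/telescoping identity with the defect $2\sum_\alpha[\ell_\alpha^*,g_1][g_2,\ell_\alpha]$, the bound via \eqref{eq:kbound} producing $\omega_{x,y}$, Remark~\ref{rem:ecellenza}, and the weighted $\ell_1$ estimate $\sup_y\sum_x(e^{s\Theta})_{x,y}e^{\xi|x-y|}\leq e^{sM_\xi}$ combined with the triangle inequality to extract $e^{-\xi\,\mathrm{dist}(\Lambda_1,\Lambda_2)}$ (the paper splits the argument into Lemma~\ref{lem:dec} plus a short final computation, and states the weighted bound directly rather than via conjugation, but these are the same estimates).
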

\begin{lemma}\label{lem:dec}
For any $f_1,f_2\in\mc A^1$ and $t\geq0$,
$$
\|\mc P_t(f_1f_2)-(\mc P_tf_1)(\mc P_tf_2)\|
\leq
\sum_{x,y\in\bb Z^d}
\omega_{x,y}\int_0^t ds
e^{-2\lambda_1 s}
(e^{s\Theta}\delta(f_1))_x
(e^{s\Theta}\delta(f_2))_y
.
$$
\end{lemma}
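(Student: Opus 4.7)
The strategy is to introduce the carré du champ operator
\begin{equation*}
\Gamma(g_1,g_2):=\mathcal L(g_1g_2)-(\mathcal Lg_1)g_2-g_1(\mathcal Lg_2),
\end{equation*}
express the difference $\mathcal P_t(f_1f_2)-(\mathcal P_tf_1)(\mathcal P_tf_2)$ as a Duhamel integral against $\Gamma(\mathcal P_sf_1,\mathcal P_sf_2)$, and then bound the integrand by combining the commutator estimate \eqref{eq:kbound} with the contraction estimate of Remark~\ref{rem:ecellenza}. The key algebraic observation is that the Hamiltonian contributions $i[k_\alpha,\,\cdot\,]$ are derivations and thus vanish on $\Gamma$, while each dissipative summand produces a product of two commutators involving $\ell_\alpha$ and $\ell_\alpha^*$.

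The first step is to carry out the elementary algebra to get
\begin{equation*}
\Gamma(g_1,g_2) \;=\; 2\sum_{\alpha\in\mathcal I}[\ell_\alpha^*,g_1]\,[g_2,\ell_\alpha].
\end{equation*}
Next, for $f_1,f_2\in\mathcal A^1$ and $t\geq0$ fixed, consider $h(s):=\mathcal P_{t-s}\big((\mathcal P_sf_1)(\mathcal P_sf_2)\big)$ for $s\in[0,t]$, which interpolates between $\mathcal P_t(f_1f_2)$ at $s=0$ and $(\mathcal P_tf_1)(\mathcal P_tf_2)$ at $s=t$. Formal differentiation yields $h'(s)=-\mathcal P_{t-s}\,\Gamma(\mathcal P_sf_1,\mathcal P_sf_2)$, hence
\begin{equation*}
\mathcal P_t(f_1f_2)-(\mathcal P_tf_1)(\mathcal P_tf_2)=\int_0^t\mathcal P_{t-s}\,\Gamma(\mathcal P_sf_1,\mathcal P_sf_2)\,ds.
\end{equation*}

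The third step is to estimate the norm of the integrand. Using contractivity of $\mathcal P_{t-s}$, the triangle inequality, and applying \eqref{eq:kbound} to each of $[\ell_\alpha^*,g_1]$ and $[g_2,\ell_\alpha]$ with $\ell_\alpha\in\mathcal A_{\chi(\alpha)}$, one gets
\begin{equation*}
\|\Gamma(g_1,g_2)\|\;\leq\;8\eta^2\!\sum_{\alpha\in\mathcal I}\|\ell_\alpha\|^2\!\!\sum_{x,y\in\chi(\alpha)}\!\!\delta_x(g_1)\delta_y(g_2)\;=\;\sum_{x,y\in\bb Z^d}\omega_{x,y}\delta_x(g_1)\delta_y(g_2),
\end{equation*}
after exchanging the order of summation and invoking the definition \eqref{eq:omega} of $\omega_{x,y}$. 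Specializing to $g_i=\mathcal P_sf_i$ and inserting the pointwise bound $\delta(\mathcal P_sf_i)\leq e^{-\lambda_1 s}e^{s\Theta}\delta(f_i)$ from Remark~\ref{rem:ecellenza} into the Duhamel identity yields precisely the claimed inequality.

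The main obstacle is justifying the Duhamel differentiation in the second step: one needs $\mathcal P_sf_1\cdot\mathcal P_sf_2$ to be (well-approximable by elements) in the domain of $\bar{\mathcal L}$, and one needs the series defining $\Gamma(\mathcal P_sf_1,\mathcal P_sf_2)$ to converge in $\mathcal A$. The absolute convergence of $\sum_\alpha2\,[\ell_\alpha^*,\mathcal P_sf_1][\mathcal P_sf_2,\ell_\alpha]$ follows from the bound in the third step, which is already finite under the assumption $M_\xi<\infty$ of Proposition~\ref{pro:dec} (in particular $\sup_y\sum_x\omega_{x,y}<\infty$). For the regularity issue, the clean route is to first establish the identity with $\mathcal L$ and $\mathcal P_t$ replaced by the bounded finite-volume surrogates $\mathcal L_{\Lambda_n}$ and $\mathcal P_t^{\Lambda_n}$, where everything is a finite sum on the finite-dimensional algebra $\mathcal A_{\Lambda_n}$ and the chain rule is routine, and then pass to the limit $\Lambda_n\uparrow\bb Z^d$ using Theorem~\ref{thm1}(iii) together with the uniform-in-$\Lambda_n$ majorant provided by Remark~\ref{rem:ecellenza}.
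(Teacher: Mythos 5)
Your proposal is correct and follows essentially the same route as the paper: the same identity $\mathcal L(g_1g_2)-(\mathcal Lg_1)g_2-g_1(\mathcal Lg_2)=2\sum_{\alpha}[\ell_\alpha^*,g_1][g_2,\ell_\alpha]$, the same Duhamel representation (the paper writes it as $F_t=\int_0^t\mathcal P_{t-s}G_s\,ds$ via variation of constants rather than via your interpolating function, but these are identical), and the same combination of \eqref{eq:kbound} with Remark~\ref{rem:ecellenza}. Your extra care about justifying the differentiation through finite-volume approximation is a sound way to fill in what the paper dispatches as ``direct computation.''
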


\begin{proof}
Set
$F_t=\mc P_t(f_1f_2)-(\mc P_tf_1)(\mc P_tf_2)$.
By direct computation,
$$
\frac{d}{dt} F_t
=
\mc LF_t+G_t
$$
where
$$
G_t
:=
\mc L((\mc P_tf_1)(\mc P_tf_2))
-(\mc L\mc P_tf_1)(\mc P_tf_2) -(\mc P_tf_1)(\mc L\mc P_tf_2)
\,.
$$
Since $F_0=0$ and $\mc P_t$ is a contraction on $\mathcal{A}$
we deduce
$$
\|F_t\|
=
\Big\|\int_0^t\mc P_{t-s}G_s\,ds\Big\|
\leq
\int_0^t\|G_s\|ds
\,.
$$
Given $g_1,g_2\in\mc A^1$, by direct computation,
\begin{equation*}
 \mc L(g_1g_2)
-(\mc Lg_1)g_2
-g_1(\mc Lg_2) 
=
2\sum_{\alpha\in\mc I}[\ell_{\alpha}^*,g_1][g_2,\ell_{\alpha}]
\end{equation*}
whose operator norm, using \eqref{eq:kbound}, is bounded by $\sum_{x,y}\omega_{x,y}\delta_x(g_1)\delta_y(g_2)$.
In view of Remark \ref{rem:ecellenza}, the statement follows by choosing $g_i=\mc P_tf_i$, $i=1,2$.
\end{proof}

\begin{proof}[{Proof of Proposition \ref{pro:dec}}]
Recalling \eqref{eqG}, assumption \eqref{eq:Meta} readily implies, for $s\geq0$,
$$
\sup_{y\in\bb Z^d}\sum_{x\in\bb Z^d}
\big(e^{s\Theta}\big)_{x,y}e^{\xi|x-y|}
\leq e^{sM_\xi} .
$$
This bound, together with assumption \eqref{eq:omegaeta}, yields
$$
\sum_{x',y'\in\bb Z^d}\omega_{x',y'}\big(e^{s\Theta}\big)_{x',x}\big(e^{s\Theta}\big)_{y',y}
\leq \Omega_\xi e^{2 M_\xi s-\xi |x-y|}.
$$
The statement now follows from Lemma \ref{lem:dec} and elementary computations.
\end{proof}

\begin{proof}[Proof of Theorem \ref{thm2}]
To prove (i), we observe that for $f\in\mc A^1$ the sequence $(\mc P_tf)_{t\geq0}$  is Cauchy in $\mc A$ as $t\to\infty$.
Indeed, for $t\geq s\geq 0$ 
\begin{equation*}
  \mc P_tf-\mc P_sf
=
\int_s^t\mc L\mc P_rf
\,dr,
\end{equation*}
so that
\begin{equation}\label{eq:t-s}
\begin{split}
& \|\mc P_tf-\mc P_sf\|
\leq
\int_s^t\|{\mc L}\mc P_rf\|\,dr
\leq
C_0\int_s^t\vertiii{\mc P_rf}\,dr \\
& \leq
C_0\int_s^te^{-(\lambda_1-M)r}\vertiii{ f}\,dr
=
\frac{C_0}{\lambda_1-M} \big( e^{-(\lambda_1-M)s}-e^{-(\lambda_1-M)t} \big) \vertiii{ f}
\,,
\end{split}
\end{equation}
where we used Lemma~\ref{lem:LX} for the second inequality
and Theorem \ref{thm1}(iv) for the third one.
Let then $f_\infty=\lim_{t\to\infty}\mc P_t f\,\in\mc A$.
We claim that $f_\infty\in\bb C\ind$.
Indeed, by Lemma~\ref{lem:exh} and Theorem~\ref{thm1}(iv)
$E_{x,h} f_\infty=0$, $x\in\bb Z^d$, $h\in\{1,\dots,N\}$ which, as observed below \eqref{3norm}, yields $f_\infty\in\bb C\ind$.

Let $\pi\colon\mc A^1\to\bb C$ be defined by $f_\infty=\pi(f)\ind$.
We claim that $\pi$ extends to a state $\pi$ on $\mc A$.
The map $\pi$ is obviously linear and, since $\mc P_t$ is a contraction,
$|\pi(f)|=\lim_{t\to\infty}\|\mc P_t f\|\leq\|f\|$.
Hence, $\|\pi\|_{\mc A'}\leq 1$.
This bound implies that the map $\pi$ can be extended to a continuous 
linear functional on $\mc A$.
Furthermore $\pi(\ind)=1$.
It remains to show that $\pi(f)\geq0$ for $f\geq0$.
Since $(\mc P_t)_{t\geq0}$ is a QFS, we have that $\mc P_t f\geq0$,
and passing to the limit $t\to\infty$ we obtain $f_\infty\geq0$,
which implies $\pi(f)\geq0$.
The semigroup property and the definition of $\pi$ imply
that $\pi(\mc P_t f)=\pi(f)$ for any $t\geq0$ and $f\in\mc A^1$, 
hence $\pi$ is a stationary state.

To show uniqueness, if $\nu$ is a stationary state, for $f\in\mc A^1$ we have
$\nu(f)=\nu(\mc P_tf)$ which in the limit $t\to+\infty$ yields $\nu(f)=\nu(\pi(f)\ind)=\pi(f)$.
Hence $\nu=\pi$ since $\mc A^1$ is dense in $\mc A$.

To prove (ii) it suffices to take the limit $t\to\infty$ in \eqref{eq:t-s}.

Finally, we prove claim (iii). By the triangle inequality, for each $t\geq0$,
\begin{align*}
& |\pi(f_1f_2)-\pi(f_1)\pi(f_2)|
\leq
\|\mc P_t(f_1f_2)-\pi(f_1f_2)\ind\|
+
\|\mc P_tf_1-\pi(f_1)\ind\|\|\mc P_tf_2\|
\\
&\qquad
+ |\pi(f_1)| \|\mc P_tf_2-\pi(f_2)\ind\| 
+
\|\mc P_t(f_1f_2)-(\mc P_tf_1)(\mc P_tf_2)\| \\
&
\leq
Ce^{-(\lambda_1-M)t}\big(\vertiii{ f_1f_2}+\vertiii{ f_1} \|f_2\|+\|f_1\|\vertiii{ f_2}\big)
+\|\mc P_t(f_1f_2)-(\mc P_tf_1)(\mc P_tf_2)\|
\,,
\end{align*}
where we have used (ii) and have set $C=C_0/(\lambda_1-M)$.
The bound \eqref{dugv} implies
$$
\vertiii{f_1f_2}
\leq
N\eta^2\big(\vertiii{ f_1} \|f_2\|+\|f_1\|\vertiii{ f_2}\big)
\leq
N\eta^2\big(\|f_1\|+\vertiii{ f_1}\big)\big(\|f_2\|+\vertiii{ f_2}\big).
$$
Note that conditions \eqref{eq:Meta} and \eqref{eq:omegaeta} trivially hold for any $\xi>0$ 
by the finite range assumptions.
Moreover, denoting by $R$ the range,
we have 
$M_\xi\leq Me^{R\xi}$.
We can thus apply Proposition \ref{pro:dec} which yields
\begin{align*}
& |\pi(f_1f_2)-\pi(f_1)\pi(f_2)| \\
& \leq
\frac{C_\xi}2\Big(
e^{-(\lambda_1-M)t}
+
e^{2Me^{R\xi} t-\xi\,\mathrm{dist}(\Lambda_1,\Lambda_2)}
\Big)
\big(\|f_1\|+\vertiii{ f_1}\big)\big(\|f_2\|+\vertiii{ f_2}\big),
\end{align*}
where
$C_\xi=2N\eta^2\max\big\{C,\Omega_\xi/(2Me^{R\xi})\big\}$.

By choosing $t=\xi\big(\lambda_1-M+2Me^{R\xi})^{-1}\mathrm{dist}(\Lambda_1,\Lambda_2)$, we deduce (iii) with $C=C_\xi$ and $\zeta=(\lambda_1-M)\xi\big(\lambda_1-M+2Me^{R\xi}\big)^{-1}$.
\end{proof}

\subsection{Convergence of the specific quantum one-Wasserstein distance}
\label{sub:3}
We first recall the dual formulation of the quantum one-Wasserstein distance  $W_\Lambda$ in terms of a Lipschitz seminorm proven in  \cite[Prop. 8]{DD0}.
Given $\Lambda\in\Pfin$ we introduce the Lipschitz seminorm $\vertiii{\cdot}_{\Lambda,\mathrm{Lip}}$ on $\mathcal{A}_\Lambda$ as
$\vertiii{f}_{\Lambda,\mathrm{Lip}}:=\sup_{x\in\Lambda}\theta_x(f)$ where $  \theta_x(f):=2\inf_{g\in\mathcal{A}_{\Lambda\setminus\{x\}}}\|f-g\|$.
Then the quantum one-Wasserstein distance on the set of states on $\mathcal{A}_\Lambda$ can be represented as 
\begin{equation}\label{W=}
  W_{\Lambda}(\mu,\nu)
  =\sup_{\vertiii{f}_{\Lambda,\mathrm{Lip}}\leq 1}|\mu(f)-\nu(f)|.
\end{equation}

\begin{lemma}\label{appena}
  For each $\Lambda\in\Pfin$ and $f\in\mathcal{A}_\Lambda$
  \begin{equation*}
    \vertiii{f}
    \leq
    \frac{1}{2}N\eta|\Lambda|\,\vertiii{f}_{\Lambda,\mathrm{Lip}}.
  \end{equation*}
\end{lemma}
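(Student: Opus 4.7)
The plan is to exploit two facts about the operator $E_{x,h}$ for $h\in\{1,\dots,N\}$: it annihilates any observable that is constant at site $x$, and it is bounded by $\eta$ on $\mathcal{A}$ by Lemma \ref{lem:exh}. The first fact holds because the vacuum $\omega_0$ corresponds to $e_0=\id_H$, which is orthogonal in the GNS inner product $\langle\,\cdot\,,\cdot\,\rangle_\rho$ to every $e_h$ with $h\ge 1$; so if $g\in\mathcal{A}_\Lambda$ has trivial tensor factor at $x$, i.e.\ $g_x=\id_H$ in the monomial representation, then $\langle e_h, g_x\rangle_\rho=0$ and hence $E_{x,h}g=0$ by \eqref{eq:otimes} and density.

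First I would reduce the sum in the definition \eqref{3norm}. Since $f\in\mathcal{A}_\Lambda$ has trivial factor at every $x\notin\Lambda$, the previous observation yields $E_{x,h}f=0$ for all $x\notin\Lambda$ and $h\in\{1,\dots,N\}$. Therefore
\begin{equation*}
  \vertiii{f}=\sum_{x\in\Lambda}\sum_{h=1}^N\|E_{x,h}f\|.
\end{equation*}

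Next I would bound each term $\|E_{x,h}f\|$ in terms of $\theta_x(f)$. Fix $x\in\Lambda$ and $h\in\{1,\dots,N\}$. For any $g\in\mathcal{A}_{\Lambda\setminus\{x\}}$, the same argument gives $E_{x,h}g=0$, so by linearity and Lemma \ref{lem:exh},
\begin{equation*}
  \|E_{x,h}f\|=\|E_{x,h}(f-g)\|\le \eta\,\|f-g\|.
\end{equation*}
Taking the infimum over $g\in\mathcal{A}_{\Lambda\setminus\{x\}}$ yields
\begin{equation*}
  \|E_{x,h}f\|\le \eta\inf_{g\in\mathcal{A}_{\Lambda\setminus\{x\}}}\|f-g\|=\tfrac{\eta}{2}\,\theta_x(f).
\end{equation*}

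Finally I would sum over $h\in\{1,\dots,N\}$ and $x\in\Lambda$ and bound $\theta_x(f)$ by $\vertiii{f}_{\Lambda,\mathrm{Lip}}=\sup_{x\in\Lambda}\theta_x(f)$:
\begin{equation*}
  \vertiii{f}\le \sum_{x\in\Lambda}\sum_{h=1}^N\tfrac{\eta}{2}\theta_x(f)\le\tfrac{1}{2}N\eta|\Lambda|\,\vertiii{f}_{\Lambda,\mathrm{Lip}}.
\end{equation*}
There is no real obstacle here; the only subtle point is the vanishing property $E_{x,h}g=0$ for $g$ with trivial factor at $x$, which is immediate from the orthogonality $\langle e_h,\id_H\rangle_\rho=0$ for $h\ge 1$ together with \eqref{eq:otimes}.
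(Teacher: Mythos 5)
Your proof is correct and follows essentially the same route as the paper: the identity $E_{x,h}f=E_{x,h}(f-g)$ for $g\in\mathcal{A}_{\Lambda\setminus\{x\}}$ combined with the bound $\|E_{x,h}\|_{\mathcal{A}\to\mathcal{A}}\le\eta$ from Lemma~\ref{lem:exh} gives $\|E_{x,h}f\|\le\eta\,\theta_x(f)/2$, and summing over $h$ and $x\in\Lambda$ concludes. Your additional justification that $E_{x,h}f=0$ for $x\notin\Lambda$ (via $\langle e_h,\id_H\rangle_\rho=0$ for $h\ge1$) is a correct and welcome piece of bookkeeping that the paper leaves implicit.
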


\begin{proof}
  Since $E_{x,h}f=E_{x,h}(f-g)$ for any $h\in\{1,\dots,N\}$ and $g\in\mathcal{A}_{\Lambda\setminus\{x\}}$, by Lemma~\ref{lem:exh}, we deduce $\|E_{x,h}f\|\leq\eta\theta_x(f)/2$.
  The statement follows.
\end{proof}

\begin{proof}[Proof of Theorem~\ref{thm3}]
  Since $\mathcal{L}$ is translation covariant, the QFS $(\mathcal{P}_t)_{t\geq0}$ has a translation invariant stationary state $\pi$, which is unique by Theorem~\ref{thm2}(i).
  Recalling \eqref{w=}, for each $\mu\in\mathcal{S}_\tau$ and $t\geq0$,
  \begin{align*}
    w(\mu\mathcal{P}_t,\pi)
    &=\sup_{\Lambda\in\Pfin}\frac{1}{|\Lambda|}\sup_{\vertiii{f}_{\Lambda,\mathrm{Lip}}\leq1}\big|\mu\big(\mathcal{P}_tf-\pi(f)\ind\big)\big|\\
    &\leq \sup_{\Lambda\in\Pfin}\frac{1}{|\Lambda|}\sup_{\vertiii{f}_{\Lambda,\mathrm{Lip}}\leq1}\big\|\mathcal{P}_tf-\pi(f)\ind\big\|.
  \end{align*}
  The stated bound, with $C=C_0(\lambda_1-M)^{-1}N\eta/2$, now follows directly from Theorem~\ref{thm2}(ii) and Lemma~\ref{appena}.
\end{proof}

\section{
Interacting qudits: examples
} 
\label{s:2-d}

To exemplify the abstract theory developed before, we next introduce simple dissipative quantum lattice systems and discuss when the perturbative criterion for ergodicity in Theorem~\ref{thm2} can be applied.

\subsection{Quantum spin systems: application of the perturbative criterion}
\label{sec:qss}

We consider a class of QFS with purely dissipative Lindblad generators $\mathcal{L}$  and show how the decomposition $\mathcal{L}=\mathcal{L}_0+\mathcal{L}_1$ can be achieved.
We focus on the case of translation covariant interactions with finite range.

Let $H=\bb C^2$, $A=\mathcal{B}(\bb C^2)$ and denote by $\mathcal{H}$ and $\mathcal{A}$ the Hilbert space and the $C^*$-algebra constructed in Section~\ref{sec:udq}.
Let $\mathcal{I}=\bb Z^d\times\{1,2,3\}$ and $\chi\colon\mathcal{I}\to \Pfin$ be the map  $(x,j)\mapsto\{y\colon |x-y|\leq R\}=:B_R(x)$.
Consider the informal Lindblad generator
\begin{equation}
  \label{eq:qss}
  \mathcal{L}
  =\sum_{\alpha\in\mathcal{I}}\big(\ell_\alpha^*[\,\cdot\,,\ell_\alpha]+[\ell_\alpha^*,\,\cdot\,]\ell_\alpha^*\big),
\end{equation}
where the jump operators $\ell_\alpha$ satisfy $\ell_{(x,j)}\in\mathcal{A}_{B_R(x)}$ and are translation covariant. 
As we will next show, under suitable conditions on these operators, Theorems \ref{thm1} and \ref{thm2} can be applied to deduce that the graph norm closure of $\mathcal{L}$ generates an ergodic QFS.

We denote by $\sigma_j$, $j=1,2,3$ the Pauli matrices
\begin{equation*}
  \sigma_1
  =
  \begin{pmatrix}
    0& 1\\
    1& 0
  \end{pmatrix},
  \quad
    \sigma_2
  =
  \begin{pmatrix}
    0& -i\\
    i& 0
  \end{pmatrix},
  \quad
    \sigma_3
  =
  \begin{pmatrix}
    1& 0\\
    0& -1
  \end{pmatrix}
\end{equation*}
and by $\sigma_{x,j}$, $j=1,2,3$, the corresponding elements of $\mathcal{A}_{\{x\}}$.

Fix $x_0\in\bb Z^d$, let $\mathcal{D}_{\{x_0\}}\subset\mathcal{A}_{\{x_0\}}$ be the real linear span of $\{\ind,\sigma_{x_0,3}\}$ and set
\begin{equation}\label{eq:inf}
  \begin{split}
      \Delta_{1,2}&=\min\big\{\|\sigma_{x_0,1}g-\ell_{(x_0,1)}\|+\|\sigma_{x_0,2}g-\ell_{(x_0,2)}\|\colon g\in\mathcal{D}_{\{x_0\}}\big\},\\
  \Delta_3&=\min\big\{\|g-\ell_{(x_0,3)}\|\colon g\in\mathcal{D}_{\{x_0\}}\big\}.
  \end{split}
\end{equation}
Next, choose
\begin{align*}
  a&\in\mathrm{argmin}\big\{\|\sigma_{x_0,1}g-\ell_{(x_0,1)}\|+\|\sigma_{x_0,2}g-\ell_{(x_0,2)}\|\colon g\in\mathcal{D}_{\{x_0\}}\big\},\\
  b&\in\mathrm{argmin}\big\{\|g-\ell_{(x_0,3)}\|\colon g\in\mathcal{D}_{\{x_0\}}\big\}.
\end{align*}
By translation covariance, $\Delta_{1,2}$, $\Delta_{3}$ do not depend on $x_0$.
Furthermore, identifying $\mathcal{A}_{\{x_0\}}$ with $A=\mathcal{B}(\bb C^2)$, there exist reals $\alpha_0,\alpha_1,\beta_0,\beta_1$ such that
\begin{equation*}
  a=
  \begin{pmatrix}
    \alpha_0 & 0 \\
    0 & \alpha_1
  \end{pmatrix},
  \qquad
    b=
  \begin{pmatrix}
    \beta_0 & 0 \\
    0 & \beta_1
  \end{pmatrix}.
\end{equation*}
Hereafter, we assume $\alpha_0^2+\alpha_1^2>0$.

Introduce on  $A$ the Lindblad generator
\begin{equation}\label{eq:Lin1}
  L_0
  =\sum_{j=1}^2\big(a \sigma_j [\,\cdot\,,\sigma_ja]+[a \sigma_j, \,\cdot\,]\sigma_ja\big)+
  b\sigma_3 [\,\cdot\,,\sigma_3b]+[b \sigma_3, \,\cdot\,]\sigma_3b.
\end{equation}
By direct computation, the Lindblad generator $L_0$ is self-adjoint with respect to the GNS inner product induced by the density matrix
\begin{equation*}
  \rho
  =\frac{1}{\alpha_0^2+\alpha_1^2}
  \begin{pmatrix}
    \alpha_0^2 & 0 \\
    0 & \alpha_1^2
  \end{pmatrix}.
\end{equation*}
The eigenvalues of $-L_0$ are $\lambda_0=0$, $\lambda=4(\alpha_0^2+\alpha_1^2)$, and $\mu=2(\alpha_0^2+\alpha_1^2)+(\beta_0-\beta_1)^2$ with multiplicity 2.
The corresponding normalized eigenvectors can be chosen as $e_0=\id_{\bb C^2}$,
\begin{equation}\label{eq:eigen1}
  e_\lambda
  =
    \begin{pmatrix}
      \tfrac{\alpha_0}{\alpha_1}    & 0 \\
    0 & -\tfrac{\alpha_1}{\alpha_0}
  \end{pmatrix},
    \quad
  e_{\mu^+}
  =\sqrt{1+\tfrac{\alpha_0^2}{\alpha_1^2}}
    \begin{pmatrix}
    0 & 1 \\
    0 & 0
  \end{pmatrix},
     \quad
  e_{\mu^-}
  =\sqrt{1+\tfrac{\alpha_1^2}{\alpha_0^2}}
    \begin{pmatrix}
    0 & 0 \\
    1 & 0
  \end{pmatrix}.
\end{equation}
In particular, the spectral gap of $-L_0$ is $\lambda_1=\lambda\wedge\mu$.
Moreover, again by direct computation, $\eta=\max\{\|e_\lambda\|_A, \|e_{\mu^+}\|_A, \|e_{\mu^-}\|_A\}=(\alpha_0/\alpha_1)\vee (\alpha_1/\alpha_0)$.

The decomposition in Section~\ref{sec:mai} is then achieved as follows.
Set  $\mathcal{I}_0=\mathcal{I}=\bb Z^d\times\{1,2,3\}$ and let $\imath\colon \mathcal{I}_0\to\mathcal{I}$ be the identity map.
Denoting by $a_x,b_x$ the elements in $\mathcal{A}_{\{x\}}$ corresponding to $a,b$ via the identification of $A$ with $\mathcal{A}_{\{x\}}$, we then set $\ell^0_{(x,j)}=\sigma_{x,j}a_x$, $j=1,2$, $\ell^0_{(x,3)}=b_x$, and $\ell^1_{\alpha}=\ell_{\alpha}-\ell_{\alpha}^0$, $\alpha\in\mathcal{I}$.
By construction
\begin{equation*}
  \mathcal{L}
  =\sum_{\alpha\in\mathcal{I}_0}\big(
{\ell^0_{\alpha}}^*[\, \cdot \, ,\ell^0_{\alpha}] +
[{\ell_{\alpha}^0}^*,\,\cdot\, ]\ell_{\alpha}^0
\big)+
\sum_{\alpha\in\mathcal{I}}L^1_\alpha
\end{equation*}
in which $L_\alpha^1$ is given by the right-hand side of \eqref{L1a} with $k_\alpha=0$.

Recalling \eqref{eq:C0} and \eqref{a.6}, by few trite computations we get $C_0=4\eta\sum_{j=1}^3\|\ell_{0,j}\|$ and
\begin{equation*}
  M\leq
    72\eta^2(\eta^2+1)(2R+1)^d\big(2(|\alpha_0|\vee |\alpha_1|)\Delta_{1,2}+2(|\beta_0|\vee |\beta_1|)\Delta_3+\Delta^2_{1,2}+\Delta^2_3\big).
\end{equation*}
In particular, the QFS generated by $\mathcal{L}$ is ergodic if $\Delta_{1,2}$  and $\Delta_3$ are small enough compared to the spectral gap $\lambda_1$.

\subsection{Quantum spin systems: conjugation with classical spin systems}
\label{sec:qss1}

According to the terminology in \cite[Ch.III]{Li}, a (classical) spin system is a Markov semigroup $(\mathcal{P}_t^\mathrm{cl})_{t\geq0}$ on the commutative $C^*$-algebra $C(\Omega)$ of the continuous $\bb C$-valued functions on $\Omega:=\{-1,1\}^{\bb Z^d}$ whose generator acts on local functions by
\begin{equation*}
  (\mathcal{L}_\mathrm{cl}f_\mathrm{cl})(\sigma)
  =\sum_{x\in\bb Z^d}c_x(\sigma)[f_\mathrm{cl}(\sigma^x)-f_\mathrm{cl}(\sigma)],
  \qquad\sigma\in\Omega,
\end{equation*}
where $c_x\colon\Omega\to[0,\infty)$ is the flip rate and $\sigma^x$ is the configuration in which the spin $\sigma$ is flipped at site $x$.
Many popular models like the stochastic Ising model, the contact process and the voter model are examples of spin systems.
Provided the flip rates $c_x$ satisfy suitable conditions, the semigroup generated by $\mathcal{L}_\mathrm{cl}$ is ergodic.
On the other hand, for particular choices of the flip rates ergodicity fails, i.e.\ the stationary state is not unique.
We refer to \cite{Li} for the details of both situations.

When the jump rates $c_x$ have finite range, we next construct a QFS $(\mathcal{P}_t)_{t\geq0}$ whose action on a commutative subalgebra $\mathcal{D}$ of $\mathcal{A}$ is conjugate to the one of $(\mathcal{P}_t^\mathrm{cl})_{t\geq0}$, thus providing, for non-ergodic $(\mathcal{P}_t^\mathrm{cl})_{t\geq0}$, examples of non-ergodic QFS.
The corresponding Lindblad generator has the form considered in the previous section, see in particular \eqref{eq:qss}.
Let $\mathcal{D}\subset\mathcal{A}$ be the commutative $C^*$-subalgebra of $\mathcal{A}$ generated by $\sigma_{x,3}$, $x\in\bb Z^d$.
Consider the $C^*$-algebra isomorphism $\imath\colon C(\Omega)\to\mathcal{D}$ defined by $\imath(f_\mathrm{cl})=f_\mathrm{cl}(\{\sigma_{x,3}\}_{x\in\bb Z^d})$.
By choosing $\ell_{x,j}=\imath(\sqrt{c_x})\sigma_{x,j}$, $j=1,2$ and $\ell_{x,3}\in\mathcal{D}$, a direct computation shows that $\mathcal{L}\circ\imath=\imath\circ\mathcal{L}_\mathrm{cl}$.
Hence the QFS $(\mathcal{P}_t)_{t\geq0}$ generated by $\mathcal{L}$ leaves invariant $\mathcal{D}$ and its action on $\mathcal{D}$ is conjugated to the one of $(\mathcal{P}_t^\mathrm{cl})_{t\geq0}$.

\subsection{$\boldsymbol{XYZ}$-model with site dissipation}
\label{sec:app}

In this section we consider a Heisenberg perturbation induced by the $XYZ$-Hamiltonian of non-interacting dissipative spins.
We show that, if the interaction parameters are small enough the resulting evolution is ergodic.

As in the previous sections, let $H=\bb C^2$, $A=\mathcal{B}(H)$, and $\sigma_j$, $j=1,2,3$ be the Pauli matrices.
The one-qubit unperturbed Lindblad generator is
\begin{equation*}
  L_0
  =\frac{1}{4}\sum_{j=1}^2\big( \sigma_j [\,\cdot\,,\sigma_j]+[\sigma_j, \,\cdot\,]\sigma_j\big)
\end{equation*}
which is self-adjoint with respect to the GNS inner product induced by the state $\rho=(1/2)\id_H$.
Note that this generator is a particular case of the one introduced in \eqref{eq:Lin1} when $a=\id_H$ and $b=0$.
In particular, the eigenvalues of $-L_0$ are $\lambda_0=0$, $\lambda_1=\lambda_2=2$, $\lambda_3=4$, with corresponding normalized eigenvectors given by \eqref{eq:eigen1} with $\alpha_0=\alpha_1=1/2$ and $\beta_0=\beta_1=0$
\begin{equation*}
  e_0=\id_H,\quad
  e_1=\sqrt{2}
  \begin{pmatrix}
    0 & 1 \\
    0 & 0
  \end{pmatrix}
  ,\quad
  e_2=\sqrt{2}
  \begin{pmatrix}
    0 & 0 \\
    1 & 0
  \end{pmatrix},
  \quad
  e_3=\sigma_3.
\end{equation*}
Hence $\eta=\max_{j\in\{1,2,3\}}\|e_j\|_{H\to H}=\sqrt{2}$.
As in Section~\ref{sec:udq}, we denote by $\mathcal{L}_0$ the Lindblad generator in which each qubit evolves independently according to $L_0$, see equation \eqref{mcL0}.

Given $J_j\in\bb R$, $j=1,2,3$, let $k_{X,j}=J_j\sigma_{x,j}\sigma_{y,j}$ if $X=\{x,y\}$ with $|x-y|=1$, and $k_{X,j}=0$ otherwise.
Here we understand, as in the previous section, $\sigma_{x,j}=\sigma_j\otimes\id_{\mathcal{H}_{\{x\}^c}}$, $j=1,2,3$.
The $XYZ$ model is then defined in terms of the informal Hamiltonian $\mathcal{K}=\sum_{X\in\Pfin}\sum_{j=1}^3k_{X,j}$.
Accordingly, we set $\mathcal{L}=\mathcal{L}_0+\mathcal{L}_1$ where $\mathcal{L}_1$ is the informal Heisenberg operator $\mathcal{L}_1=i[\mathcal{K},\,\cdot\,]$.
To fit this case in the general framework of Section~\ref{sec:diq}, set $\mathcal{I}=\Pfin\times \{1,2,3\}$, let $\chi\colon\mathcal{I}\to\Pfin$ be the projection on the first coordinate, $k_\alpha$ as defined above, $\ell_{(\{x\},j)}=(1/2)\sigma_{x,j}$, $x\in\bb Z^d$, $j=1,2$, and $\ell_\alpha=0$ otherwise.
Note that the family $\{k_\alpha,\ell_\alpha\}_{\alpha\in\mathcal{I}}$ is translation covariant and has range $R=1$.

As by definition $\mathcal{L}=\mathcal{L}_0+\mathcal{L}_1$, to apply the results of Section~\ref{sec:mai} it is enough to set $\mathcal{I}_0=\bb Z^d\times\{1,2\}$ with $\imath\colon\mathcal{I}_0\to\mathcal{I}$ given by $\imath(x,j)=(\{x\},j)$.
By direct computation $\|k_{(\{x,y\},j)}\|\leq|J_j|$ for $|x-y|=1$ and therefore the constant in \eqref{eq:C0} satisfies $C_0\leq 2\sqrt{2}(1+2d|J|)$, where $|J|=\sum_{j=1}^3|J_j|$.
Recalling \eqref{a.6}, few trite computations yield $M\leq 96\sqrt{2}d|J|$.
In particular, by Theorem~\ref{thm2}, the QFS generated by $\mathcal{L}$ is ergodic whenever $|J|<(48\sqrt{2}d)^{-1}$.

\section{
Interacting fermions
} 
\label{s:2-b}

In this section we consider quantum lattice systems described in terms of ferm\-io\-nic operators satisfying the \emph{canonical anticommutation relations} (CAR).
The unperturbed dynamics is given by the Fermi Ornstein-Uhlenbeck semigroup, while the interaction will be expressed as a superposition of local generators.

\subsection{Canonical anticommutation relations and fermionic $\boldsymbol{C^*}$-algebra}
\label{sec:car}

Referring e.g.\ to \cite{Mey} for the abstract setting of Clifford algebras,
we next introduce a family of operators satisfying the CAR in a concrete representation that describes fermions 
on the whole lattice $\bb Z^d$.

Let $\mathcal{H}$ be the Hilbert space with complete orthonormal system $\{e_X\}_{X\in\Pfin}$, where we recall that $\Pfin$ denotes the family of the finite subsets of $\bb Z^d$.
For $\Lambda\subset\bb Z^d$ we also consider the subspace
$\mc H_\Lambda$ spanned by $\{e_X\}_{X\subset\Lambda}$.
If $\Lambda$ is finite, then $\mc H_{\Lambda}$ has dimension $2^{|\Lambda|}$.
Clearly, $\mc H^0=\bigcup_{\Lambda\in\Pfin}\mc H_\Lambda$ is dense in $\mc H$.
Moreover $\mc H_{\Lambda_1\sqcup\Lambda_2}\simeq\mc H_{\Lambda_1}\otimes\mc H_{\Lambda_2}$, where $\sqcup$ denotes the union of  disjoint sets.

Denote by $\mc B(\mc H)$ the $C^*$-algebra of bounded operators on $\mc H$ and by $\|\cdot\|$ the corresponding norm.
As in the previous sections $\ind\in\mc B(\mc H)$ is the identity.
Fix a total order $\leq$ on $\bb Z^d$ and  for $x\in\bb Z^d$ define $a_x\in\mc B(\mc H)$ by
\begin{equation*}
  a_xe_X
  =\ind_X(x)(-1)^{|\{y\in X\colon y<x\}|}e_{X\setminus\{x\}}
 ,\qquad
 X\in\Pfin
\end{equation*}
where $\ind_X$ denotes the indicator function of the set $X$.
Accordingly,
\begin{equation*}
  a_x^*e_X
  =\ind_{\bb Z^d\setminus X}(x)(-1)^{|\{y\in X\colon y<x\}|}e_{X\cup\{x\}}
 ,\qquad
 X\in\Pfin.
\end{equation*}
By direct computations, the family $\{a_x,a_x^*\}_{x\in\bb Z^d}$ satisfies the CAR, i.e.\
\begin{equation}\label{b-car}
  \{a_x,a_y\}=\{a^*_x,a^*_y\}=0, \qquad
  \{a_x,a^*_y\}=\delta_{x,y} \ind
\end{equation}
where $\{a,b\}=ab+ba$ is the anticommutator of $a$ and $b$.
Let $n_x=a_x^*a_x$, $x\in\bb Z^d$, be the fermionic number operators.
These operators are pairwise commuting, self-adjoint, satisfy $n_x^2=n_x$, and act on $\mathcal{H}$ as $n_xe_X=\ind_X(x)e_X$.

For $\Lambda\in\Pfin$, let $\mc A_\Lambda\subset\mc B(\mc H)$ be the subalgebra generated by $\{a_x,a^*_x\}_{x\in\Lambda}$ and  set $\mc A^0:=\bigcup_{\Lambda\in\Pfin}\mc A_\Lambda$.
Noticing that $\mc A^0$ is a $*$-subalgebra of $\mc B(\mc H)$, we finally let $\mc A$ be the norm closure of $\mc A^0$.
In particular, $\mc A$ is a unital $C^*$-subalgebra of $\mc B(\mc H)$.
In fact, it is a proper subalgebra as, for example, the translation operators $\tau_x$, $x\in\bb Z^d$, 
defined by $\tau_x(e_X)=e_{X+x}$, lie in $\mc B(\mc H)$ but not in $\mc A$.
As in the previous section, we denote by $\mathcal{S}$ the collection of states on $\mathcal{A}$.

Note that the Hilbert space $\mathcal{H}$ constructed above can be identified with the one presented in Section~\ref{sec:udq} when $H=\bb C^2$.
However, under this  identification, the $C^*$-algebra $\mathcal{A}$ defined here does not coincide with the $C^*$-algebra $\mathcal{A}$ defined in Section~\ref{sec:udq}, since the fermionic operators $a_x$, $a_x^*$ are not local.

In order to define the dynamics, we next introduce another family of operators $\{v_x,v_x^*\}_{x\in\bb Z^d}$ satisfying the CAR. They will have the property that $v_x$ and $a_y$ commute  for $x\neq y$.
To this end, let $w$ be the self-adjoint and unitary element of $\mc B(\mc H)$ given by
$$
w e_X=(-1)^{|X|}e_X
,\qquad X\in\Pfin.
$$
The operator $w$ is usually referred to as the main automorphism or sign operator \cite{Mey}.
Observe that it does not belong to $\mathcal{A}$ and 
\begin{equation}\label{b-wa}
wa_x=-a_xw,\qquad wa_x^*=-a_x^*w,\qquad x\in\bb Z^d.
\end{equation}
Hence, letting $\Ad_w(f)=wfw$, $f\in\mc B(\mc H)$, for each $\Lambda\in\Pfin$, the subalgebra $\mc A_\Lambda$ is left invariant by $\Ad_w$.
In particular, $\Ad_w$ defines an outer automorphism of $\mc A$.

Define also the operators in $w\mc A\subset\mc B(\mc H)$
\begin{equation}\label{b-wv}
v_x=wa_x\,\,,\qquad
v_x^*=a_x^*w
\,,\qquad x\in\bb Z^d\,.
\end{equation}
Readily, also the family $\{v_x,v_x^*\}_{x\in\bb Z^d}$ satisfies the CAR.
Moreover, by direct computations,
\begin{equation}
  \label{b-wv1}
  [v_x,a_y]=[v_x^*,a_y^*]=0,\qquad
    [v_x,a_y^*]=-[v_x^*,a_y]=\delta_{x,y}w
    \,,\qquad x,y\in\bb Z^d\,.
\end{equation}
For $\Lambda\in\Pfin$ we let $\mc V_{\Lambda}$ be the finite-dimensional
$C^*$-subalgebra of $\mc B(\mc H)$ generated by $\{v_x,\,v^*_x\}_{x\in\Lambda}$.
As for the algebra $\mc A$, set 
$\mc V^0:=\bigcup_{\Lambda\in\Pfin}\mc V_\Lambda\subset\mc B(\mc H)$, 
and let $\mc V$ be its norm closure.
As follows from \eqref{b-wv1}, for disjoint $X,Y\in\Pfin$  the algebras $\mathcal{V}_{X}$ and $\mathcal{A}_Y$ commute, that is
\begin{equation}
  \label{eq:5}
  [u,f]=0,\quad \text{for any $u\in \mathcal{V}_{X}$, $f\in\mathcal{A}_Y$ with $X\cap Y=\emptyset$.}
\end{equation}

For $\Lambda\in\Pfin$, the algebras $\mc A_{\Lambda}$ and $\mc V_{\Lambda}$ are $\bb Z/2\bb Z$-graded, by letting $\deg(a_x)=\deg(a_x^*)=1$ and $\deg(v_x)=\deg(v_x^*)=1$, respectively.
Hence, we have the decompositions
$\mc A_{\Lambda}=\mc A_{\Lambda,0}\oplus\mc A_{\Lambda,1}$
and
$\mc V_{\Lambda}=\mc V_{\Lambda,0}\oplus\mc V_{\Lambda,1}$,
where $\mc A_{\Lambda,p}$ and $\mc V_{\Lambda,p}$
are the subspaces of parity $p\in\bb Z/2\bb Z$, respectively.
In view of \eqref{b-wa},
$\mc V_{\Lambda,0}=\mc A_{\Lambda,0}$ 
and $\mc V_{\Lambda,1}=w\mc A_{\Lambda,1}$.

Following \cite{CM1,G} we introduce a gradient structure induced by the fermionic operators.
Let $\partial_x,\,\bar{\partial}_x$, $x\in\bb Z^d$, be the bounded operators
on $\mc A$ defined by 
\begin{equation}\label{b-partialx}
\partial_x
:=
w[v_x,\,\cdot\,]
\,\,,\qquad\qquad
\bar{\partial}_x
:=
-w[v^*_x,\,\cdot\,]
\,.
\end{equation}
In view of \eqref{b-wa} and \eqref{b-wv}, if $f\in\mc A_{\{x\}^\mathrm{c}}$ then $\partial_xf=\bar{\partial}_xf=0$. 
Moreover, $\partial_x$ and $\bar \partial_x$ are skew derivations in the sense that for each $f,g\in\mathcal{A}$
\begin{equation*}
  \partial_x(fg)
  = (\partial_xf)g+\Ad_w(f) \partial_xg,
  \qquad
    \bar\partial_x(fg)
  = (\bar\partial_xf)g+\Ad_w(f) \bar\partial_xg.
\end{equation*}
Let also $\check\partial_x,\,\check{\bar\partial}_x$, $x\in\bb Z^d$, be the bounded operators
on $\mc V$ defined by
\begin{equation}\label{b-partialxc}
\check\partial_x
:=
w[a_x,\,\cdot\,]
\,\,,\qquad\qquad
\check{\bar\partial}_x
:=
-w[a^*_x,\,\cdot\,]
\,.
\end{equation}

\subsection{Dynamics}
\label{sec:ud}

As unperturbed dynamics we consider the Fermi Ornstein-Uh\-len\-be\-ck semigroup introduced 
in \cite{CM1,G}, that describe the dissipative evolution of free fermions.

For $h\in\bb R$  let $\pi_0\in\mathcal{S}$ be the product state corresponding to free fermions with external field $h$.
More precisely, for $\Lambda\in\Pfin$ and $f\in\mc A_{\Lambda}$, we set
$$
\pi_0(f)
=
\frac{\tr_{\mc H_\Lambda}\big(
e^{h\sum_{x\in\Lambda}n_x}f
\big)}
{(1+e^h)^{|\Lambda|}}
$$
which, by the density of $\mathcal{A}^0$ in $\mathcal{A}$, uniquely defines $\pi_0$.

We then define the unperturbed generator on $\mathcal{A}$ by
\begin{equation}
          \label{b-lho-f-0}
      \mc{L}_0 = \sum_{x\in\bb Z^d} L_x^0
    \end{equation}
    where
    \begin{equation}
        \label{b-lho-f}
        L_x^0=
        e^{h/2}
  \big( [v_x, \,\cdot\, ]v_x^* + v_x [ \,\cdot\, ,v_x^*] \big)
  +e^{-h/2} \big( [v_x^*, \,\cdot\, ]v_x + v_x^* [\,\cdot\,,v_x] \big) .
    \end{equation}
By direct computation, 
$L_x^0\mc A\subset\mc A$, and $\mathcal{L}_0$ is symmetric with respect to GNS inner  product induced by $\pi_0$.

In this section, we define the seminorm $\vertiii{\cdot}$ on $\mc A^0$ by
\begin{equation}\label{b-3norm}
\vertiii{ f} 
:=
\sum_{x\in\bb Z^d}(\|\partial_x f\|+\|\bar\partial_x f\|).
\end{equation}
Observe that $\vertiii{ f}=0$ if and only if $f$ is a scalar multiple of $\ind$.
Let also $\mc A^1$ be the closure of $\mc A^0$ with respect to the norm $\|\cdot\|+\vertiii{\cdot}$.
Clearly, $\mc A^0\subset\mc A^1\subset\mc A$.

We consider perturbations of the generator $\mc{L}_0$, both of conservative and dissipative type.
The local Hamiltonians will be assumed, as natural from a physical viewpoint, to be even functions of the fermionic operators, while the jump operators associated to the dissipative perturbation, belonging to $\mathcal{V}$, can be both even and odd but we will require them to have a definite parity.
More precisely, fix a countable set $\mathcal{I}$ and functions $p\colon\mathcal{I}\to\bb Z/2\bb Z$ and $\chi\colon\mathcal{I}\to\Pfin$ with finite fibers $\chi^{-1}(X)$, $X\in\Pfin$.
Denote also $\mathcal{I}_0=p^{-1}(0)$ and $\mathcal{I}_1=p^{-1}(1)$.
Fix then a collection $\{k_\alpha,\ell_\alpha\}_{\alpha\in \mathcal{I}}$ such that $k_\alpha=k_\alpha^*\in\mc V_{\chi(\alpha),0}=\mc A_{\chi(\alpha),0}$ for $\alpha\in\mathcal{I}_0$, $k_\alpha=0$ for $\alpha\in\mathcal{I}_1$, and $\ell_{\alpha}\in\mc V_{\chi(\alpha),p(\alpha)}$ for every $\alpha\in\mathcal{I}$.
We then set
\begin{equation}\label{KL2F}
\mc L_1
=
\sum_{\alpha\in \mathcal{I}}L^1_\alpha,
\quad\text{where}\quad
  L^1_\alpha 
  =i[k_\alpha,\,\cdot\,] 
  +[\ell_\alpha^*, \,\cdot\, ]\ell_\alpha + \ell_\alpha^* [\,\cdot\,,\ell_\alpha]
  \,.
\end{equation}
The above parity assumptions guarantee that $L^1_\alpha\mathcal{A}\subset\mathcal{A}$, $\alpha\in \mathcal{I}$.
We will show in Lemma \ref{KUF} and Theorem \ref{thm1F} that, under suitable conditions on the family $\{k_\alpha,\ell_\alpha\}_{\alpha\in\mathcal{I}}$,
the right-hand side of \eqref{KL2F} is well defined on  $\mc A^1$,
and the graph norm closure of $\mathcal{L}=\mathcal{L}_0+\mathcal{L}_1$ generates a QFS on $\mc A$.

As in Section~\ref{s:2}, the family $\big\{k_\alpha,\ell_\alpha\big\}_{\alpha\in\mathcal{I}}$,
has \emph{finite range} if there exists $R\in[0,\infty)$ such that $k_\alpha=\ell_\alpha=0$ whenever $\diam(\chi(\alpha))>R$.
The family $\big\{k_\alpha,\ell_\alpha\big\}_{\alpha\in\mathcal{I}}$ is \emph{tran\-slation covariant} if there exists an action of the abelian
group $\bb Z^d$ on $\mathcal{I}$, denoted by
$(x,\alpha)\mapsto x+\alpha$, satisfying
$\chi(x+\alpha)=x+\chi(\alpha)$, such that
$\Ad_{\tau_x}(k_\alpha) = k_{x+\alpha}$ and
$\Ad_{\tau_x}(\ell_\alpha) = \ell_{x+\alpha}$.

\subsection{Main results}

As we next state, under suitable assumptions, the operator $\mathcal{L}=\mathcal{L}_0+\mathcal{L}_1$ is well defined on $\mathcal{A}^1$.

\begin{lemma}\label{KUF}
  If
  \begin{equation}
    \label{eq:C0F}
    C_0
    :=2\varch(h/2)+4 \sup_{x\in\bb Z^d} \sum_{\alpha\colon\!\chi(\alpha) \ni
      x}(\|k_\alpha\|+2\|\ell_{\alpha}\|^2) < +\infty,
  \end{equation}
  then for each $f\in\mathcal{A}^1$ the series defining $\mathcal{L}f$
  converges in $\mathcal{A}$ and
  $\|\mathcal{L}f\|\leq C_0\vertiii{f}$.
\end{lemma}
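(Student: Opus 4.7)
The plan is to mirror the structure of the proof of Lemma~\ref{lem:LX}, with the fermionic gradients $\partial_x,\bar\partial_x$ playing the role of the spectral operators $E_{x,h}$, and to split $\|\mathcal{L}f\|\leq\|\mathcal{L}_0 f\|+\|\mathcal{L}_1 f\|$, treating the two contributions separately.

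For the unperturbed part, the definitions \eqref{b-partialx} yield the identities $[v_x,f]=w\partial_x f$ and $[v_x^*,f]=-w\bar\partial_x f$, so each of the four summands composing $L_x^0 f$ in \eqref{b-lho-f} can be rewritten as a product of $\partial_x f$ or $\bar\partial_x f$ with one of $v_x$, $v_x^*$, or $w$. Since the CAR~\eqref{b-car} force $\|a_x\|=1$ and hence $\|v_x\|=\|v_x^*\|=1$, while $w$ is a self-adjoint unitary with $\|w\|=1$, this gives
\begin{equation*}
\|L_x^0 f\|\leq 2\varch(h/2)\bigl(\|\partial_x f\|+\|\bar\partial_x f\|\bigr),
\end{equation*}
and summing over $x\in\bb Z^d$ produces $\|\mathcal{L}_0 f\|\leq 2\varch(h/2)\vertiii{f}$, with absolute convergence guaranteed by $f\in\mathcal{A}^1$.

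For the interaction part, the crux is a fermionic analogue of the commutator bound~\eqref{eq:kbound}: for every $X\in\Pfin$, every $u\in\mathcal{V}_X$ of definite parity, and every $f\in\mathcal{A}^1$,
\begin{equation*}
\|[u,f]\|\leq c_0\|u\|\sum_{x\in X}\bigl(\|\partial_x f\|+\|\bar\partial_x f\|\bigr)
\end{equation*}
for some universal constant $c_0\leq 4$. I would prove this by constructing single-site tracial conditional expectations $E_x\colon\mathcal{A}\to\mathcal{A}_{\{x\}^{\mathrm c}}$ through the block decomposition $f=(1-n_x)f(1-n_x)+(1-n_x)fn_x+n_xf(1-n_x)+n_xfn_x$, verifying that each $E_x$ is a contraction, commutes with $\partial_y,\bar\partial_y$ for $y\neq x$, and satisfies the single-site estimate $\|f-E_xf\|\leq c_1\bigl(\|\partial_x f\|+\|\bar\partial_x f\|\bigr)$. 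Enumerating $X=\{x_1,\dots,x_m\}$ and telescoping yields $\|f-E_{x_1}\cdots E_{x_m}f\|\leq c_1\sum_{x\in X}\bigl(\|\partial_x f\|+\|\bar\partial_x f\|\bigr)$; since $E_{x_1}\cdots E_{x_m}f\in\mathcal{A}_{X^{\mathrm c}}$ and, by \eqref{eq:5} and \eqref{b-wa}, any $u\in\mathcal{V}_X$ of definite parity either commutes or graded-commutes with $\mathcal{A}_{X^{\mathrm c}}$, the identity $[u,f]=[u,f-E_{x_1}\cdots E_{x_m}f]$ combined with $\|[u,g]\|\leq 2\|u\|\|g\|$ delivers the stated inequality.

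With the commutator estimate in hand, bounding each of the three pieces of $L_\alpha^1$ in \eqref{KL2F} (using $\|\ell_\alpha^*\|=\|\ell_\alpha\|$ for the dissipative terms) gives $\|L_\alpha^1 f\|\leq c_0(\|k_\alpha\|+2\|\ell_\alpha\|^2)\sum_{x\in\chi(\alpha)}\bigl(\|\partial_x f\|+\|\bar\partial_x f\|\bigr)$; swapping the order of summation via $\sum_{\alpha}\sum_{x\in\chi(\alpha)}=\sum_x\sum_{\alpha\colon\chi(\alpha)\ni x}$ and invoking \eqref{eq:C0F} produces $\|\mathcal{L}_1 f\|\leq c_0\sup_x\sum_{\alpha\colon\chi(\alpha)\ni x}(\|k_\alpha\|+2\|\ell_\alpha\|^2)\vertiii{f}$. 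Combining this with the $\mathcal{L}_0$ bound yields the claim with $C_0$ as in \eqref{eq:C0F}; absolute convergence of the defining series follows from the finiteness of the right-hand sides. The main obstacle is the single-site estimate $\|f-E_xf\|\leq c_1\bigl(\|\partial_xf\|+\|\bar\partial_x f\|\bigr)$: because the fermionic sign structure implies that $\mathcal{A}_{\{x\}}$ and $\mathcal{A}_{\{x\}^{\mathrm c}}$ do not commute, the estimate is not purely algebraic as for qudits, and I expect it to require splitting $f$ into its even and odd components under $\Ad_w$ and performing a direct $2\times 2$ matrix-norm computation in the $\{x\}$-block, exploiting the parity-separated identities $\partial_x f_\pm=a_xf_\pm\mp f_\pm a_x$ and $\bar\partial_x f_\pm=a_x^*f_\pm\mp f_\pm a_x^*$ obtained from \eqref{b-partialx}.
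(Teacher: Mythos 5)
Your proposal is correct and follows essentially the same route as the paper: bound $\|L_x^0f\|\leq 2\varch(h/2)(\|\partial_xf\|+\|\bar\partial_xf\|)$ directly, and reduce $\mathcal{L}_1$ to the commutator estimate $\|[u,f]\|\leq 4\|u\|\sum_{x\in X}(\|\partial_xf\|+\|\bar\partial_xf\|)$ for $u\in\mathcal{V}_X$, proved by projecting $f$ onto $\mathcal{A}_{X^{\mathrm c}}$ with single-site conditional expectations and using \eqref{eq:5}. The only place you diverge is the single-site estimate $\|f-E_xf\|\leq c_1(\|\partial_xf\|+\|\bar\partial_xf\|)$, which you flag as the main obstacle and leave with an unverified constant: the paper disposes of it via the exact identity \eqref{eq:dgrande}, $f-E_xf=\bigl(D_x+\bar D_x-\tfrac12(D_x\bar D_x+\bar D_xD_x)\bigr)f$ with $D_x=a_x^*\partial_x$, $\bar D_x=a_x\bar\partial_x$, which immediately gives $c_1=2$ (hence your $c_0=4$) from $\|D_xf\|\leq\|\partial_xf\|$, $\|\bar D_xf\|\leq\|\bar\partial_xf\|$ and $\|D_x\|_{\mathcal{A}\to\mathcal{A}},\|\bar D_x\|_{\mathcal{A}\to\mathcal{A}}\leq 2$, with no parity splitting or matrix computation needed. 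Note also that by \eqref{eq:5} every $u\in\mathcal{V}_X$ genuinely commutes (not merely graded-commutes) with $\mathcal{A}_{X^{\mathrm c}}$, so no definite-parity hypothesis is required for the commutator bound itself.
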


As in the case of qudits, we first show the existence of the dynamics of interacting fermions.
For $\Lambda\in\Pfin$ we denote by $\mathcal{L}_\Lambda$ the bounded Lindblad generator on $\mathcal{A}$ defined by
\begin{equation*}
  \mathcal{L}_\Lambda
  :=\sum_{x\in\Lambda}L^0_x+\sum_{\alpha\in\mathcal{I}\colon \chi(\alpha)\subset\Lambda}L_\alpha^1
\end{equation*}
and by $(\mathcal{P}_t^\Lambda)_{t\geq0}$ the corresponding QFS.

\begin{theorem}\label{thm1F}
Assume \eqref{eq:C0F} and consider $\mc L=\mc L_0+\mc L_1$
as an operator on $\mc A$ with domain $\mc A^1$.
If
\begin{equation}\label{eqMF}
  \begin{split}
    M:=&4\sup_{y\in\bb Z^d}\sum_{x\in\bb Z^d}\sum_{\alpha\colon\chi(\alpha)\ni y}
    \Big(\|\partial_xk_\alpha\|+\|\bar\partial_xk_\alpha\|\\
    &+ 2 \|\ell_\alpha\|\big( \|\check\partial_x\ell_\alpha\|+ \|\check\partial_x\ell_\alpha^*\| + \|\check{\bar\partial}_x\ell_\alpha\|+ \|\check{\bar\partial}_x\ell_\alpha^*\|\big)\Big)<\infty,
  \end{split}
\end{equation}
then
\begin{enumerate}[(i)]
\item
the graph norm closure $\bar{\mc L}$ of $\mc L$ 
generates a QFS $(\mc P_t)_{t\geq0}$ on $\mc A$;
\item
$\mc A^1$ is a core for $\bar{\mc L}$;
\item for each $t\geq0$ the operator $\mathcal{P}_t$ is the strong limit of $\mathcal{P}_t^\Lambda$ as $\Lambda\uparrow\bb  Z^d$;
\item
for any $f\in\mc A^1$ and $t\geq0$ we have
$$
\vertiii{\mc P_tf}
\leq
e^{(M-2\varch(h/2))t}\vertiii{f}
\,;
$$
\item
  the QFS $(\mathcal{P}_t)_{t\geq0}$ has at least one stationary state.
\end{enumerate}
\end{theorem}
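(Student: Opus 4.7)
The plan is to follow the three-step strategy used for Theorem~\ref{thm1}, adapting each ingredient to the fermionic setting where the local structure of the seminorm $\vertiii{\cdot}$ is now encoded by the skew derivations $\partial_x,\bar\partial_x$ rather than by the spectral projections $E_{x,h}$. Accordingly, the fermionic analogues of the three key lemmas are needed: (a) accretivity of $-\mathcal{L}$ on $\mathcal{A}^1$, which carries over from Lemma~\ref{lem:accr} verbatim since the proof uses only the Lindblad structure \eqref{KL2F} and the parity assumptions ensure $L^1_\alpha \mathcal{A} \subset \mathcal{A}$; (b) an intertwining relation between $\mathcal{L}_0$ and $\partial_x, \bar\partial_x$; and (c) commutator bounds for $[\partial_x,L^1_\alpha]$ and $[\bar\partial_x,L^1_\alpha]$ yielding an operator $\Theta$ on $\ell_1(\bb Z^d)$ of kernel $\theta_{x,y}$ with $\|\Theta\|_{\ell_1\to\ell_1}\leq M$, where $M$ is given by \eqref{eqMF}.

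The intertwining relation, which is the fermionic counterpart of Lemma~\ref{lem24}, is
\begin{equation*}
\partial_x \mathcal{L}_0 - \mathcal{L}_0 \partial_x = -2\varch(h/2)\, \partial_x,
\qquad
\bar\partial_x \mathcal{L}_0 - \mathcal{L}_0 \bar\partial_x = -2\varch(h/2)\, \bar\partial_x,
\end{equation*}
on $\mathcal{A}^1$. This follows by a direct computation using the CAR \eqref{b-car}, the commutation rules \eqref{b-wa}--\eqref{b-wv1}, and the skew Leibniz rule for $\partial_x, \bar\partial_x$; the terms at sites $y\neq x$ cancel after moving sign operators past the fermionic generators, while the diagonal contribution reduces, via $\{v_x,v_x^*\}=\ind$, to the factor $e^{h/2}+e^{-h/2}=2\varch(h/2)$. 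This is essentially the content of the calculation underlying \cite{CM1,G}. The commutator bound for $L^1_\alpha$, the analogue of Lemma~\ref{lem246}, is obtained by expanding $\partial_x([k_\alpha,f])$ and $\partial_x(\ell_\alpha^*[f,\ell_\alpha]+[\ell_\alpha^*,f]\ell_\alpha)$ using the skew Leibniz rule, using the identity \eqref{eq:5} together with the formulas \eqref{b-wv1} to convert commutators against $v_y, v_y^*$ into the check derivations $\check\partial_y,\check{\bar\partial}_y$ applied to $k_\alpha,\ell_\alpha$ whenever $y\in\chi(\alpha)$, and finally bounding each piece by $\|\partial_x k_\alpha\|, \|\check\partial_x\ell_\alpha\|$ etc.\ times $\delta_y(f)$ for $y$ in a neighbourhood of $\chi(\alpha)$. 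The parity assumptions on $k_\alpha,\ell_\alpha$ ensure that the $w$-twists arising from the skew derivation property cancel cleanly, so no uncontrolled sign operators remain.

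Combining (a), (b), (c), the a priori resolvent estimate $\delta(g)\leq (\lambda+2\varch(h/2)-\Theta)^{-1}\delta(f)$ for $f,g\in\mathcal{A}^1$ with $(\lambda-\mathcal{L})g=f$ is obtained exactly as in Lemma~\ref{lem:Gamma}: pair the intertwined identity for $\partial_x g$ and $\bar\partial_x g$ with a functional $\wp\in\mathcal{A}'$ realizing the corresponding norm and satisfying $\re\wp(\mathcal{L}\cdot)\leq 0$, sum over $x$, and iterate using positivity of $\Theta$. From here, the three steps in the proof of Theorem~\ref{thm1} transfer without change: Lumer-Phillips together with finite-volume approximants gives item (i), the exponential formula $\mathcal{P}_t f=\lim_n (n/t)^n(n/t-\bar{\mathcal{L}})^{-n}f$ combined with the iterated resolvent bound yields items (ii) and (iv), the interpolation identity $\mathcal{P}_tf - \mathcal{P}_t^\Lambda f = \int_0^t \mathcal{P}_{t-s}^\Lambda (\mathcal{L}-\mathcal{L}_\Lambda)\mathcal{P}_sf \, ds$ together with (iv) and dominated convergence yields (iii), complete positivity of $\mathcal{P}_t$ is inherited from the finite-dimensional approximants through the strong limit, and (v) follows from a Ces\`aro average and Banach-Alaoglu as in Theorem~\ref{thm1}(v). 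The main obstacle I expect is the bookkeeping of (c): one must track the $\bb Z/2\bb Z$-grading at every step of the commutator expansion, verify that all $w$-factors either cancel against each other or combine with the fermionic operators to produce $\check\partial,\check{\bar\partial}$ derivatives, and check that the resulting coefficients telescope into the expression appearing in \eqref{eqMF}.
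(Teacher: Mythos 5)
Your proposal is correct and follows essentially the same route as the paper: the authors likewise reduce Theorem~\ref{thm1F} to the argument of Section~\ref{s:2-c} by supplying the fermionic intertwining relation (Lemma~\ref{b-lem24F}), the commutator bounds for $\mathcal{L}_1$ against $\partial_x,\bar\partial_x$ (Lemmas~\ref{lem:iii} and~\ref{b-lem245}, built on the decomposition \eqref{eq:dgrande} and the bound \eqref{eq:uf}), and the observation that accretivity and the resolvent/Lumer--Phillips machinery of Lemma~\ref{lem:Gamma} and Theorem~\ref{thm1} transfer unchanged. Your identification of $2\varch(h/2)$ as the gap replacing $\lambda_1$ and of \eqref{eqMF} as $\sup_y\sum_x(\theta_{x,y}+\tilde\theta_{x,y})$ matches the paper exactly.
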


We refer to \cite{Re} for an alternative approach to the construction of the infinite volume dynamics.
The perturbative criterion for the ergodicity of interacting fermions is then stated as follows.

\begin{theorem}\label{thm2F}
Assume \eqref{eq:C0F} and $M<2\varch(h/2)$.
Then
\begin{enumerate}[(i)]
\item
the QFS $(\mc P_t)_{t\geq0}$ has a unique stationary state $\pi$;
\item
for any $f\in\mc A^1$ and $t\geq0$
$$
\|\mc P_tf-\pi(f)\ind\|
\leq
\frac{C_0}{2\varch(h/2)-M} e^{-(2\varch(h/2)-M)t}\vertiii{f}
\,;
$$
\item
if furthermore $\big\{k_\alpha,\ell_\alpha\big\}_{\alpha\in\mathcal{I}}$ has finite range,
there exist $C,\zeta>0$ such that for any $\Lambda_1,\Lambda_2\subset\bb Z^d$ and any $f_1\in\mc A_{\Lambda_1}$, $f_2\in\mc A_{\Lambda_2}$, 
$$
|\pi(f_1f_2)-\pi(f_1)\pi(f_2)|
\leq
C e^{-\zeta\,\mathrm{dist}(\Lambda_1,\Lambda_2)}\big(\|f_1\|+\vertiii{f_1}\big)\big(\|f_2\|+\vertiii{f_2}\big)
\,.
$$
\end{enumerate}
\end{theorem}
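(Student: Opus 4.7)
The argument will closely mirror that of Theorem~\ref{thm2}, now invoking the fermionic ingredients already established: Lemma~\ref{KUF}, which gives $\|\mathcal{L}f\|\leq C_0\vertiii{f}$, and Theorem~\ref{thm1F}(iv), giving the exponential contraction $\vertiii{\mathcal{P}_tf}\leq e^{(M-2\varch(h/2))t}\vertiii{f}$. The hypothesis $M<2\varch(h/2)$ plays the role of $M<\lambda_1$ from the qudit case, since $2\varch(h/2)$ is the spectral gap of the one-site Fermi Ornstein--Uhlenbeck generator. I also use the fact, recorded just after \eqref{b-3norm}, that $\vertiii{f}=0$ iff $f\in\bb C\ind$.

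For parts (i) and (ii), fix $f\in\mathcal{A}^1$. Writing $\mathcal{P}_tf-\mathcal{P}_sf=\int_s^t\mathcal{L}\mathcal{P}_rf\,dr$ and combining the two bounds above yields, exactly as in \eqref{eq:t-s},
\[
\|\mathcal{P}_tf-\mathcal{P}_sf\|\leq\frac{C_0}{2\varch(h/2)-M}\bigl(e^{-(2\varch(h/2)-M)s}-e^{-(2\varch(h/2)-M)t}\bigr)\vertiii{f},
\]
so $(\mathcal{P}_tf)_{t\geq 0}$ is Cauchy with limit $f_\infty\in\mathcal{A}$. Since $\partial_x$ and $\bar\partial_x$ are bounded on $\mathcal{A}$, each $\|\partial_xf_\infty\|$ and $\|\bar\partial_xf_\infty\|$ is dominated by $\lim_t\vertiii{\mathcal{P}_tf}=0$, forcing $f_\infty=\pi(f)\ind$ for some scalar $\pi(f)$. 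Arguing as in the proof of Theorem~\ref{thm2}(i), $\pi$ is linear, contractive, unital and positive (the last because $\mathcal{P}_tf\geq0$ whenever $f\geq0$), hence extends by density to a state on $\mathcal{A}$, which the semigroup property makes stationary. Uniqueness follows from $\nu(f)=\lim_t\nu(\mathcal{P}_tf)=\pi(f)$ on the dense set $\mathcal{A}^1$ for any invariant $\nu$. Letting $t\to\infty$ with $s=0$ in the Cauchy estimate produces (ii).

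For part (iii), the goal is a fermionic version of Proposition~\ref{pro:dec}. Setting $F_t:=\mathcal{P}_t(f_1f_2)-(\mathcal{P}_tf_1)(\mathcal{P}_tf_2)$ and using $\partial_tF_t=\mathcal{L}F_t+G_t$ with $G_t:=\mathcal{L}((\mathcal{P}_tf_1)(\mathcal{P}_tf_2))-(\mathcal{L}\mathcal{P}_tf_1)(\mathcal{P}_tf_2)-(\mathcal{P}_tf_1)(\mathcal{L}\mathcal{P}_tf_2)$, the contraction property reduces the task to bounding the Leibniz defect $G_s$. A parity-aware algebraic computation, exploiting that $\partial_x,\bar\partial_x$ are skew derivations and the grading $\mathcal{A}_X=\mathcal{A}_{X,0}\oplus\mathcal{A}_{X,1}$, will produce for every $g_1,g_2\in\mathcal{A}^1$ a pointwise estimate
\[
\bigl\|\mathcal{L}(g_1g_2)-(\mathcal{L}g_1)g_2-g_1(\mathcal{L}g_2)\bigr\|\leq\sum_{x,y\in\bb Z^d}\omega_{x,y}\,\delta_x(g_1)\delta_y(g_2),
\]
where $\delta_x(g):=\|\partial_xg\|+\|\bar\partial_xg\|$ and $\omega_{x,y}$, built from $\|\ell_\alpha\|^2$ in analogy with \eqref{eq:omega}, is supported on pairs with $\{x,y\}\subset\chi(\alpha)$ for some $\alpha$. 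Under the finite-range hypothesis the fermionic analogs of \eqref{eq:Meta} and \eqref{eq:omegaeta} then hold trivially for every $\xi>0$, with $M_\xi\leq Me^{R\xi}$. Choosing $g_i=\mathcal{P}_sf_i$ and invoking Theorem~\ref{thm1F}(iv) gives finite speed of propagation for $\|F_t\|$; combined with (ii) through the triangle decomposition used at the end of the proof of Theorem~\ref{thm2}(iii) and optimized in $t$, this yields the claimed exponential spatial decay.

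The main obstacle will be the Leibniz-defect identity for the fermionic generator: the terms coming from $\mathcal{L}_0$ involve commutators with $v_x,v_x^*\notin\mathcal{A}$, and the jump operators $\ell_\alpha\in\mathcal{V}_{\chi(\alpha),p(\alpha)}$ may be odd, so one must carefully track signs using \eqref{b-wa}, \eqref{b-wv1} and the skew-derivation identities \eqref{b-partialx} in order to rewrite every cross term as a product of skew derivatives acting on $g_1$ or $g_2$, hence estimable by $\delta_x(\cdot)$. Once this identity is in place, all remaining arguments are a mechanical translation of Section~\ref{sub:2}.
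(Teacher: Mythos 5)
Your proposal is correct and follows essentially the same route as the paper, which itself proves Theorem~\ref{thm2F} by rerunning the arguments of Section~\ref{s:2-c} with Lemmata~\ref{KUF}, \ref{b-lem24F} and \ref{b-lem245} in place of their qudit counterparts. The one point you flag as an obstacle resolves harmlessly: the Leibniz defect of $\mathcal{L}_0$ equals $2e^{h/2}\sum_x[v_x,g_1][g_2,v_x^*]+2e^{-h/2}\sum_x[v_x^*,g_1][g_2,v_x]$, which only adds a diagonal term of order $\varch(h/2)$ to $\omega_{x,y}$ and therefore does not affect the finiteness of $\Omega_\xi$ or the finite speed of propagation estimate.
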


As $\gap(-L_x^0)=2\varch(h/2)$, the above criterion corresponds to the one formulated in Theorem~\ref{thm2} for qudits.
As in Section~\ref{s:2}, in the translation covariant case, Theorem~\ref{thm2F} implies the exponential decay of specific quantum one-Wasserstein distance between $\mu\mathcal{P}_t$ and $\pi$.
We refer to Theorem~\ref{thm3} for the precise statement.

%%%
\subsection{Bounds on the commutators}

The proofs of Theorems \ref{thm1F} and \ref{thm2F} are accomplished by the arguments presented in Section~\ref{s:2-c}.
The relevant ingredients are: (i) the intertwining relationships, as discussed in \cite{CM1} and stated in Lemma~\ref{b-lem24F} below, between the derivatives $\partial_x$, $\bar\partial_x$ and the unperturbed generator $\mathcal{L}_0$; (ii) some quantitative bounds on the commutators between the derivatives $\partial_x$, $\bar\partial_x$ and the perturbed generator $\mathcal{L}_1$ that are here derived anew.

We start with the following lemma which provides the fermionic counterpart to \eqref{eq:spdec}.

\begin{lemma}
  Let $E_{x}$ be the normalized partial trace on $\mathcal{H}_{\{x\}}$ and set
  \begin{equation*}
    D_x=a^*_x\partial_x=v_x^*[v_x,\,\cdot\,],
    \qquad
    \bar D_x=a_x\bar\partial_x=v_x[v_x^*,\,\cdot\,],
  \end{equation*}
  that are regarded as bounded operators on $\mathcal{A}$.
  Then for each $x\in\bb Z^d$ and $f\in\mathcal{A}$
  \begin{equation}\label{eq:dgrande}
    f=
    E_{x}f+D_xf+\bar D_xf-\frac{1}{2}\big(D_x\bar D_x+\bar D_x D_x\big)f.
  \end{equation}
\end{lemma}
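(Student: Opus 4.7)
The plan is to establish \eqref{eq:dgrande} by a direct computation using the CAR for $v_x,v_x^*$, noting that the four operators $\ind-n_x,v_x,v_x^*,n_x$ form a system of matrix units for the four-dimensional subalgebra $\mathcal{V}_{\{x\}}\simeq M_2(\mathbb{C})$, and recognising the resulting combination as the Kraus form of the normalised partial trace. Throughout I will use the identities $v_xv_x^*=\ind-n_x$, $v_x^*v_x=n_x$, $v_x^2=(v_x^*)^2=0$, together with their immediate consequences $v_xn_x=v_x$, $n_xv_x=0$, $n_xv_x^*=v_x^*$, $v_x^*n_x=0$.

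First I would unfold the two first-order terms:
\begin{equation*}
  D_xf=v_x^*v_xf-v_x^*fv_x=n_xf-v_x^*fv_x,\qquad
  \bar D_xf=v_xv_x^*f-v_xfv_x^*=(\ind-n_x)f-v_xfv_x^*,
\end{equation*}
which gives $D_xf+\bar D_xf=f-v_x^*fv_x-v_xfv_x^*$. Applying the same formulas to $\bar D_xf$ and $D_xf$ respectively, the nilpotency relations annihilate all cross terms and yield
\begin{equation*}
  D_x\bar D_xf=-v_x^*fv_x+n_xfn_x,\qquad \bar D_xD_xf=-v_xfv_x^*+(\ind-n_x)f(\ind-n_x).
\end{equation*}
Plugging these into the right-hand side of \eqref{eq:dgrande} and collecting, the identity reduces to
\begin{equation*}
  E_xf=\tfrac{1}{2}\bigl(v_x^*fv_x+v_xfv_x^*+n_xfn_x+(\ind-n_x)f(\ind-n_x)\bigr).
\end{equation*}

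To finish I would identify this right-hand side with the normalised partial trace. Labelling $E_{00}=\ind-n_x$, $E_{01}=v_x$, $E_{10}=v_x^*$, $E_{11}=n_x$, the relations above give $E_{ij}E_{kl}=\delta_{jk}E_{il}$ and $E_{ij}^*=E_{ji}$, so the displayed sum is the standard Kraus expression $\tfrac{1}{2}\sum_{i,j}E_{ji}fE_{ij}$ associated with these matrix units. A short direct check, using \eqref{eq:5}, shows that this unital CP map acts as the identity on $\mathcal{A}_{\{x\}^c}$, hence by uniqueness of the trace-preserving conditional expectation it coincides with the normalised partial trace $E_x$. The one delicate point is precisely this last identification: in the fermionic setting $\mathcal{V}_{\{x\}}$ is not embedded as $\mathcal{B}(\mathcal{H}_{\{x\}})\otimes\ind$ because of the Jordan--Wigner tail hidden in $v_x=wa_x$, and choosing matrix units inside $\mathcal{V}_{\{x\}}$ rather than naively from the Hilbert-space tensor factorisation is exactly what lets the formula reproduce the correct partial trace.
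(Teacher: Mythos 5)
Your computation is correct, but it takes a genuinely different route from the paper. The paper's proof reduces by linearity and density to ordered monomials $f=\otimes_y f_y$, observes that $D_x$ and $\bar D_x$ act only on the factor $f_x$ (by \eqref{b-wv1}), and then verifies \eqref{eq:dgrande} on the four basis elements $\ind, a_x, a_x^*, n_x$ of $\mathcal{A}_{\{x\}}$ via the table $D_x a_x^*=a_x^*$, $\bar D_x a_x=a_x$, $D_xn_x=n_x$, $\bar D_xn_x=n_x-\ind$, etc. You instead work globally: expanding $D_x,\bar D_x$ as two-sided multiplications, using the nilpotency relations to compute $D_x\bar D_x$ and $\bar D_xD_x$ exactly (your expressions $-v_x^*fv_x+n_xfn_x$ and $-v_xfv_x^*+(\ind-n_x)f(\ind-n_x)$ are both correct), and recognising the residue as the Kraus sum $\tfrac12\sum_{i,j}E_{ji}fE_{ij}$ over the matrix units of $\mathcal{V}_{\{x\}}$. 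Your route buys an explicit, basis-free Kraus representation of $E_x$ and makes transparent \emph{why} $E_x$ is the map that completes the identity; the paper's route is shorter and sidesteps entirely the question of what $E_x$ does off the ordered-monomial picture. Your closing remark is also a genuine insight the paper glosses over: the literal partial trace over the tensor factor $\mathcal{H}_{\{x\}}$ would annihilate $a_y$ for $y>x$ (which carries a sign on the $x$-slot), so \eqref{eq:dgrande} forces $E_x$ to be the trace-preserving conditional expectation onto $\mathcal{A}_{\{x\}^c}$ rather than the naive Hilbert-space partial trace, and choosing the matrix units inside $\mathcal{V}_{\{x\}}$ is exactly what produces that map.

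One step should be tightened. To conclude that your Kraus map $\Phi$ equals $E_x$ by uniqueness of the trace-preserving conditional expectation, it is not enough that $\Phi$ is unital, CP, trace-preserving and restricts to the identity on $\mathcal{A}_{\{x\}^c}$: the identity map itself has all these properties. You also need $\mathrm{Ran}\,\Phi\subseteq\mathcal{A}_{\{x\}^c}$. This is easy to supply: either note that $\Phi(f)$ commutes with every $E_{kl}$ (so lands in the relative commutant of $\mathcal{V}_{\{x\}}$, which one then identifies with $\mathcal{A}_{\{x\}^c}$), or, more elementarily, check that $\Phi(\ind)=\ind$, $\Phi(a_x)=\Phi(a_x^*)=0$, $\Phi(n_x)=\tfrac12\ind$ and use the bimodule property $\Phi(h_1gh_2)=h_1\Phi(g)h_2$ for $h_1,h_2\in\mathcal{A}_{\{x\}^c}$, which follows from \eqref{eq:5}; since every element of $\mathcal{A}^0$ is a sum of such products, this pins $\Phi$ down as $E_x$. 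With that addition the argument is complete.
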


\begin{proof}
  By linearity and density it suffices to show \eqref{eq:dgrande} when $f$ is a monomial;
  namely,  when $f=\otimes_yf_y$ with $f_y\in \mathcal{A}_{\{y\}}$ and the product runs over finitely many sites.
  In view of  \eqref{b-wv1}, for such $f$
  \begin{equation*}
    D_xf=\big(\otimes_{y<x}f_y\big) (D_xf_x)\big(\otimes_{y>x}f_y\big),
    \quad
    \bar D_xf=\big(\otimes_{y<x}f_y\big) (\bar D_xf_x)\big(\otimes_{y>x}f_y\big).
  \end{equation*}
  On the other hand, by direct computations,
    \begin{displaymath}
    \begin{array}{lllll}
     &D_x\ind=0, &D_xa_x=0, &D_xa_x^*=a_x^*, &D_xn_x=n_x, \\
    &\bar D_x\ind=0, &\bar D_xa_x=a_x, &\bar D_xa_x^*=0, &\bar D_xn_x=n_x-\ind.
  \end{array}
  \end{displaymath}
  As the linear span of $\{\ind,a_x,a_x^*,n_x\}$ is $\mathcal{A}_{\{x\}}$, the statement follows by linearity.  
\end{proof}

\begin{proof}[Proof of Lemma \ref{KUF}]
Since $\|w\|=\|a_x\|=\|a_x^*\|=1$ and $w^2=\ind$, we have
\begin{equation}\label{b-vxf}
\|[v_x,f]\|=\|\partial_xf\|
\,,\quad
\|[v_x^*,f]\|=\|\bar{\partial}_xf\|
\,,\quad
\|[n_x,f]\|\leq\|\partial_xf\|+\|\bar{\partial}_xf\|
\,.
\end{equation}
Hence, from \eqref{b-lho-f} we get
\begin{equation}
\label{b-eq:stima1}
\|L_x^0f \|
\leq 2\varch(h/2)\big(\|\partial_xf\|+\|\bar\partial_xf\|\big).
\end{equation}

As follows from a direct computation, the statement is achieved by the following estimate.
For each  $X\in\Pfin$, $u\in\mathcal{V}_X$, and $f\in\mathcal{A}^1$
\begin{equation}\label{eq:uf}
  \|[u,f]\|
  \leq 4\|u\|\sum_{x\in X}(\|\partial_x f\|+\|\bar\partial_x f\|).
\end{equation}

To prove the bound \eqref{eq:uf}, let $X=\{x_1,\dots,x_m\}$ and introduce the operator $F_x\colon\mathcal{A}\to\mathcal{A}$ defined by $F_x=D_x+\bar D_x-(1/2)(D_x\bar D_x+\bar D_x D_x)$.
By recursively using \eqref{eq:dgrande} we deduce, cf.\ \eqref{eq:Fx},
\begin{equation}\label{eq:f=}
  f=\Big(\prod_{j=1}^mE_{x_j}\Big)f+\sum_{j=1}^m\Big(\prod_{h=1}^{j-1}E_{x_h}\Big)F_{x_j}f.  
\end{equation}
Since $\|E_x\|_{\mathcal{A}\to\mathcal{A}}\leq1$, $\|D_x\|_{\mathcal{A}\to\mathcal{A}}\leq2$, $\|\bar D_x\|_{\mathcal{A}\to\mathcal{A}}\leq2$, $\|D_xf\|\leq \|\partial_xf\|$, and $\|\bar D_xf\|\leq \|\bar\partial_xf\|$, we deduce 
\begin{equation}\label{eq:fleq}
  \Big\|\Big(\prod_{h=1}^{j-1}E_{x_h}\Big)F_{x_j}f\Big\|
  \leq 2\big(\|\partial_{x_j}f\|+\|\bar\partial_{x_j}f\|\big),
  \qquad j=1,\dots, m.
\end{equation}
Recalling \eqref{eq:5}, the claim \eqref{eq:uf} follows by noticing that $\big(\prod_{j=1}^mE_{x_j}\big)f$ belongs to $\mathcal{A}_{X^\mathrm{c}}$.
\end{proof}

We next recall the intertwining relationship  between the unperturbed generator $\mathcal{L}_0$ and the gradient structure introduced in \eqref{b-partialx}.

\begin{lemma}\label{b-lem24F}
	For each $f\in{\mc A}^1$ and $x\in{\bb Z}^d$
	\begin{align*}
		\partial_x {\mc L}_0 f-{\mc L}_0\partial_x f&= -2\varch(h/2)\partial_xf\\
				\bar\partial_x {\mc L}_0 u-{\mc L}_0\bar\partial_x f&=-2\varch(h/2)\bar\partial_xf.
	\end{align*}
\end{lemma}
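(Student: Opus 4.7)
The plan is to write $\mathcal{L}_0 = \sum_y L_y^0$ and to reduce the identity to two commutator computations: $[\partial_x, L_y^0] = 0$ for $y\ne x$ and $[\partial_x, L_x^0] = -2\varch(h/2)\partial_x$. Summing over $y$ then yields the intertwining. Since both sides of the proposed identity depend continuously on $f$ in the norm of $\mathcal{A}^1$, it suffices to verify the identity for $f\in \mathcal{A}^0$, after which one extends by density.

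For $y\ne x$, the relevant algebraic inputs are the CAR $\{v_x, v_y\} = \{v_x, v_y^*\} = 0$ and the anticommutation of $w$ with every $v_y, v_y^*$, which follow from \eqref{b-wa}--\eqref{b-wv}. Consequently $v_x$ commutes with $v_y v_y^*$ and $w$ commutes with $v_y v_y^*$. Expanding $L_y^0 f$ in Lindblad form as a sum of terms of the types $v_y f v_y^*$, $f v_y v_y^*$, $v_y v_y^* f$ (plus the $v_y^*$-analogues) and applying $\partial_x = w[v_x,\,\cdot\,]$ to each piece, a direct computation yields
\begin{equation*}
\partial_x(v_y f v_y^*) = v_y(\partial_x f)v_y^*, \quad \partial_x(f v_y v_y^*) = (\partial_x f) v_y v_y^*, \quad \partial_x(v_y v_y^* f) = v_y v_y^*(\partial_x f),
\end{equation*}
and the analogous identities with $v_y^*$ in place of $v_y$. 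The main obstacle is the sign bookkeeping here: in the first identity the two sign changes from $v_x v_y = -v_y v_x$ and $v_y^* v_x = -v_x v_y^*$ cancel, leaving a single minus that is neutralized by $w v_y = -v_y w$. Summing these identities (and their $v_y^*$-counterparts) one obtains $\partial_x L_y^0 = L_y^0 \partial_x$.

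For $y = x$, the relations $v_x^2 = (v_x^*)^2 = 0$ and $\{v_x, v_x^*\} = \ind$ (which in turn imply $v_x v_x^* v_x = v_x$ and $v_x^* v_x v_x^* = v_x^*$) permit one to compute $\partial_x L_x^0 f$ and $L_x^0 \partial_x f$ in closed form, and their difference reduces after cancellations to $-2\varch(h/2)\partial_x f$. This eigenvalue is natural in view of the spectral identity $L_x^0 v_x = -2\varch(h/2) v_x$: the skew derivation $\partial_x f = w[v_x, f]$ effectively carries a $v_x$ at site $x$. The proof for $\bar\partial_x$ is entirely analogous, as the combined swap $v_x \leftrightarrow v_x^*$ with $e^{h/2}\leftrightarrow e^{-h/2}$ is a symmetry of $L_x^0$ that fixes the eigenvalue $2\varch(h/2)$ and translates the identity for $\partial_x$ into the one for $\bar\partial_x$.
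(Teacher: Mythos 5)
Your proposal is correct and follows the same route the paper intends: the paper's proof simply states that both identities follow from a straightforward computation (citing \cite[\S6.2]{CM1}), and your site-by-site commutator calculation, using the CAR together with $wv_y=-v_yw$ for $y\neq x$ and $v_x^2=(v_x^*)^2=0$, $\{v_x,v_x^*\}=\ind$ for $y=x$, is exactly that computation carried out explicitly. The displayed identities for $y\neq x$ and the symmetry argument reducing the $\bar\partial_x$ case to the $\partial_x$ case both check out.
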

\begin{proof}
Both identities are obtained by a straightforward computation, see also \cite[\S6.2]{CM1}.
\end{proof}

\begin{lemma}$~$\label{lem:iii}
  Fix $x\in\bb Z^d$ and $X\in\Pfin$.
  \begin{enumerate}[(i)]
  \item For each $u\in\mathcal{V}_{X,0}$ and $f\in\mathcal{A}^1$
    \begin{align*}
      \|\partial_x[u,f]-[u,\partial_x f]\|
      \leq 4 \|\partial_x u\|\sum_{y\in X}\big(\|\partial_yf\|+\|\bar\partial_yf\|\big),\\
       \|\bar\partial_x[u,f]-[u,\bar\partial_x f]\|
      \leq 4 \|\bar\partial_x u\|\sum_{y\in X}\big(\|\partial_yf\|+\|\bar\partial_yf\|\big).
    \end{align*}
    \item  For each $j=0,1$, $u\in\mathcal{V}_{X,j}$, and $f\in\mathcal{A}^1$
      \begin{align*}
       & \big\|\partial_x(u^*[f,u]+[u^*,f]u)-u^*[\partial_xf,u]-[u^*\partial_xf]u\big\|\\
        &\qquad\leq 8\|u\|\big(\|\check\partial_xu^*\|+\|\check\partial_xu\|\big)\sum_{y\in X}\big(\|\partial_yf\|+\|\bar\partial_yf\|\big),\\
               & \big\|\bar\partial_x(u^*[f,u]+[u^*,f]u)-u^*[\bar\partial_xf,u]-[u^*\bar\partial_xf]u\big\|\\
        &\qquad\leq 8\|u\|\big(\|\check{\bar\partial}_xu^*\|+\|\check{\bar\partial}_xu\|\big)\sum_{y\in X}\big(\|\partial_yf\|+\|\bar\partial_yf\|\big).
      \end{align*}
    \end{enumerate}
\end{lemma}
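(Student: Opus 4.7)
The plan is to reduce both parts of the lemma to the fermionic decomposition \eqref{eq:f=} of $f$ and then exploit the skew-derivation structure of $\partial_x$. First, if $x \notin X$, the inequalities hold trivially: a CAR parity count gives $\partial_x u = 0$ in (i) and $\check\partial_x u = \check\partial_x u^* = 0$ in (ii), while expanding the left-hand side via $\partial_x(gh) = (\partial_x g)h + \Ad_w(g)\partial_x h$ shows it is already zero. So assume $x \in X$, enumerate $X = \{x, x_2, \ldots, x_m\}$, and use \eqref{eq:f=} to write $f = f_0 + \sum_{j=1}^m g_j$ with $f_0 \in \mathcal{A}_{X^c}$ and $\|g_j\| \leq 2(\|\partial_{x_j} f\| + \|\bar\partial_{x_j} f\|)$, as established in \eqref{eq:fleq}. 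Since the left-hand sides are linear in $f$, I treat $f_0$ and each $g_j$ independently.

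The contribution of $f_0$ vanishes. Indeed $\partial_x f_0 = 0$ because $f_0 \in \mathcal{A}_{\{x\}^c}$, so it suffices to show that the outer expression in $u$ evaluated at $f_0$ is zero. For (i), and for (ii) with $u$ even, the evenness of $u$ implies $[u, f_0] = [u^*, f_0] = 0$ together with $u^* f_0 u = u^* u f_0 = f_0 u^* u$. For (ii) with $u = w\tilde u$ odd and $\tilde u \in \mathcal{A}_{X,1}$, I use the parity identity $\tilde u f_0 = \Ad_w(f_0) \tilde u$ (from the anticommutation of odd operators at disjoint sites) to rewrite $2\tilde u^* \Ad_w(f_0)\tilde u = \tilde u^*\tilde u f_0 + f_0 \tilde u^*\tilde u$, whence the $f_0$ piece of the Lindblad form $2\tilde u^*\Ad_w(f_0)\tilde u - \tilde u^*\tilde u f_0 - f_0\tilde u^*\tilde u$ cancels.

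For each $g_j$ I expand the left-hand side via the skew-derivation rule and estimate term by term. For (i) this gives $\partial_x[u, g_j] - [u, \partial_x g_j] = (\partial_x u) g_j - \Ad_w(g_j)\partial_x u$, of norm at most $2\|\partial_x u\|\|g_j\|$. For (ii) with $u$ even, the analogous expansion yields six terms summing to at most $4\|u\|(\|\partial_x u\| + \|\partial_x u^*\|)\|g_j\|$. For (ii) with $u$ odd, I first rewrite $u^*[f,u] + [u^*,f]u = 2\tilde u^*\Ad_w(f)\tilde u - \tilde u^*\tilde u f - f\tilde u^*\tilde u$ and then expand using the skew-derivation rule applied to $\tilde u, \tilde u^* \in \mathcal{A}$; here the intertwining $\partial_x \circ \Ad_w = -\Ad_w \circ \partial_x$, a consequence of $\Ad_w(v_x) = -v_x$, cancels the would-be $\partial_x f$ terms and leaves exactly the same six-term structure as in the even case, now with $\tilde u$ replacing $u$. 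Converting $\partial_x$-norms to $\check\partial_x$-norms via $\partial_x u = w\check\partial_x u$ for $u$ even, and $\partial_x\tilde u = \{a_x,\tilde u\} = -\check\partial_x u$ with the analogous identity for $u^*$ when $u$ is odd, then summing over $j$ while absorbing the factor $2$ from the $\|g_j\|$ bound, produces the claimed inequalities. The $\bar\partial_x$ statements are proved identically, replacing $v_x, a_x$ throughout by $v_x^*, a_x^*$.

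The main obstacle is the odd-parity case of (ii): a naive application of the skew-derivation rule to the Lindblad expression, treating $u^* \in \mathcal{V}\setminus\mathcal{A}$ as if it belonged to $\mathcal{A}$, introduces a rogue $-4 u^*(\partial_x f) u$ contribution that would spoil the claimed bound. Rewriting the expression via $\tilde u = wu \in \mathcal{A}$ and exploiting $\partial_x \circ \Ad_w = -\Ad_w \circ \partial_x$ is precisely what is needed to make this anomalous term cancel and to recover a bound of the same form as in the even case.
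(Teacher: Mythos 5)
Your proposal is correct and uses essentially the same ingredients as the paper's proof: the decomposition \eqref{eq:f=} with the bound \eqref{eq:fleq} (which the paper packages once and for all as \eqref{eq:uf}), the Jacobi/skew-derivation identities for $\partial_x=w[v_x,\cdot\,]$, and the parity bookkeeping that converts $\partial_x$ of an odd element into $\check\partial_x$; the paper merely organizes the algebra the other way around, first deriving the closed-form commutation identities $\partial_x(u^*[f,u])-u^*[\partial_xf,u]=w\big((\check\partial_xu^*)[f,u]+(-1)^ju^*[f,\check\partial_xu]\big)$ (and its companion) valid for both parities at once, and then invoking \eqref{eq:uf}. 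Your identification of the odd-parity case as the delicate point, resolved via $\tilde u=wu$ and $\partial_x\circ\Ad_w=-\Ad_w\circ\partial_x$, is exactly the content of the $(-1)^j$ sign in the paper's identities, and your constants agree with the statement.
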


\begin{proof}
  We use the notation introduced in the proof of Lemma~\ref{KUF}.
  For (i), we prove only the first bound.
  Since $u\in\mathcal{V}_{X,0}$, the Jacobi identity yields
  \begin{equation*}
    \partial_x[u,f]-[u,\partial_x f]
    =w[[v_x,u],f]
    =w \sum_{j=1}^m\Big[[v_x,u], \Big(\prod_{h=1}^{j-1}E_{x_h}\Big)F_{x_j}f\Big],
  \end{equation*}
  where we used \eqref{eq:f=} and \eqref{eq:5} in the second step.
  The statement now follows from \eqref{eq:fleq}.

  Regarding (ii), we prove  again  only the first bound.
  By direct computation, for $j=0,1$, $u\in\mathcal{V}_j$, $f\in\mathcal{A}$,
  \begin{align*}
    &\partial_x(u^*[f,u])-u^*[\partial_xf,u]
    =w\big((\check\partial_xu^*)[f,u]+(-1)^ju^*[f,\check\partial_xu]\big)\\
    &\partial_x([u^*,f]u)-[u^*,\partial_xf]u
    =w\big([\check\partial_xu^*,f]u+(-1)^j[u^*,f](\check\partial_xu)\big).
  \end{align*}
  The statement now follows by \eqref{eq:uf}.
\end{proof}

As we next state, the estimates in Lemma~\ref{lem:iii} provide the required bounds on the commutator between the gradient structure introduced in \eqref{b-partialx} and perturbed generator $\mathcal{L}_1$.

\begin{lemma}\label{b-lem245}
  For $x,y\in\bb Z^d$ set
  \begin{align*}
    &\theta_{x,y}
      :=
      4\sum_{\alpha\colon y\in\chi(\alpha)}
      \big(\|\partial_xk_\alpha\|+2  \|\ell_\alpha\|( \|\check\partial_x\ell_\alpha\|+ \|\check\partial_x\ell_\alpha^*\|)\big)\\
    &\tilde\theta_{x,y}
      :=
      4\sum_{\alpha\colon y\in\chi(\alpha)}
      \big(\|\bar\partial_xk_\alpha\|+2  \|\ell_\alpha\|( \|\check{\bar\partial}_x\ell_\alpha\|+ \|\check{\bar\partial}_x\ell_\alpha^*\|)\big).
\end{align*}
Then for each $x\in{\bb Z}^d$ and $f\in{\mc A}^1$
	\begin{align*}
	& \|\partial_x {\mc L}_1 f-{\mc L}_1\partial_x f\|
	\leq
	\sum_{y\in\bb Z^d} \theta_{x,y}
	( \|\partial_y f\| + \|\bar{\partial}_y f\| )
	\,,\\
	& \|\bar{\partial}_x {\mc L}_1 f-{\mc L}_1\bar{\partial}_x f\|
	\leq
	\sum_{y\in\bb Z^d} \tilde\theta_{x,y}
	( \|\partial_y f\| + \|\bar{\partial}_y f\| ).
	\end{align*}
\end{lemma}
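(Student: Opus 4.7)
The plan is to reduce the statement to a term-by-term application of Lemma~\ref{lem:iii}. Write
\begin{equation*}
\partial_x\mathcal{L}_1 f-\mathcal{L}_1\partial_x f
=\sum_{\alpha\in\mathcal{I}}\bigl(\partial_x L^1_\alpha f-L^1_\alpha\partial_x f\bigr),
\end{equation*}
which is legitimate since $\partial_x$ is bounded and $\partial_x f\in\mathcal{A}$, so both $\partial_x\mathcal{L}_1 f$ and $\mathcal{L}_1\partial_x f$ are defined as norm-convergent series in $\mathcal{A}$, the latter by \eqref{eq:C0F}. For each $\alpha$ split $L^1_\alpha$ into its Hamiltonian and dissipative parts, noting that by assumption $k_\alpha\in\mathcal{V}_{\chi(\alpha),0}$ and $\ell_\alpha\in\mathcal{V}_{\chi(\alpha),p(\alpha)}$, so both parts fall under the hypotheses of Lemma~\ref{lem:iii}.

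For the Hamiltonian term, Lemma~\ref{lem:iii}(i) applied with $u=k_\alpha$ gives
\begin{equation*}
\bigl\|\partial_x\bigl(i[k_\alpha,f]\bigr)-i[k_\alpha,\partial_x f]\bigr\|
\leq 4\|\partial_x k_\alpha\|\sum_{y\in\chi(\alpha)}\bigl(\|\partial_y f\|+\|\bar\partial_y f\|\bigr).
\end{equation*}
For the dissipative term, Lemma~\ref{lem:iii}(ii) applied with $u=\ell_\alpha$ (of parity $p(\alpha)\in\{0,1\}$) gives
\begin{equation*}
\bigl\|\partial_x\bigl(\ell_\alpha^*[f,\ell_\alpha]+[\ell_\alpha^*,f]\ell_\alpha\bigr)-\ell_\alpha^*[\partial_x f,\ell_\alpha]-[\ell_\alpha^*,\partial_x f]\ell_\alpha\bigr\|
\leq 8\|\ell_\alpha\|\bigl(\|\check\partial_x\ell_\alpha\|+\|\check\partial_x\ell_\alpha^*\|\bigr)\sum_{y\in\chi(\alpha)}\bigl(\|\partial_y f\|+\|\bar\partial_y f\|\bigr).
\end{equation*}
Adding these two estimates yields, for each $\alpha$,
\begin{equation*}
\|\partial_x L^1_\alpha f-L^1_\alpha\partial_x f\|
\leq 4\Bigl(\|\partial_x k_\alpha\|+2\|\ell_\alpha\|\bigl(\|\check\partial_x\ell_\alpha\|+\|\check\partial_x\ell_\alpha^*\|\bigr)\Bigr)\sum_{y\in\chi(\alpha)}\bigl(\|\partial_y f\|+\|\bar\partial_y f\|\bigr).
\end{equation*}

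Summing over $\alpha$ and interchanging the sums over $\alpha$ and $y\in\chi(\alpha)$ (legitimate by Tonelli, the summands being non-negative), one obtains
\begin{equation*}
\|\partial_x\mathcal{L}_1 f-\mathcal{L}_1\partial_x f\|\leq\sum_{y\in\bb Z^d}\theta_{x,y}\bigl(\|\partial_y f\|+\|\bar\partial_y f\|\bigr),
\end{equation*}
which is the first claimed bound. The corresponding bound for $\bar\partial_x$ is identical up to replacing $\partial_x$ by $\bar\partial_x$ and $\check\partial_x$ by $\check{\bar\partial}_x$ on the right-hand sides of the two estimates above, using the second halves of parts (i) and (ii) of Lemma~\ref{lem:iii}, yielding the kernel $\tilde\theta_{x,y}$.

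There is no substantial obstacle here: the content of the lemma is essentially bookkeeping once Lemma~\ref{lem:iii} is in place. The only point that requires care is the passage from the per-$\alpha$ estimate to the summed estimate, i.e.\ ensuring absolute convergence; but this is guaranteed because the double sum $\sum_\alpha\sum_{y\in\chi(\alpha)}$ reindexes as $\sum_y\sum_{\alpha\colon y\in\chi(\alpha)}$ with finite inner sum under \eqref{eqMF}, and $f\in\mathcal{A}^1$ ensures that $\sum_y(\|\partial_y f\|+\|\bar\partial_y f\|)=\vertiii{f}<\infty$.
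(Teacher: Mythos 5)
Your proof is correct and follows the paper's route exactly: the paper disposes of this lemma in one line as ``a direct consequence of Lemma~\ref{lem:iii}'', and your argument is precisely the term-by-term application of parts (i) and (ii) of that lemma to the Hamiltonian and dissipative pieces of $L^1_\alpha$, followed by the reindexing $\sum_\alpha\sum_{y\in\chi(\alpha)}=\sum_y\sum_{\alpha\colon y\in\chi(\alpha)}$ that produces the kernels $\theta_{x,y}$ and $\tilde\theta_{x,y}$. The additional care you take with the convergence of the series is sound and only makes explicit what the paper leaves implicit.
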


\begin{proof}
  The result is a direct consequence of Lemma~\ref{lem:iii}. 
\end{proof}

Recalling \eqref{eqMF}, note that
\begin{equation*}
  M=\sup_{y\in\bb Z^d}\sum_{x\in\bb Z^d}(\theta_{x,y}+\tilde \theta_{x,y}).
\end{equation*}
Lemmata \ref{KUF}, \ref{b-lem24F}, and \ref{b-lem245} provide the ingredients to achieve the proofs of Theorems \ref{thm1F} and \ref{thm2F} by the arguments presented in Section~\ref{s:2-c}.

\subsection{Nearest neighbor interacting fermions with  site dissipation} 
\label{s:xxf}

Consider the informal Hamiltonian given by 
\begin{equation*}
  \mathcal{K} =J\sum_{\{x,y\}\colon |x-y|=1}\big(a_{x}^*a_y+a_y^*a_{x}\big),
\end{equation*}
for some $J\in\bb R$.
Letting $\mathcal{L}_0$ be the Lindblad generator of the Fermi Ornstein-Uhlenbeck QFS introduced in \eqref{b-lho-f-0},  
consider the QFS with informal generator
\begin{equation*}
  \mathcal{L}=\mathcal{L}_0+i[\mathcal{K},\cdot\,].
\end{equation*}
It fits the setting introduced in Section~\ref{sec:ud} with the translation covariant conservative interaction parametrized by $\mathcal{I}=\big\{\{x,y\}\colon x,y\in\bb Z^d, |x-y|=1\big\}\subset\Pfin$ and given by
\begin{equation*}
  k_{\{x,y\}}=J \big(a_{x}^*a_y+a_y^*a_{x}\big),
\end{equation*}
understanding that $\chi\colon\mathcal{I}\to\Pfin$ is the inclusion map.
In particular, the family $\{k_\alpha\}_{\alpha\in\mathcal{I}}$ has range 1.
By direct computation, $\|k_{\{x,y\}}\|=|J|$ and
\begin{equation*}
  \theta_{x,y}=\tilde\theta_{x,y}
  =
  \begin{cases}
    8|J|d & \text{if $x=y$,}\\
    4|J| & \text{if $|x-y|=1$,}\\
    0 & \text{otherwise.}
  \end{cases}
\end{equation*}
Correspondingly, $M=32d|J|$, so that the assumptions of Theorem~\ref{thm2F} are met when $|J|<\varch(h/2)/(16d)$. 

\section*{Declarations}
L.~Bertini has been co-funded by the European Union (ERC CoG KiLiM, project number 101125162).

A.~De Sole is a member of the GNSAGA INdAM group and he has been supported by the national PRIN grant 2022S8SSW2.

Data sharing is not applicable to this article as no new data were created or analyzed in this study.

The authors have no competing interests to declare that are relevant to the content of this article.

\section*{Acknowledgments}
We are grateful to Eric A.~Carlen for a fruitful discussion.
It is a pleasure to thank D.~Trevisan for explaining us the basics of quantum information theory.
We finally thank an anonymous referee for pointing us out some relevant references and helpful comments.

\end{document}